\setlist[enumerate,1]{label={(\roman*)}}
\Crefname{property}{Property}{Properties}
\theoremstyle{plain}
\newtheorem{theorem}{Theorem}[section]
\newtheorem{lemma}[theorem]{Lemma}
\newtheorem{proposition}[theorem]{Proposition}
\newtheorem{corollary}[theorem]{Corollary}
\newtheorem{remark}[theorem]{Remark}
\newtheorem{example}[theorem]{Example}
\theoremstyle{definition}
\newtheorem{definition}[theorem]{Definition}
\newcommand{\setbuild}[2]{\left\{#1\middle|#2\right\}}
\newcommand{\norm}[2][]{\left\|#2\right\|_{#1}}
\newcommand{\ket}[1]{\left|#1\right\rangle}
\newcommand{\bra}[1]{\left\langle #1\right|}
\newcommand{\ketbra}[2]{\left|#1\middle\rangle\!\middle\langle#2\right|}
\newcommand{\vectorstate}[1]{\ketbra{#1}{#1}}
\DeclareMathOperator{\boundeds}{\mathcal{B}}
\DeclareMathOperator{\Hom}{Hom}
\DeclareMathOperator{\Tr}{Tr}
\newcommand{\entropy}{H}
\newcommand{\relativeentropy}[3][]{\mathop{D_{#1}}\mathopen{}\left(#2\middle\|#3\right)\mathclose{}}
\newcommand{\thesandwicheddivergence}[1][]{\tilde{D}_{#1}}
\newcommand{\sandwiched}[3][]{\mathop{\thesandwicheddivergence[#1]}\mathopen{}\left(#2\middle\|#3\right)\mathclose{}}
\newcommand{\themaxrelative}{\tilde{D}_{\infty}}
\newcommand{\maxrelative}[2]{\mathop{\themaxrelative}\mathopen{}\left(#1\middle\|#2\right)\mathclose{}}
\newcommand{\distributions}[1][]{\mathcal{P}_{#1}}
\newcommand{\unittensor}[1]{\langle{#1}\rangle}
\newcommand{\partitions}[1][]{\mathcal{P}_{#1}}
\newcommand{\reals}{\mathbb{R}}
\newcommand{\complexes}{\mathbb{C}}
\newcommand{\naturals}{\mathbb{N}}
\newcommand{\nonnegativereals}{\mathbb{R}_{\ge 0}}
\newcommand{\positivereals}{\mathbb{R}_{>0}}
\newcommand{\preorderle}{\preccurlyeq}
\newcommand{\preorderge}{\succcurlyeq}
\newcommand{\asymptoticge}{\gtrsim}
\DeclareMathOperator{\supp}{supp}
\DeclareMathOperator{\rank}{rk}
\newcommand{\ubar}[1]{\underaccent{\bar}{#1}}
\renewcommand{\complement}[1]{\overline{#1}}
\newcommand{\geometricmean}{\mathbb{G}}
\newcommand{\symmetricpower}[1][]{\textnormal{Sym}^{#1}}
\DeclareMathOperator{\tensorrank}{R}
\DeclareMathOperator{\abstractrank}{R}
\DeclareMathOperator{\abstractsubrank}{Q}
\DeclareMathOperator{\abstractasymptoticrank}{\undertilde{R}}
\DeclareMathOperator{\abstractasymptoticsubrank}{\undertilde{Q}}
\title{Interpolating between R\'enyi entanglement entropies for arbitrary bipartitions via operator geometric means}
\author[1,2]{D\'avid Bug\'ar}
\author[1,2]{P\'eter Vrana}
\affil[1]{Department of Geometry, Institute of Mathematics, Budapest University of Technology and Economics, M\H uegyetem~rkp. 3., H-1111 Budapest, Hungary.}
\affil[2]{MTA-BME Lend\"ulet Quantum Information Theory Research Group, M\H uegyetem~rkp. 3., H-1111 Budapest, Hungary}
\date{\today}
\begin{document}
\maketitle

\begin{abstract}
The asymptotic restriction problem for tensors can be reduced to finding all parameters that are normalized, monotone under restrictions, additive under direct sums and multiplicative under tensor products, the simplest of which are the flattening ranks. Over the complex numbers, a refinement of this problem, originating in the theory of quantum entanglement, is to find the optimal rate of entanglement transformations as a function of the error exponent. This trade-off can also be characterized in terms of the set of normalized, additive, multiplicative functionals that are monotone in a suitable sense, which includes the restriction-monotones as well. For example, the flattening ranks generalize to the (exponentiated) R\'enyi entanglement entropies of order $\alpha\in[0,1]$. More complicated parameters of this type are known, which interpolate between the flattening ranks or R\'enyi entropies for special bipartitions, with one of the parts being a single tensor factor.

We introduce a new construction of subadditive and submultiplicative monotones in terms of a regularized R\'enyi divergence between many copies of the pure state represented by the tensor and a suitable sequence of positive operators. We give explicit families of operators that correspond to the flattening-based functionals, and show that they can be combined in a nontrivial way using weighted operator geometric means. This leads to a new characterization of the previously known additive and multiplicative monotones, and gives new submultiplicative and subadditive monotones that interpolate between the R\'enyi entropies for all bipartitions. We show that for each such monotone there exist pointwise smaller multiplicative and additive ones as well. In addition, we find lower bounds on the new functionals that are superadditive and supermultiplicative.
\end{abstract}

\section{Introduction}\label{sec:introduction}

This work is motivated by the asymptotic restriction problem of tensors, and by the study of multipartite asymptotic entanglement transformations in the strong converse domain, related to the approach of \cite{strassen1986asymptotic,strassen1987relative,strassen1988asymptotic,strassen1991degeneration} and \cite{christandl2021universal,jensen2019asymptotic,vrana2020family}. Both of these problems concern tensors of some fixed order $k$, i.e. elements of a tensor product of $k$ vector spaces $V_1\otimes\dots\otimes V_k$ over a field $\mathbb{F}$, in the second case the vector spaces being finite-dimensional complex Hilbert spaces. The tensor product of two tensors $a\in V_1\otimes\dots\otimes V_k$ and $b\in W_1\otimes\dots\otimes W_k$ is the order-$k$ tensor $a\otimes b\in(V_1\otimes W_1)\otimes\dots\otimes(V_k\otimes W_k)$. The (tensor) direct sum of two tensors $a\in V_1\otimes\dots\otimes V_k$ and $b\in W_1\otimes\dots\otimes W_k$ is the order-$k$ tensor $a\otimes b\in(V_1\oplus W_1)\otimes\dots\otimes(V_k\oplus W_k)$ (by the distributive law, this tensor product contains $(V_1\otimes\dots\otimes V_k)\oplus(W_1\otimes\dots\otimes W_k)$). The unit tensor $\unittensor{r}$ is the element $\sum_{i=1}^r e_i\otimes\dots\otimes e_i\in\mathbb{F}^r\otimes\dots\otimes\mathbb{F}^r$, where $e_1,\dots,e_r$ is the standard basis of $\mathbb{F}^r$.

We say that $a$ restricts to $b$ and write $a\ge b$ (restriction preorder) if there exist linear maps $A_j:V_j\to W_j$ such that $(A_1\otimes\dots\otimes A_k)a=b$. The tensor $a$ asymptotically restricts to $b$ if there is a sequence $(r_n)_{n\in\naturals}$ of natural numbers such that $\sqrt[n]{r_n}\to 1$ and $\unittensor{r_n}\otimes a^{\otimes n}\ge b^{\otimes n}$ for all $n$. A similar but more complicated preorder is defined in the context of the entanglement transformation problem (see \cite{jensen2019asymptotic} for details), which we will not use explicitly. For simplicity, we will write $a\asymptoticge b$ both for the asymptotic restriction and the corresponding notion for entanglement transformations.

A common trait of the two problems is that both asymptotic preorders admit a dual characterization in terms of $\nonnegativereals$-valued parameters of tensors that are $\otimes$-multiplicative, $\oplus$-additive, $\le$-monotone and map $\unittensor{r}$ to the number $r$. The set of all such parameters is the \emph{asymptotic spectrum of tensors} \cite{strassen1988asymptotic} (respectively, the \emph{asymptotic spectrum of LOCC transformations} \cite{jensen2019asymptotic}). More precisely, $a\asymptoticge b$ iff $f(a)\ge f(b)$ for every element $f$ in the asymptotic spectrum, highlighting the importance to understand these objects.

The set $[k]$ can be partitioned into an unordered pair of two nonempty subsets in $2^{k-1}-1$ ways. For each such bipartiton, we can view an order-$k$ tensor as an order-$2$ tensor (``flattening''), and consider its rank (matrix rank). These parameters provide points in the asymptotic spectrum of tensors, called the \emph{gauge points} (\emph{Eichpunkte}) \cite[Equation (3.10)]{strassen1988asymptotic}. In the case of entanglement transformations, in addition to the matrix rank, sums of powers of the singular values of each of these matrices are elements of the asymptotic spectrum of LOCC transformations, with exponents $2\alpha\in[0,2]$. In quantum information terms, these are exponentiated R\'enyi entropies of entanglement \cite{vidal2000entanglement} across each bipartition.

For $k=3$, the asymptotic spectrum of the subsemiring generated by the matrix multiplication tensors is log-star-convex with respect to the gauge points and it is conjectured to be log-convex \cite{strassen1988asymptotic}. For $k\ge 4$, Wigderson and Zuiddam \cite{wigderson2021asymptotic} extended this to tensor networks on a fixed graph. These results and the known elements of the asymptotic spectrum of (subsemirings of) tensors point to the conjecture that the gauge points are the vertices of a convexly-parametrized family of spectral points. Likewise, the exponentiated R\'enyi entropies of entanglement across all the different bipartitions may be part of a single convexly-parametrized family. In this paper, we make progress in this problem by constructing a family of subadditive and submultiplicative monotone quantities that contains every known element of the asymptotic spectrum of LOCC transformations and is closed under a suitably-defined version of convex combinations when the order parameter $\alpha$ is at least $1/2$. We also show that any subadditive and submultiplicative monotone functionals gives rise to a closely related functional with the same properties that in addition is equal to the maximum of a suitable set of spectrum elements, and these sets are disjoint for different values of the convex parameters.

\subsection{Overview of results}

The key objects in this paper are families of observables $A=(A_{\mathcal{H},n})_{\mathcal{H},n}$ where $A^\alpha_{\mathcal{H},n}\in\boundeds(\symmetricpower[n](\mathcal{H}))$, indexed by $k$-partite Hilbert spaces $\mathcal{H}=\mathcal{H}_1\otimes\dots\otimes\mathcal{H}_k$ and natural numbers $n$ subject to certain conditions relating the observables for different $\mathcal{H}$ and $n$ (see \cref{it:observableisometry,it:observablebound,it:observablesupermultiplicative,it:observabletensorproduct,it:observabledirectsum} in \cref{sec:upper}).
\begin{itemize}
\item We prove that, given a functional $g$ on a semiring with a Strassen preorder that is subadditive, submultiplicative, normalized, and monotone, there is a unique maximal functional $\tilde{g}\le g$, that in addition to these properties is multiplicative and additive on any subsemiring generated by one element, and that $\tilde{g}$ is equal to the maximum of all elements of the asymptotic spectrum that are less than $g$. In particular, the set of spectrum elements less than $g$ is nonempty (\cref{thm:reguppermaximum}).
\item We prove that, for every suitable family $A$ and $\alpha\in[0,1)$, the quantity $F^{\alpha,A}(\psi)=2^{(1-\alpha)E^{\alpha,A}(\psi)}$ where
\begin{equation}
E^{\alpha,A}(\psi)=\lim_{n\to\infty}\frac{1}{n}\frac{\alpha}{1-\alpha}\log\bra{\psi}^{\otimes n}A^{\frac{1-\alpha}{\alpha}}_{\mathcal{H},n}\ket{\psi}^{\otimes n}=-\lim_{n\to\infty}\frac{1}{n}\sandwiched[\alpha]{\vectorstate{\psi}^{\otimes n}}{A_{\mathcal{H},n}}
\end{equation}
is a subadditive and submultiplicative monotone, where $\thesandwicheddivergence[\alpha]$ is the minimal or sandwiched R\'enyi divergence \cite{wilde2014strong,muller2013quantum,tomamichel2015quantum} (\cref{thm:upper}).
\item Based on the decomposition of tensor power spaces and the Schur--Weyl duality, we construct a family of observables for $k=2$, which satisfies these conditions (\cref{prop:bipartiteobservables}). The corresponding functionals are in fact spectral points, and recover the entire asymptotic spectrum of LOCC transformations in the bipartite case, the exponentiated R\'enyi entropies of entanglement of orders $\alpha\in[0,1]$. (\cref{lem:bipartiteasymptotic})
\item We find that any finite collection of such families of observables can be combined using multivariate weighted operator geometric means to obtain new families, an operation in the spirit of various log-convexity conjectures. (\cref{prop:geometricobservables})
\item By grouping the tensor factors, the family of observables in the bipartite case generates $2^{k-1}-1$ families for general $k$. Geometric means of these give rise to subadditive and submultiplicative functionals that interpolate between the bipartite exponentiated R\'enyi relative entropies for every possible bipartition. By taking the geometric means of only a subset of these (those corresponding to bipartitions of the form $\{\{j\},[k]\setminus\{j\}\}$), we find a new expression for the previously known elements of the asymptotic spectrum of tensors and of LOCC transformations. (\cref{prop:extension})
\item When $\alpha\in[1/2,1)$, we find upper and lower bounds on the monotones that coincide when every flattening is proportional to a partial isometry (i.e. every marginal is proportional to a projection). The lower bound is a lower functional, giving us additive and multiplicative monotones on the subsemiring generated by such elements. (\cref{prop:lowerfunctionallowerbound,cor:specialsubsemiringmonotone})
\end{itemize}

The paper is structured as follows. In \cref{sec:preliminaries} we collect relevant facts on R\'enyi entropies and divergences, preordered semirings and asymptotic spectra, the representation theory of symmetric and unitary groups, and on multivariate operator geometric means. In \cref{sec:abstractfunctionals} we study monotone functionals on preordered semirings that are subadditive and submultiplicative, which puts our main findings in a general context. In \cref{sec:upper} we construct and study the subadditive and submultiplicative functionals obtained from families of observables. In \cref{sec:lower} we construct bounds on those functionals and show their consequences related to the asymptotic spectra.

\subsection{Relation to previous work}

It is known that there exist additive and multiplicative monotone functionals interpolating between special gauge points corresponding to bipartitions of the form $\{\{j\},[k]\setminus\{j\}\}$. We provide a brief review of the known functionals to provide a more complete context for the present work.
\paragraph{Support functionals.}

In \cite{strassen1991degeneration} Strassen defined the support functionals as follows. Let $C=((v_{1,1},\dots,v_{1,d_1}),\dots,(v_{k,1},\dots,v_{k,d_k}))$, where $(v_{j,1},\dots,v_{j,d_j})$ is a basis for $V_j$ for each $j\in[k]$ ($d_j=\dim V_j$). Every tensor $f\in V_1\otimes\dots\otimes V_k$ can be uniquely written as
\begin{equation}
\sum_{i_1=1}^{d_1}\dots\sum_{i_k=1}^{d_k}f_{i_1\dots i_k}v_{1,i_1}\otimes\dots\otimes v_{k,i_k},
\end{equation}
where $f_{i_1\dots i_k}\in\mathbb{F}$ for all $i_1,\dots,i_k$. The \emph{support of $f$ with respect to $C$} is
\begin{equation}
\supp_C(f)=\setbuild{(i_1,\dots,i_k)\in[d_1]\times\dots\times[d_k]}{f_{i_1\dots i_k}\neq 0}.
\end{equation}
The set of $k$-tuples of bases will be denoted by $\mathcal{C}(f)$ (where the dependence on $f$ is through the vector spaces $V_1,\dots,V_k$).

If $\theta\in\distributions([k])$ and $\Phi\subseteq I_1\times\dots\times I_k$, where $I_1,\dots,I_k$ are finite sets, then we define
\begin{equation}\label{eq:entropytheta}
\entropy_\theta(\Phi)=\max_{P\in\distributions(\Phi)}\sum_{j=1}^k\theta(j)\entropy(P_j),
\end{equation}
where $P_j$ is the $j$th marginal of $P$ and $\entropy$ is the Shannon entropy.

The \emph{maximal points} of $\Phi\subseteq I_1\times\dots\times I_k$ are defined with respect to the product partial order: $i=(i_1,\dots,i_k)\le i'=(i'_1,\dots,i'_k)$ iff $i_j\le i'_j$ for all $j\in[k]$. Explicitly,
\begin{equation}
\max\Phi=\setbuild{i\in\Phi}{\forall i'\in\Phi:i\le i'\implies i=i'}.
\end{equation}

For convex weights $\theta\in\distributions([k])$, the \emph{upper support functional} is
\begin{align}
\zeta^\theta(f) & = \begin{cases}
2^{\rho^\theta(f)} & \text{if $f\neq 0$}  \\
0 & \text{if $f=0$}
\end{cases}  \\
\rho^\theta(f) & = \min_{C\in\mathcal{C}(f)}\entropy_\theta(\supp_C(f)),
\end{align}
and the \emph{lower support functional} is
\begin{align}
\zeta_\theta(f) & = \begin{cases}
2^{\rho_\theta(f)} & \text{if $f\neq 0$}  \\
0 & \text{if $f=0$}
\end{cases}  \\
\rho_\theta(f) & = \max_{C\in\mathcal{C}(f)}\entropy_\theta(\max\supp_C(f)).
\end{align}

The upper support functional is monotone, normalized, additive, and submultiplicative \cite[Theorem 2.8]{strassen1991degeneration}, while the lower support functional is monotone, normalized, superadditive, and supermultiplicative \cite[Theorem 3.5]{strassen1991degeneration}, and $\zeta_\theta(f)\le\zeta^\theta(f)$ \cite[Corollary 4.3]{strassen1991degeneration}. On the subsemiring of \emph{oblique} (\emph{schr\"ag}) tensors $\zeta_\theta=\zeta^\theta$, therefore it is multiplicative and additive.

When the parameter $\theta$ is extremal, both support functionals reduce to one of the $k$ gauge points (flattening ranks) that we obtain if we group the tensor factors as $V_j\otimes\left(\bigotimes_{j'\neq j}V_{j'}\right)$. It does not seem to be possible to extend the definitions of the support functionals to include other gauge points. In particular, including the entropies of more general marginals $P_S$ for $S\subseteq[k]$ in \eqref{eq:entropytheta} does not lead to such an extension, because the Shannon entropy is monotone in the sense that $S\subseteq T$ implies $\entropy(P_S)\le\entropy(P_T)$, but the $S$-flattening rank may be greater than the $T$-flattening rank.

\paragraph{Quantum functionals.}
Over the field $\mathbb{C}$, the quantum functionals are points in the asymptotic spectrum of all tensors that agree with the support functionals on free tensors \cite{christandl2021universal}. Let $B_k$ denote the set of \emph{bipartitions} of $[k]$, i.e. unordered pairs $b=\{S,[k]\setminus S\}$ where $S\subseteq[k]$, $S\neq\emptyset$ and $S\neq[k]$. The bipartitions $\{S,[k]\setminus S\}$ and $\{T,[k]\setminus T\}$ are \emph{noncrossing} if $S\subseteq T$ or $T\subseteq S$ or $S\cap T=\emptyset$.

To each $n$ and partition $\lambda$ of $n$ we associate the projection
\begin{equation}
P^{V_b}_\lambda=(P^{V_S}_\lambda\otimes I_{[k]\setminus S})P^{V_{[k]}}_{(n)},
\end{equation}
where e.g. $V_S=\bigotimes_{j\in S}V_j$ and $P^W_\lambda$ is the isotypic projection on $W^{\otimes n}$ corresponding to the partition $\lambda$ (see \cref{sec:SchurWeyl} for details). If $b_1$ and $b_2$ are noncrossing bipartitions then $P^{V_{b_1}}_{\lambda_1}$ and $P^{V_{b_2}}_{\lambda_2}$ commute.

Let $\theta\in\distributions(B_k)$ such that any pair of bipartitions in its support are noncrossing. The \emph{upper quantum functional} is $F^\theta(t)=2^{E^\theta(t)}$, where
\begin{equation}
E^\theta(t)=\sup\setbuild{\sum_{b\in\supp\theta}\theta(b)\entropy(\lambda^{(b)}/n)}{n\in\naturals,\forall b\in\supp\theta:\lambda^{(b)}\vdash n,\prod_{b\in\supp\theta}P^{V_b}_{\lambda^{(b)}}t^{\otimes n}\neq 0}.
\end{equation}
Let $\theta\in\distributions(B_k)$ be arbitrary. The \emph{lower quantum functional} is $F_\theta(t)=2^{E_\theta(t)}$, where
\begin{equation}
E_\theta(t)=\sup_{\psi=(A_1\otimes\dots\otimes A_k)t}\sum_{b=\{S,[k]\setminus S\}\in\supp\theta}\theta(b)\entropy\left(\frac{\Tr_S\vectorstate{\psi}}{\norm{\psi}^2}\right),
\end{equation}
the supremum being over linear maps $A_1,\dots,A_k$ from $V_j$ to Hilbert-spaces $\mathcal{H}_j$ (which can be assumed to be of dimension $\dim V_j$).

The upper quantum functional is monotone, normalized, subadditive, and submultiplicative, while the lower quantum functional is monotone, normalized, superadditive, and supermultiplicative, and $F_\theta(f)\le F^\theta(f)$. If $
\theta$ is supported on bipartitions of the form $\{\{j\},[k]\setminus\{j\}\}$, then the upper and the lower functionals (with the same parameter $\theta$) coincide, therefore they are multiplicative and additive.

For general $\theta$ with pairwise noncrossing support, it is not known if the upper and lower quantum functionals are equal. A major obstacle to extending the definition of the upper quantum functionals to arbitrary $\theta$ is that the projections corresponding to crossing bipartitions do not commute. It is an open problem to find an extension to general $\theta$ that is subadditive and submultiplicative. In \cite{christandl2020weighted} Christandl, Lysikov and Zuiddam gave a rank-type characterization of the quantum functionals using a weighted version of slice rank, but it is unclear whether this approach can lead to a generalization to all bipartitions.

\paragraph{LOCC functionals.}
In \cite{chitambar2008tripartite} Chitambar, Duan, and Shi noted that asymptotic tensor restrictions over the complex numbers can be viewed as asymptotic entanglement transformations of pure multipartite entangled states via stochastic local operations and classical communication \cite{bennett2000exact}. The word ``stochastic'' means that the transformation protocol is required to succeed only with an arbitrarily low but nonzero probability, and we put no restriction on the probability as a function of the number of copies. As a refinement of this problem, one can consider the trade-off between the asymptotic behaviour of the probability and the transformation rate. In the special case of bipartite entanglement concentration, this relation has been determined in \cite{hayashi2002error} in various regimes of exact and approximate transformations.

We focus on the limit when the probability approaches zero exponentially (but the transformation is exact for any number of copies). The trade-off between the error exponent and the transformation rate can be characterized for any number of subsystems in terms of functionals on vectors of tensor product Hilbert spaces (thought of as unnormalized state vectors) that are normalized to $1$ on separable states, additive under the direct sum,  multiplicative under the tensor product, and monotone under trace-nonincreasing LOCC transformations \cite{jensen2019asymptotic}. The collection of these functionals is the \emph{asymptotic spectrum of LOCC transformations}. Any such functional has an exponent $\alpha$ that characterizes its scaling as $f(\sqrt{p}\psi)=p^\alpha f(\psi)$, and monotonicity is equivalent to the inequality $f(\psi)^{1/\alpha}\ge f(\Pi\psi)^{1/\alpha}+f((I-\Pi)\psi)^{1/\alpha}$ for all vectors $\psi$ and local orthogonal projections $\Pi$. In the $\alpha\to 0$ limit the condition is that $f$ is monotone under tensor restriction (in particular, the inner product plays no role for these monotones), therefore the asymptotic spectrum of tensors is a subset of the asymptotic spectrum of LOCC transformations.

The known elements of the asymptotic spectrum of LOCC transformations can be constructed as follows \cite{vrana2020family}. Given a vector $\psi\in\mathcal{H}_1\otimes\dots\otimes\mathcal{H}_k$ and decreasingly-ordered nonnegative vectors $\overline{\lambda_1},\dots,\overline{\lambda_k}$ with $\norm[1]{\overline{\lambda_j}}=1$, we define
\begin{equation}\label{eq:ratefunction}
I_\psi(\overline{\lambda_1},\dots,\overline{\lambda_k})=\lim_{\epsilon\to 0}\lim_{n\to\infty}-\frac{1}{n}\log\sum_{\substack{\lambda_1,\dots,\lambda_k\vdash n  \\  \forall j:\norm[1]{\frac{\lambda_j}{n}-\overline{\lambda_j}}\le\epsilon}}\norm{(P^{\mathcal{H}_1}_{\lambda_1}\otimes\dots\otimes P^{\mathcal{H}_k}_{\lambda_k})\psi^{\otimes n}}^2.
\end{equation}
Let $\alpha\in(0,1)$ and $\theta\in\distributions([k])$. The upper functional is defined for $\psi\neq 0$ as 
\begin{align}
F^{\alpha,\theta}(\psi) & =2^{(1-\alpha)E^{\alpha,\theta}(\psi)}  \label{eq:upperLOCC}
\intertext{where}
E^{\alpha,\theta}(\psi) & =\sup_{\overline{\lambda_1},\dots,\overline{\lambda_k}}\left[\sum_{j=1}^k\theta(j)\entropy(\overline{\lambda_j})-\frac{\alpha}{1-\alpha}I_\psi(\overline{\lambda_1},\dots,\overline{\lambda_k})\right].  \label{eq:logupperLOCC}
\end{align}
The lower functional in \cite{vrana2020family} is defined for $\psi\neq 0$ as 
\begin{align}
F_{\alpha,\theta}(\psi) & = 2^{(1-\alpha)E_{\alpha,\theta}(\psi)},  \label{eq:lowerLOCC}  \\
E_{\alpha,\theta}(\psi) & = \sup_{\varphi=(A_1\otimes\dots\otimes A_k)\psi}\sum_{j=1}^k\theta(j)\entropy\left(\frac{\Tr_j\vectorstate{\varphi}}{\norm{\varphi}^2}\right)+\frac{\alpha}{1-\alpha}\log\norm{\varphi}^2,
\end{align}
where the supremum is over contractions $A_j:\mathcal{H}_j\to\mathcal{H}_j$.

$F^{\alpha,\theta}$ is equal to the regularization of $F_{\alpha,\theta}$, and is an element of the asymptotic spectrum of LOCC transformations that scales as $F^{\alpha,\theta}(\sqrt{p}\psi)=p^\alpha F^{\alpha,\theta}(\psi)$. For constant $\theta$ and varying $\alpha\in(0,1)$, the functionals can be seen as a family of multipartite R\'enyi entanglement measures (monotones) interpolating between the quantum functionals ($\alpha\to 0$) and the $\theta$-weighted arithmetic mean of von~Neumann entanglement entropies for the bipartitions $\{\{j\},[k]\setminus\{j\}\}$.

It is not known whether the definition can be extended to distributions $\theta$ on the set $B_k$ of all bipartitions. As in the case of the quantum functionals, the main difficulty is that the isotypic projections corresponding to crossing bipartitions do not commute with each other, therefore \eqref{eq:ratefunction} does not generalize in any obvious way.

\section{Preliminaries}\label{sec:preliminaries}

\subsection{R\'enyi entropies and divergences}

For positive functions $P,Q:\mathcal{X}\to\nonnegativereals$ on some finite set $\mathcal{X}$ and $\alpha\in(0,1)\cup(1,\infty)$, the \emph{R\'enyi divergence} of order $\alpha$ is \cite{renyi1961measures}
\begin{equation}
\relativeentropy[\alpha]{P}{Q}=\frac{1}{\alpha-1}\log\sum_{x\in\mathcal{X}}P(x)^\alpha Q(x)^{1-\alpha}.
\end{equation}
We note that in some applications it is convenient to include an additional term proportional to the logarithm of the norm $\norm[1]{P}=\sum_{x\in\mathcal{X}}P(x)$. We do not follow this convention and use instead this form even if $P$ is not a probability distribution. If $\alpha>1$ and $\supp P\not\subseteq\supp Q$, then $\relativeentropy[\alpha]{P}{Q}=\infty$. When $\alpha\in(0,1)$, the sum is always finite but may be $0$; in this case the logarithm is defined to be $-\infty$ and divergence is $\infty$.

The \emph{R\'enyi entropy} of order $\alpha$ is related to the R\'enyi divergence as
\begin{equation}\label{eq:entropyfromdivergence}
\entropy_\alpha(P)
 = \frac{1}{1-\alpha}\log\sum_{x\in\mathcal{X}}P(x)^\alpha
 = -\relativeentropy[\alpha]{P}{I},
\end{equation}
where $I$ denotes the constant $1$ function.

When $\norm[1]{P}=1$, the $\alpha\to 1$ limits exist and are equal to the \emph{relative entropy} (or Kullback--Leibler divergence)
\begin{align}
\relativeentropy{P}{Q} & = \sum_{x\in\mathcal{X}}P(x)\log\frac{P(x)}{Q(x)}
\intertext{and the \emph{Shannon entropy}}
\entropy(P) & = -\sum_{x\in\mathcal{X}}P(x)\log P(x),
\end{align}
respectively, while the $\alpha\to\infty$ limits are the \emph{max-divergence}
\begin{align}
\maxrelative{P}{Q} & = \log\max_{x\in\mathcal{X}}\frac{P(x)}{Q(x)}
\intertext{and the \emph{min-entropy}}
\entropy_\infty(P) & = -\log\max_{x\in\mathcal{X}}P(x).
\end{align}

The R\'enyi entropies admit a variational characterization in terms of the relative entropy and the Shannon entropy \cite{arikan1996inequality,merhav1999shannon,shayevitz2011renyi}. For $\alpha\in(0,1)\cup(1,\infty]$ and $P\in\distributions(\mathcal{X})$ we have
\begin{equation}\label{eq:variationalRenyientropy}
\entropy_\alpha(P)=\frac{\alpha}{\alpha-1}\min_{Q\in\distributions(\mathcal{X})}\left[\frac{\alpha-1}{\alpha}\entropy(Q)+\relativeentropy{Q}{P}\right].
\end{equation}

We will make use of generalizations of the R\'enyi divergences to pairs of positive operators. A function $P:\mathcal{X}\to\nonnegativereals$ can be identified with a positive semidefinite operator on $\complexes^{\mathcal{X}}$ that is diagonal in the standard basis, with diagonal entries $P(x)$, i.e. $\sum_{x\in\mathcal{X}}P(x)\ketbra{x}{x}$. A quantum R\'enyi divergence of order $\alpha$ is a functional on pairs of positive operators that is invariant under isometries and reduces to $\relativeentropy[\alpha]{P}{Q}$ when its arguments are the operators corresponding to $P$ and $Q$.

The axioms in \cite{renyi1961measures} that characterize R\'enyi divergences can be adapted to quantum generalizations as well, but do not single out a unique extension \cite[Chapter 4]{tomamichel2015quantum}. There are several families of quantum R\'enyi divergences that are known to have useful properties (at least in some parameter range) or operational significance, such as the Petz \cite{petz1986quasi}, sandwiched \cite{wilde2014strong,muller2013quantum}, maximal \cite{matsumoto2018new} and $\alpha$-$z$ divergences \cite{jaksic2011entropic,audenaert2015alpha}. However, the limit $\alpha\to\infty$ is special: there is only one extension of the R\'enyi divergence of order $\infty$ that satisfies the data processing inequality, the \emph{max-divergence}
\begin{equation}
\begin{split}
\maxrelative{\rho}{\sigma}&=\begin{cases}
\log\norm[\infty]{\sigma^{-1/2}\rho\sigma^{-1/2}} & \text{if $\rho^0\le\sigma^0$}  \\
\infty & \text{otherwise.}
\end{cases}\\
 &= \inf\setbuild{\lambda\in\reals}{\rho\le 2^\lambda \sigma},
 \end{split}
\end{equation}

Equation \eqref{eq:entropyfromdivergence} can be unambiguously extended to positive operators instead of $P$ as
\begin{equation}
\entropy_\alpha(\rho)
 = \frac{1}{1-\alpha}\log\Tr\rho^\alpha
 = -\relativeentropy[\alpha]{\rho}{I},
\end{equation}
while the $\alpha\to\infty$ limit is the min-entropy
\begin{equation}
\entropy_\infty(\rho)=-\log\norm[\infty]{\rho}.
\end{equation}

There are several quantum extensions of classical R\'enyi divergences, but for our study we mention the \emph{sandwiched R\'enyi divergence} \cite{muller2013quantum,wilde2014strong}
\begin{equation}\label{eq:sandwicheddef}
    \sandwiched[\alpha]{\rho}{\sigma}=\frac{1}{\alpha-1}\log\text{Tr}\big(\rho^{\frac{1}{2}}\sigma^{\frac{1-\alpha}{\alpha}}\rho^{\frac{1}{2}} \big)^\alpha.
\end{equation}
In our application the first of the two arguments will always be a rank-$1$ operator, in which case the expression simplifies as
\begin{equation}\label{eq:reformulatedivergencetfamily}
    \sandwiched[\alpha]{\vectorstate{\psi}}{\sigma}= \frac{\alpha}{\alpha-1}\log\bra{\psi}\sigma^{\frac{1-\alpha}{\alpha}}\ket{\psi}.
\end{equation}

\subsection{Preordered semirings and the asymptotic spectrum of tensors}

The study of asymptotic preorders on preordered semirings originates in the work of Strassen on the asymptotic restriction problem \cite{strassen1988asymptotic} (see also the recent expositions in \cite{zuiddam2018algebraic,wigderson2021asymptotic}), which in turn provides a framework for a range of techniques developed in the study of the computational complexity of matrix multiplication.

A \emph{preordered semiring} is a set $S$ equipped with commutative and associative binary operations $+$ and $\cdot$ and a preorder $\preorderle$ (reflexive and transitive relation) such that $\cdot$ distributes over $+$, there exist neutral elements $0$ and $1$ for $+$ and $\cdot$, and the preorder is compatible with the operations in the sense that if $x,y,z\in S$ are such that $x\preorderge y$, then $x+z\preorderge y+z$ and $xz\preorderge yz$ hold as well. Every natural number $n$ determines an element $1+1+\dots+1\in S$ (with $n$ terms), which gives a semiring homomorphism $\naturals\to S$. We say that $\preorderle$ is a \emph{Strassen preorder} if this homomorphism is an order-embedding and, identifying its image in $S$ with $\naturals$, for every $x\in S\setminus\{0\}$ there is an $r\in\naturals$ such that $x\preorderle r$ and $rx\preorderge 1$ \cite{strassen1988asymptotic,zuiddam2018algebraic}.

The \emph{asymptotic preorder} is defined as $x\asymptoticge y$ if there exists a sequence of natural numbers $(r_n)_{n\in\naturals}$ such that $\lim_{n\to\infty}\sqrt[n]{r_n}=1$ and $r_nx^n\preorderge y^n$ for all $n$. The \emph{asymptotic spectrum} $\Delta(S,\preorderle)$ is the set of monotone semiring homomorphisms $S\to\nonnegativereals$ (\emph{spectral points}). The asymptotic preorder can be characterized in terms of the asymptotic spectrum as $x\asymptoticge y\iff\forall f\in\Delta(S,\preorderle):f(x)\ge f(y)$ \cite{strassen1988asymptotic}.

For example, equivalence classes of order-$k$ tensors over a fixed ground field $\mathbb{F}$ form a preordered semiring with operations induced by the (tensor) direct sum and the tensor product, and preorder given either by restriction or degeneration \cite{strassen1988asymptotic}. Both are Strassen preorders, and the asymptotic spectrum provides a characterization of the asymptotic restriction preorder, which includes the asymptotic rank and asymptotic subrank as special cases. Other examples include the semiring of isomorphism classes of nilpotent representations of a given finite quiver \cite{strassen2000asymptotic} and the semiring of finite simple undirected graphs \cite{zuiddam2019asymptotic}. The first of these can be used to characterize asymptotic degeneration of nilpotent representations, while the latter leads to a characterization of the Shannon capacity of graphs.

We describe the semiring of tensors in more detail. Let $k\in\naturals$, $k\ge 2$ and fix a ground field $\mathbb{F}$. By an order-$k$ tensor over $\mathbb{F}$ we mean an element of a vector space of the form $V_1\otimes\dots\otimes V_k$ where $V_j$ are finite-dimensional vector spaces over $\mathbb{F}$. The \emph{direct sum} and \emph{tensor product} of two tensors $a\in V_1\otimes\dots\otimes V_k$ and $b\in W_1\otimes\dots\otimes W_k$ are the elements $a\oplus b\in V_1\otimes\dots\otimes V_k\oplus W_1\otimes\dots\otimes W_k\subseteq(V_1\oplus W_1)\otimes\dots\otimes(V_k\oplus W_k)$ and $a\otimes b\in(V_1\otimes W_1)\otimes\dots\otimes(V_k\otimes W_k)$, considered as order-$k$ tensors as indicated. We say that $b$ is a \emph{restriction} of $a$ and write $a\ge b$ if there exist linear maps $A_j:V_j\to W_j$ such that $(A_1\otimes\dots\otimes A_k)a=b$. $a$ and $b$ are \emph{isomorphic} if $(A_1\otimes\dots\otimes A_k)a=b$ for some invertible linear maps $A_1,\dots,A_k$, and \emph{equivalent} if there exist zero tensors $a_0,b_0$ (in suitable spaces) such that $a\oplus a_0$ and $b\oplus b_0$ are isomorphic. The set of equivalence classes of tensors with operations induced by $\oplus$ and $\otimes$ and the partial order induced by $\le$ is a preordered semiring. Its asymptotic spectrum is the \emph{asymptotic spectrum of tensors}. Concretely, it is the set of parameters of tensors that are normalized on the unit tensors, additive under the direct sum, multiplicative under the tensor product, and monotone under restrictions (equivalently: degenerations).

In the refinement relevant for the study of entanglement transformations, the tensors are elements of tensor products of finite-dimensional complex Hilbert spaces, the operations are the same, while the preorder encodes the possibility of transformations via \emph{local operations and classical communication} \cite{jensen2019asymptotic}. The resulting \emph{asymptotic spectrum of LOCC transformations} is the set of functionals $f$ that are normalized on unit tensors, additive under the (orthogonal) direct sum, multiplicative under the tensor product, and such that for some $\alpha\in[0,1]$ they satisfy $f(\sqrt{p}\psi)=p^\alpha f(\psi)$ and $f(\psi)^{1/\alpha}\ge f(\Pi\psi)^{1/\alpha}+f((I-\Pi)\psi)^{1/\alpha}$ for all $p\ge 0$ and local orthogonal projection $\Pi\in\boundeds(\mathcal{H}_j)\subseteq\boundeds(\mathcal{H}_1\otimes\dots\otimes\mathcal{H}_k)$ (the inclusion is via taking the tensor product with the identity).

\subsection{Schur--Weyl decompositions}\label{sec:SchurWeyl}

Irreducible representations of the symmetric groups and irreducible polynomial representations of the general linear or unitary groups are labelled by integer partitions.
A \emph{partition} of $n\in\naturals$ is a nonincreasing sequence of nonnegative integers $\lambda=(\lambda_1,\dots,\lambda_d)$ such that $\lambda_1+\dots+\lambda_d=n$. We implicitly add or remove trailing zeroes whenever convenient, in particular to obtain vectors with the same number of components. We write $\lambda\vdash n$ if $\lambda$ is a partition of $n$. The number of nonzero entries in $\lambda$ is its \emph{length} $l(\lambda)$.

We denote the irreducible representation of $S_n$ corresponding to the partition $\lambda$ by $[\lambda]$, and the irreducible representation of $U(\mathcal{H})$ corresponding to $\lambda$ by $\mathbb{S}_\lambda(\mathcal{H})$ (if $l(\lambda)>\dim\mathcal{H}$ then we set $\mathbb{S}_\lambda(\mathcal{H})=0$). The tensor powers of the standard representation of $U(\mathcal{H})$ decompose as
\begin{equation}
\mathcal{H}^{\otimes n}\simeq\bigoplus_{\lambda\vdash n}\mathbb{S}_\lambda(\mathcal{H})\otimes[\lambda]
\end{equation}
into a direct sum of irreducible representations of $U(\mathcal{H})\times S_n$ (where the symmetric group permutes the tensor factors). The orthogonal projection onto the direct summand labelled by $\lambda$ will be denoted by $P^{\mathcal{H}}_\lambda$. These are also the isotypic projections of both the $S_n$-representation and the $U(\mathcal{H})$-representation (Schur--Weyl duality). A special case is $P^{\mathcal{H}}_{(n)}$, the projection onto the symmetric subspace $\symmetricpower[n](\mathcal{H})$.

Let $0\le\rho\in\boundeds(\mathcal{H})$, and let the eigenvalues of $\rho$ be $r=(r_1,\ldots,r_d)$ with multiplicities and in decreasing order ($d=\dim\mathcal{H}$). The quantitites $\Tr\rho^{\otimes n}P^{\mathcal{H}}_\lambda$ satisfy the estimates \cite{harrow2005applications}
\begin{equation}\label{eq:quantumtypeestimate}
(n+d)^{-d(d+1)/2}2^{-n\relativeentropy{\frac{\lambda}{n}}{r}}\le\Tr\rho^{\otimes n}P^{\mathcal{H}}_\lambda\le(n+1)^{d(d-1)/2}2^{-n\relativeentropy{\frac{\lambda}{n}}{r}}.
\end{equation}

The \emph{Kronecker coefficients} are defined as
\begin{equation}
g_{\lambda\mu\nu}=\dim\Hom([\lambda],[\mu]\otimes[\nu]),
\end{equation}
and are symmetric in the three arguments. The special cases $g_{(n)\mu\nu}=0$ if $\mu\neq\nu$ and $g_{(n)\mu\mu}=1$ are consequences of Schur's lemma. The Kronecker coefficients satisfy the entropic vanishing condition \cite{christandl2006spectra}
\begin{equation}
g_{\lambda\mu\nu}\neq 0\implies \entropy(\lambda/n)\le\entropy(\mu/n)+\entropy(\nu/n).
\end{equation}

When $\mu\vdash m$, $\nu\vdash n$ and $\lambda\vdash m+n$, the \emph{Littlewood--Richardson coefficient} $c^\lambda_{\mu\nu}$ gives the multiplicity of the irreducible representation $[\mu]\otimes[\nu]$ of the group $S_m\times S_n$ in the restriction of $[\lambda]$ from $S_{m+n}$ to the Young subgroup $S_m\times S_n$. These are also subject to an entropic vanishing condition \cite{christandl2021universal}:
\begin{multline}
c^\lambda_{\mu\nu}\neq 0\implies \frac{m}{m+n}\entropy(\mu/m)+\frac{n}{m+n}\entropy(\nu/n)\le\entropy(\lambda/(m+n))  \\  \le\frac{m}{m+n}\entropy(\mu/m)+\frac{n}{m+n}\entropy(\nu/n)+h\left(\frac{m}{m+n}\right),
\end{multline}
where $h(q)=-q\log q-(1-q)\log(1-q)$ is the binary entropy.

On $\mathcal{H}^{\otimes m}\otimes\mathcal{H}^{\otimes n}\simeq\mathcal{H}^{\otimes(m+n)}$ we may consider the representations of $S_m$ and $S_n$ (left hand side) and of $S_{m+n}$ (right hand side). For all $\mu\vdash m$, $\nu\vdash n$ and $\lambda\vdash m+n$, the isotypic projections $P^{\mathcal{H}}_\mu\otimes I^{\otimes n}$, $I^{\otimes m}\otimes P^{\mathcal{H}}_\nu$ and $P^{\mathcal{H}}_\lambda$ commute with each other and satisfy
\begin{equation}
\left(P^{\mathcal{H}}_\mu\otimes P^{\mathcal{H}}_\nu\right)P^{\mathcal{H}}_\lambda\neq 0\implies c^{\lambda}_{\mu\nu}\neq 0.
\end{equation}

If $\mathcal{H}$ and $\mathcal{K}$ are two Hilbert spaces then $\mathcal{H}^{\otimes n}\otimes\mathcal{K}^{\otimes n}\simeq(\mathcal{H}\otimes\mathcal{K})^{\otimes n}$ may be decomposed with respect to $U(\mathcal{H})$, $U(\mathcal{K})$ and $U(\mathcal{H}\otimes\mathcal{K})$. For $\lambda,\mu,\nu\vdash n$ the isotypic projections $P^{\mathcal{H}\otimes\mathcal{K}}_\lambda$, $P^{\mathcal{H}}_\mu\otimes I_{\mathcal{K}}^{\otimes n}$ and $I_{\mathcal{H}}^{\otimes n}\otimes P^{\mathcal{K}}_\nu$ commute with each other and satisfy
\begin{equation}
\left(P^{\mathcal{H}}_\mu\otimes P^{\mathcal{K}}_\nu\right)P^{\mathcal{H}\otimes\mathcal{K}}_\lambda\neq 0\implies g_{\lambda\mu\nu}\neq 0.
\end{equation}
In particular,
\begin{equation}
\left(P^{\mathcal{H}}_\lambda\otimes P^{\mathcal{K}}_\lambda\right)P^{\mathcal{H}\otimes\mathcal{K}}_{(n)}=\left(P^{\mathcal{H}}_\lambda\otimes I_{\mathcal{K}}^{\otimes n}\right)P^{\mathcal{H}\otimes\mathcal{K}}_{(n)}=\left(I_{\mathcal{H}}^{\otimes n}\otimes P^{\mathcal{H}}_\lambda\right)P^{\mathcal{H}\otimes\mathcal{K}}_{(n)}.
\end{equation}

For $m,n\in\naturals$ we consider the subspace $\complexes^{\binom{m}{m+n}}\otimes\mathcal{H}^{\otimes m}\otimes\mathcal{K}^{\otimes n}\subseteq(\mathcal{H}\oplus\mathcal{K})^{\otimes (n+n)}$, and compare the decompositions with respect to $U(\mathcal{H})$, $U(\mathcal{K})$ and $U(\mathcal{H}\oplus\mathcal{K})$. For all $\mu\vdash m$, $\nu\vdash n$ and $\lambda\vdash m+n$, the isotypic projections $I_{\binom{m}{n}}\otimes P^{\mathcal{H}}_\mu\otimes I_{\mathcal{K}}^{\otimes n}$, $I_{\binom{m}{n}}\otimes I_{\mathcal{H}}^{\otimes m}\otimes P^{\mathcal{K}}_\nu$ and $P^{\mathcal{H}\oplus\mathcal{K}}_\lambda$ commute with each other and satisfy
\begin{equation}
\left(I_{\binom{m}{n}}\otimes P^{\mathcal{H}}_\mu\otimes P^{\mathcal{K}}_\nu\right)P^{\mathcal{H}\oplus\mathcal{K}}_\lambda\neq 0\implies c^{\lambda}_{\mu\nu}\neq 0.
\end{equation}

\subsection{Operator geometric means}

While there is a distinguished notion of the weighted geometric mean of a pair of positive matrices, there is no unique such choice for multivariate geometric means, even in the unweighted case. Several constructions have been proposed as well as an axiomatization by Ando--Li--Mathias \cite{ando2004geometric}. We will make use of geometric means that satisfy the properties in the following definition, which were previously considered in \cite{bunth2021equivariant}. These properties are different from the axioms in \cite{ando2004geometric}, but the known constructions based on two-variable geometric means satisfy these as well.
\begin{definition}\label{def:geometricMean}
We say that the $r$-variable operator function $\geometricmean:\boundeds(\mathcal{H})_+^r\to\boundeds(\mathcal{H})_+$ (given for all finite-dimensional Hilbert spaces $\mathcal{H}$) is a \emph{geometric mean} if it satisfies the following conditions:
\begin{enumerate}[({G}1)]
    \item\label[property]{it:gmeanunitary} $\geometricmean(UA_1U^*,\dots,UA_rU^*)=U\geometricmean(A_1,\dots,A_r)U^*$, where $U:\mathcal{H}\to\mathcal{K}$ is any unitary operator;
    \item\label[property]{it:gmeanmonotone} if $A_i\le B_i$ for all $i$, then $\geometricmean(A_1,\dots,A_r)\le\geometricmean(B_1,\dots,B_r)$;
    \item\label[property]{it:gmeantensorproduct} $\geometricmean(A_1\otimes B_1,\dots,A_r\otimes B_r)=\geometricmean(A_1,\dots,A_r)\otimes\geometricmean(B_1,\dots,B_r)$;
    \item\label[property]{it:gmeandirectsum} $\geometricmean(A_1\oplus B_1,\dots,A_r\oplus B_r)=\geometricmean(A_1,\dots,A_r)\oplus\geometricmean(B_1,\dots,B_r)$;
    \item\label[property]{it:gmeanhomogeneous} $\geometricmean(\lambda A_1,\dots,\lambda A_r)=\lambda\geometricmean(A_1,\dots,A_r)$ for all $\lambda>0$;
    \item\label[property]{it:gmeanconcave} $\geometricmean(tA_1+(1-t)B_1,\dots,tA_r+(1-t)B_r)\le t\geometricmean(A_1,\dots,A_k) +(1-t)\geometricmean(B_1,\dots,B_k)$ for all $t\in[0,1]$.
\end{enumerate}
\end{definition}

\begin{remark}\label{rem:gmeanproperties}\leavevmode
\begin{enumerate}
\item A geometric mean $\geometricmean$ determines uniquely weights $\theta\in\distributions([r])$ such that when the arguments $A_1,\dots,A_r$ commute, the geometric mean evaluates to $\geometricmean(A_1,\dots,A_r)=\prod_{i=1}^r A_i^{\theta(i)}$. Note that this applies in particular to numbers (identified with operators on $\complexes$), and also implies that $\geometricmean(I,\dots,I)=I$.
\item The Kubo--Ando means
\begin{equation}
\geometricmean(A,B)=A\#_\theta B=B^{1/2}(B^{-1/2}AB^{-1/2})^\theta B^{1/2}=A^{1/2}(A^{-1/2}BA^{-1/2})^{1-\theta}A^{1/2}
\end{equation} are bivariate geometric means \cite{pusz1975functional,kubo1980means}. More generally, nesting bivariate geometric means in an arbitrary way results in a multivariate geometric mean \cite{ando2004geometric,bini2010effective}. This allows the construction of (in general, many different) $r$-variate weighted geometric means with any weight $\theta\in\distributions([r])$.
\item By \cref{it:gmeandirectsum}, if $X$ is an operator that commutes with the arguments $A_1,\dots,A_r$ then it also commutes with $\geometricmean(A_1,\dots,A_r)$.
\item The inequality $T(\geometricmean(A_1,\dots,A_r))\le\geometricmean(T(A_1),\dots,T(A_r))$ holds for any completely positive map $T:\boundeds(\mathcal{H})\to\boundeds(\mathcal{K})$. In particular,
\begin{equation}
\bra{\psi}\mathbb{G}(A_1,\dots,A_r)\ket{\psi}\leq\mathbb{G}(\bra{\psi}A_1\ket{\psi},\dots,\bra{\psi}A_r\ket{\psi})
\end{equation}
for any vector $\psi$.
\end{enumerate}
\end{remark}

We will make use of the following lower bound for the geometric mean of positive semidefinite operators, where $\psi$ is an arbitrary vector:
\begin{equation}\label{eq:geomlowerbyvector}
\begin{split}
\geometricmean(A_1,\ldots,A_r)
 &= 2^{-\sum_{i=1}^r\theta(i)\maxrelative{\vectorstate{\psi}}{A_i}}\geometricmean(2^{\maxrelative{\vectorstate{\psi}}{A_1}}A_1,\ldots,2^{\maxrelative{\vectorstate{\psi}}{A_r}}A_r)
  \\
  &\ge 2^{-\sum_{i=1}^r\theta(i)\maxrelative{\vectorstate{\psi}}{A_i}}\geometricmean(\vectorstate{\psi},\ldots,\vectorstate{\psi})  \\
  &= 2^{-\sum_{i=1}^r\theta(i)\maxrelative{\vectorstate{\psi}}{A_i}}\vectorstate{\psi}.
\end{split}
\end{equation}

\section{Abstract upper and lower functionals}\label{sec:abstractfunctionals}

Throughout this section $(S,\preorderle)$ is a semiring with a Strassen preorder. Recall that the asymptotic spectrum $\Delta(S,\preorderle)$ characterizes the asymptotic preorder: if $h$ is a spectral point and $x\asymptoticge y$, then $h(x)\ge h(y)$, and conversely, if the inequality $h(x)\ge h(y)$ holds for all $h\in\Delta(S,\preorderle)$, then $x\asymptoticge y$. However, finding elements of $\Delta(S,\preorderle)$ tends to be a challenging problem.

In this section we study monotone functionals on preordered semirings that are either subadditive and submultiplicative or superadditive and supermultiplicative. From a practical point of view, if $f$ is superadditive and supermultiplicative and $g$ is subadditive and submultiplicative, and the two satisfy $f\le g$ (pointwise), then the asymptotic inequality $x\asymptoticge y$ still implies $g(x)\ge f(y)$, therefore such a pair of functionals provides an obstruction in a similar way as a spectral point does. On the theoretical side, constructing such functionals is often an important step in finding spectral points. In fact, as we will see below, subadditive and submultiplicative monotone functionals already imply the existence of (typically many) pointwise smaller spectral points as well, thereby providing information about $\Delta(S,\preorderle)$.
\begin{definition}\label{def:abstractlowerupper}
Let $S$ be a preordered semiring.
A map $g:S\to\nonnegativereals$ is an \emph{(abstract) upper functional} if $g(0)=0$, $g(1)=1$, and for $x,y\in S$ it satisfies $g(x+y)\le g(x)+g(y)$, $g(xy)\le g(x)g(y)$, and $x\preorderle y$ implies $g(x)\le g(y)$.

Similarly, a map $f:S\to\nonnegativereals$ is an \emph{(abstract) lower functional} if $f(0)=0$, $f(1)=1$, and for $x,y\in S$ it satisfies $f(x+y)\ge f(x)+f(y)$, $f(xy)\ge f(x)f(y)$, and $x\preorderle y$ implies $f(x)\le f(y)$.
\end{definition}
Clearly, a map $S\to\nonnegativereals$ is a spectral point iff it is both an upper and a lower functional. It follows from the definition that if $n\in\naturals$ then $f(n)\ge n$ for any lower functional $f$ and $g(n)\le n$ for any upper functional $g$.

\begin{example}\label{ex:functionalexamples}\leavevmode
\begin{enumerate}
\item In any semiring $S$ with a Strassen preorder $\preorderle$ we can define the \emph{abstract rank} as \cite[Section 3.]{zuiddam2019asymptotic}
\begin{equation}
\abstractrank(x)=\min\setbuild{n\in\naturals}{x\preorderle n}
\end{equation}
and the \emph{abstract subrank} as
\begin{equation}
\abstractsubrank(x)=\max\setbuild{n\in\naturals}{n\preorderle x}.
\end{equation}
$\abstractrank$ is an upper functional and $\abstractsubrank$ is a lower functional.
\item The \emph{asymptotic rank} $\abstractasymptoticrank(x)=\lim_{n\to\infty}\sqrt[n]{\abstractrank(x)}$ is an upper functional and the \emph{asymptotic subrank} $\abstractasymptoticsubrank(x)=\lim_{n\to\infty}\sqrt[n]{\abstractsubrank(x)}$ is a lower functional \cite{strassen1988asymptotic}.
\item Similarly, the \emph{fractional rank} is an upper functional and the \emph{fractional subrank} is a lower functional \cite[Section 3.3]{wigderson2021asymptotic}.
\item On the semiring of tensor classes, the \emph{upper support functionals} $\zeta^\theta$ are upper functionals and the \emph{lower support functionals} $\zeta_\theta$ are lower functionals \cite[Sections 2. and 3.]{strassen1991degeneration}.
\item On the semiring of graphs \cite{zuiddam2019asymptotic}, the \emph{independence number} is a lower functional (equal to the abstract subrank), while the \emph{clique cover number}, the \emph{Haemers bound} and the \emph{complement of the orthogonal rank} are upper functionals (of these the clique cover number is equal to the abstract rank).
\item On the semiring of complex tensors with the restriction preorder, the \emph{lower quantum functionals} for arbitrary convex weights are lower functionals. The \emph{upper quantum functionals} for weight supported on noncrossing bipartitions are upper functionals \cite{christandl2021universal}.
\item On the semiring of complex tensors with the LOCC preorder, the functionals defined in \eqref{eq:lowerLOCC} are lower functionals \cite{vrana2020family}.
\end{enumerate}
\end{example}

As emphasized in \cite{strassen1988asymptotic}, the asymptotic rank $\abstractasymptoticrank$ of tensors is subadditive and submultiplicative and becomes additive and multiplicative when restricted to a subsemiring generated by (i.e. the set of polynomials with natural coefficients in) a single tensor, in this respect behaving like the maximum functional on a semiring of nonnegative continuous functions on a compact space. Likewise, the asymptotic subrank $\abstractasymptoticsubrank$ is a lower functional which, when restricted to a subsemiring generated by a single tensor, becomes multiplicative and additive, so it behaves like the minimum functional. The theory of asymptotic spectra provides an explanation for this analogy in the form of the equalities $\abstractasymptoticrank(x)=\max_{h\in\Delta(S,\preorderle)}h(x)$ and $\abstractasymptoticsubrank(x)=\min_{h\in\Delta(S,\preorderle)}h(x)$ \cite[Equation (1.13)]{strassen1988asymptotic}.

The idea behind the present section is to consider instead of $\Delta(S,\preorderle)$ a (closed) subset $H\subset\Delta(S,\preorderle)$, which gives rise to an upper functional $x\mapsto\max_{h\in H}h(x)$ (an example of a functional of this form is an asymptotic version of the upper support functional \cite[Section 7, (ii)]{strassen1991degeneration}). Our aim is to characterize the upper functionals arising in this way and to relate the properties of the functional to $H$. Viewing the construction backwards, by finding a suitable upper functional, one can prove the existence of spectral points subject to certain constraints. In particular, we will show that if $\varphi$ is a spectral point on a subsemiring $S_0$ and $g$ is an upper functional such that $g|_{S_0}=\varphi$ then there exists a spectral point $h$ such that $h|_{S_0}=\varphi$ and $h\le g$ (without the upper bound, an extension of $\varphi$ was known to exist \cite[Corollary 2.18.]{zuiddam2018algebraic}).

Unlike spectral points, the sets of upper an lower functionals are closed under several basic constructions:
\begin{proposition}\label{prop:functionalbasicconstructions}\leavevmode
\begin{enumerate}
\item Let $G$ be a nonempty set of upper functionals. Then $\bar{g}(x)=\sup_{g\in G}g(x)$ is an upper functional.
\item Let $(I,\le)$ be a directed set and $(g_i)_{i\in I}$ a monotone decreasing net of upper functionals. Then $\ubar{g}(x)=\inf_{i\in I}g_i(x)$ is an upper functional.

\item Let $F$ be a nonempty set of lower functionals. Then $\ubar{f}(x)=\inf_{f\in F}f(x)$ is a lower functional.
\item Let $(I,\le)$ be a directed set and $(f_i)_{i\in I}$ a monotone increasing net of upper functionals, pointwise bounded from above. Then $\bar{f}(x)=\sup_{i\in I}f_i(x)$ is a lower functional.
\item Let $f_1,\dots,f_r$ be lower functionals and $\theta\in\distributions([r])$. Then $\geometricmean(f_1,\dots,f_r)(x)=\prod_{i=1}^r f_i(x)^{\theta(i)}$ is a lower functional. Moreover, if $f_1,\dots,f_r$ are multiplicative then $\geometricmean(f_1,\dots,f_r)$ is also multiplicative.
\end{enumerate}
\end{proposition}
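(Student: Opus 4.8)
The plan is to verify, in each of the five cases, the four defining conditions of \cref{def:abstractlowerupper}. Finiteness of the new functional is automatic: in a Strassen preorder one has $x\preorderle\abstractrank(x)$ for every $x$, so every upper functional is dominated by $\abstractrank$ and the supremum in (i) is finite, while in (ii) and (iii) the new functional is dominated by any single member of the family, and in (iv) by the assumed pointwise bound. Normalization ($0\mapsto 0$, $1\mapsto 1$) and monotonicity under $\preorderle$ are immediate in all five cases, since $\sup$ and $\inf$ preserve pointwise order and fix the constants $0$ and $1$, and so does the weighted geometric mean $(t_i)\mapsto\prod_i t_i^{\theta(i)}$ (using $\theta$ strictly positive so that the all-zero input maps to $0$). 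So the work is entirely in the sub-/super-additivity and sub-/super-multiplicativity inequalities.

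Parts (i) and (iii) need no structure on the index set. For (i), $\bar g(x+y)=\sup_{g}g(x+y)\le\sup_{g}\bigl(g(x)+g(y)\bigr)\le\sup_{g}g(x)+\sup_{g}g(y)=\bar g(x)+\bar g(y)$, the last step holding for an arbitrary family because $g(x)+g(y)\le\bar g(x)+\bar g(y)$ termwise; submultiplicativity is identical, using nonnegativity. Part (iii) is the exact order-dual: $\ubar f(x+y)=\inf_{f}f(x+y)\ge\inf_{f}\bigl(f(x)+f(y)\bigr)\ge\inf_{f}f(x)+\inf_{f}f(y)$, and likewise for products.

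Parts (ii) and (iv) use the directedness of the index set, and this is where one has to be careful. For a \emph{decreasing} net of nonnegative reals $(a_i)_{i\in I}$, $(b_i)_{i\in I}$ one has $\inf_i(a_i+b_i)=\inf_i a_i+\inf_i b_i$ and $\inf_i a_i b_i=(\inf_i a_i)(\inf_i b_i)$; the nontrivial ``$\le$'' directions follow by choosing $i_1$ with $a_{i_1}$ within $\epsilon$ of $\inf_i a_i$ and $i_2$ with $b_{i_2}$ within $\epsilon$ of $\inf_i b_i$, picking $i\ge i_1,i_2$ in the directed set, and using monotonicity of the net. Granting these identities, (ii) follows: $\ubar g(x+y)\le\inf_i\bigl(g_i(x)+g_i(y)\bigr)=\ubar g(x)+\ubar g(y)$, and similarly for products. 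Part (iv) — in which the $f_i$ must be read as \emph{lower} functionals for the conclusion to make sense, this being the order-dual of (ii) — uses instead the identities $\sup_i(a_i+b_i)=\sup_i a_i+\sup_i b_i$ and $\sup_i a_i b_i=(\sup_i a_i)(\sup_i b_i)$ for \emph{increasing} pointwise bounded nets (boundedness being what keeps these suprema finite and the product identity valid), from which superadditivity and supermultiplicativity of $\bar f$ follow.

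Finally, for (v) write $G=\geometricmean(f_1,\dots,f_r)$, so $G(x)=\prod_{i=1}^r f_i(x)^{\theta(i)}$, and assume $\theta$ is strictly positive. Supermultiplicativity is immediate: $f_i(xy)\ge f_i(x)f_i(y)$ together with monotonicity of $t\mapsto t^{\theta(i)}$ gives $G(xy)\ge\prod_i\bigl(f_i(x)f_i(y)\bigr)^{\theta(i)}=G(x)G(y)$, with equality throughout when every $f_i$ is multiplicative, which is the ``moreover'' clause. Superadditivity reduces, by the same monotonicity step applied to $f_i(x+y)\ge f_i(x)+f_i(y)$, to the scalar inequality $\prod_i(a_i+b_i)^{\theta(i)}\ge\prod_i a_i^{\theta(i)}+\prod_i b_i^{\theta(i)}$ for $a_i,b_i\ge 0$. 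Assuming first that $a_i+b_i>0$ for all $i$ (the remaining cases being trivial or obtained by continuity), dividing through by the left-hand side turns this into $\prod_i\bigl(\tfrac{a_i}{a_i+b_i}\bigr)^{\theta(i)}+\prod_i\bigl(\tfrac{b_i}{a_i+b_i}\bigr)^{\theta(i)}\le 1$, which follows by bounding each of the two products above by the corresponding $\theta$-weighted arithmetic mean (weighted AM--GM) and adding, since those two arithmetic means sum to $\sum_i\theta(i)=1$. I expect this last weighted AM--GM inequality, together with the careful handling of the directed limits in (ii) and (iv), to be the only points requiring anything beyond routine bookkeeping.
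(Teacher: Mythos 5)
Your proof is correct, and parts (i)--(iv) follow the paper's argument essentially verbatim (the $\epsilon$-and-common-upper-bound manipulation of the directed index set in (ii) and (iv) is exactly the paper's step, and you rightly read the $f_i$ in (iv) as lower functionals, correcting an evident typo in the statement). The only genuine divergence is in the superadditivity half of (v): the paper deduces $\prod_i(a_i+b_i)^{\theta(i)}\ge\prod_i a_i^{\theta(i)}+\prod_i b_i^{\theta(i)}$ by invoking \cref{it:gmeanhomogeneous,it:gmeanconcave}, i.e.\ writing $\geometricmean(a+b)=2\,\geometricmean(\tfrac{a+b}{2})$ and then applying joint concavity, whereas you prove the same scalar inequality directly by normalizing and applying weighted AM--GM to each of the two resulting products. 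The two are equivalent in substance --- joint concavity of the weighted geometric mean on $\nonnegativereals^r$ is itself an AM--GM fact --- but yours is self-contained at the scalar level, while the paper's version reuses axioms it has already stated for operator geometric means, which keeps the proposition consistent with the operator-valued constructions used later (\cref{prop:geometricobservables}). Two minor remarks: your finiteness observation via $g\le\abstractrank$ is a small bonus the paper does not spell out, and the strict positivity of $\theta$ you assume in (v) is not needed --- indices with $\theta(i)=0$ can simply be dropped from the product.
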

\begin{proof}
It is clear that all of these constructions preserve the property that $0$ and $1$ (in $S$) are mapped to $0$ and $1$ (in $\nonnegativereals$) as well as monotonicity. $\bar{g}(x)$ is subadditive since $\bar{g}(x+y)=\sup_{g\in G}g(x+y)\le\sup_{g\in G}(g(x)+g(y))\le\sup_{g\in G}g(x)+\sup_{g\in G}g(y)=\bar{g}(x)+\bar{g}(y)$. One shows in a similar way that $\bar{g}$ is submultiplicative, and $\ubar{f}$ is superadditive and supermultiplicative.

To see that $\ubar{g}$ is subadditive, for $x,y\in S$ and $\epsilon>0$ let $i,j\in I$ such that $g_i(x)\le\ubar{g}(x)+\epsilon$ and $g_j(y)\le\ubar{g}(y)+\epsilon$. Let $k$ be an upper bound of $i$ and $j$. Then
\begin{equation}
\ubar{g}(x+y)\le g_k(x+y)\le g_k(x)+g_k(y)\le\ubar{g}(x)+\ubar{g}(y)+2\epsilon.
\end{equation}
Since this is true for all $\epsilon>0$, we have $\ubar{g}(x+y)\le\ubar{g}(x)+\ubar{g}(y)$. The proofs of submultiplicativity of $\ubar{g}$ and superadditivity and supermultiplicativity of $\bar{f}$ are similar.

Since the geometric mean is multiplicative and monotone, $\geometricmean(f_1,\dots,f_r)$ inherits supermultiplicativity (multiplicativity). Superadditivity follows from the joint concavity and homogeneity of the geometric mean: for $x,y\in S$ we have
\begin{equation}
\begin{split}
\geometricmean(f_1(x+y),\dots,f_k(x+y))
 & \ge \geometricmean(f_1(x)+f_1(y),\dots,f_k(x)+f_k(y))  \\
 & = 2\geometricmean\left(\frac{f_1(x)+f_1(y)}{2},\dots,\frac{f_k(x)+f_k(y)}{2}\right)  \\
 & \ge 2\frac{\geometricmean(f_1(x),\dots,f_k(x))+\geometricmean(f_1(y),\dots,f_k(y))}{2}.
\end{split}
\end{equation}
\end{proof}

The functionals in \cref{ex:functionalexamples} are normalized in the sense that they send any natural number $n\in\naturals\subseteq S$ to itself. While this is not necessarily the case in general, it holds whenever a lower functional is less than an upper functional, a common situation and one which we are interested in:
\begin{proposition}\label{prop:uppernormalized}
Let $g$ be an upper functional. The following are equivalent:
\begin{enumerate}
\item\label{it:uppergreaterthansubrank} $g\ge\abstractsubrank$;
\item\label{it:uppergreaterthanlower} there exists a lower functional $f$ such that $g\ge f$;
\item\label{it:uppergreaterthann} $g(n)\ge n$ for all $n\in\naturals$;
\item\label{it:uppernormalized} $g(n)=n$ for all $n\in\naturals$.
\end{enumerate}
\end{proposition}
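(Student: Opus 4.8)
The plan is to prove the chain of implications $(\ref{it:uppernormalized})\Rightarrow(\ref{it:uppergreaterthann})\Rightarrow(\ref{it:uppergreaterthansubrank})\Rightarrow(\ref{it:uppergreaterthanlower})\Rightarrow(\ref{it:uppernormalized})$, since all the content lies in the last step; the first three are essentially immediate. First, $(\ref{it:uppernormalized})\Rightarrow(\ref{it:uppergreaterthann})$ is trivial. For $(\ref{it:uppergreaterthann})\Rightarrow(\ref{it:uppergreaterthansubrank})$: if $n\preorderle x$ then by monotonicity $n\le g(n)\le g(x)$, and taking the maximum over all such $n$ gives $\abstractsubrank(x)\le g(x)$ (with the case $\abstractsubrank(x)=0$ handled separately using $g\ge 0$). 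For $(\ref{it:uppergreaterthansubrank})\Rightarrow(\ref{it:uppergreaterthanlower})$: $\abstractsubrank$ is itself a lower functional by \cref{ex:functionalexamples}(i), so it serves as the required $f$.

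The real work is $(\ref{it:uppergreaterthanlower})\Rightarrow(\ref{it:uppernormalized})$. So suppose $f\le g$ with $f$ a lower functional and $g$ an upper functional. We already know from the remark after \cref{def:abstractlowerupper} that $f(n)\ge n$ and $g(n)\le n$ for every $n\in\naturals$. Hence $n\le f(n)\le g(n)\le n$, which forces $f(n)=g(n)=n$; in particular $g(n)=n$. That is the whole argument, provided we can justify the two bounds $f(n)\ge n$ and $g(n)\le n$ from the axioms. For $g(n)\le n$: write $n = 1+1+\dots+1$ in $S$, apply subadditivity $n$ times together with $g(1)=1$ to get $g(n)\le n$. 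For $f(n)\ge n$: similarly, superadditivity and $f(1)=1$ give $f(n)\ge n$. (These are exactly the facts already asserted in the text immediately following \cref{def:abstractlowerupper}, so they may simply be cited.)

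I don't expect any genuine obstacle here — the statement is a short bookkeeping lemma whose only subtlety is making sure the degenerate cases ($\abstractsubrank(x)=0$, i.e.\ no natural number restricts to $x$) are covered, where one just uses that $g$ takes values in $\nonnegativereals$ so $g(x)\ge 0 = \abstractsubrank(x)$. One should also note that $\abstractsubrank$ is well-defined (finite, using the Strassen-preorder axiom that some $r\in\naturals$ satisfies $rx\preorderge 1$, equivalently that the set $\{n : n\preorderle x\}$ is the relevant one and need not be bounded — actually for a \emph{lower} functional only the lower bound matters, so no finiteness issue arises in the implication we use). The cleanest writeup is simply: collect the observation $f(n)=n$ for any lower functional sandwiched below an upper functional, then feed in $\abstractsubrank$ from \cref{ex:functionalexamples}.
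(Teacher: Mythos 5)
Your proof is correct and uses exactly the same ingredients as the paper's ($f(n)\ge n$ for lower functionals, $g(n)\le n$ for upper functionals, $\abstractsubrank$ being a lower functional, and monotonicity of $g$ together with $\abstractsubrank(x)\preorderle x$); you merely traverse the cycle of implications in a different order, which makes no substantive difference.
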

\begin{proof}
\ref{it:uppergreaterthansubrank}$\implies$\ref{it:uppergreaterthanlower} is clear since $\abstractsubrank$ is a lower functional.

\ref{it:uppergreaterthanlower}$\implies$\ref{it:uppergreaterthann} is true because $f(n)\ge n$ for any lower functional $f$ and $n\in\naturals$.

\ref{it:uppergreaterthann}$\implies$\ref{it:uppernormalized} is true because $g(n)\le n$ always holds for a upper functional $g$ and $n\in\naturals$.

\ref{it:uppernormalized}$\implies$\ref{it:uppergreaterthansubrank}: if $s\in S$, then $s\preorderge\abstractsubrank(s)$ by definition. It follows that $g(s)\ge g(\abstractsubrank(s))=\abstractsubrank(s)$ since $g$ is monotone and is the identity on natural numbers.
\end{proof}
The analogous statement for lower functionals is also true. We state it for completeness, but omit the proof which is very similar:
\begin{proposition}\label{prop:lowernormalized}
Let $f$ be a lower functional. The following are equivalent:
\begin{enumerate}
\item\label{it:lowerlessthanrank} $f\le\abstractrank$;
\item\label{it:lowerlessthanupper} there exists an upper functional $g$ such that $f\le g$;
\item\label{it:lowernlessthann} $f(n)\le n$ for all $n\in\naturals$;
\item\label{it:lowernormalized} $f(n)=n$ for all $n\in\naturals$.
\end{enumerate}
\end{proposition}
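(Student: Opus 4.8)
The plan is to establish the cycle of implications \ref{it:lowerlessthanrank}$\implies$\ref{it:lowerlessthanupper}$\implies$\ref{it:lowernlessthann}$\implies$\ref{it:lowernormalized}$\implies$\ref{it:lowerlessthanrank}, in direct analogy with the proof of \cref{prop:uppernormalized}, with the roles of $\abstractsubrank$ and $\abstractrank$ interchanged. The implication \ref{it:lowerlessthanrank}$\implies$\ref{it:lowerlessthanupper} is immediate, since $\abstractrank$ is itself an upper functional (\cref{ex:functionalexamples}) and can therefore be taken as the witness $g$. For \ref{it:lowerlessthanupper}$\implies$\ref{it:lowernlessthann}, I would invoke the fact, recorded after \cref{def:abstractlowerupper}, that every upper functional satisfies $g(n)\le n$ for $n\in\naturals$; hence $f(n)\le g(n)\le n$. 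For \ref{it:lowernlessthann}$\implies$\ref{it:lowernormalized}, combine this with the complementary fact that $f(n)\ge n$ holds automatically for any lower functional, which forces $f(n)=n$. Finally, for \ref{it:lowernormalized}$\implies$\ref{it:lowerlessthanrank}: given $s\in S$, the definition of the abstract rank yields $s\preorderle\abstractrank(s)$ with $\abstractrank(s)\in\naturals$, so monotonicity of $f$ together with \ref{it:lowernormalized} gives $f(s)\le f(\abstractrank(s))=\abstractrank(s)$.

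I do not expect any substantive obstacle here: the argument is the order-theoretic dual of the one already carried out for \cref{prop:uppernormalized}, and the only inputs it uses — that $\abstractrank$ is an upper functional, that $g(n)\le n$ for upper functionals, and that $f(n)\ge n$ for lower functionals — are all stated immediately after \cref{def:abstractlowerupper} and in \cref{ex:functionalexamples}. The one point worth stating carefully is that $\abstractrank(s)$ is a genuine natural number (finite) for every $s\in S\setminus\{0\}$, which is guaranteed by the Strassen preorder axiom, so that applying \ref{it:lowernormalized} to it is legitimate; the case $s=0$ is trivial since $f(0)=0=\abstractrank(0)$.
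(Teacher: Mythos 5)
Your proof is correct and is exactly the argument the paper intends: the paper omits the proof of this proposition precisely because it is the order-theoretic dual of the proof of \cref{prop:uppernormalized}, with $\abstractrank$ in place of $\abstractsubrank$, which is what you have written out. Your additional remark that the Strassen preorder axiom guarantees $\abstractrank(s)<\infty$ for $s\neq 0$ is a worthwhile point of care that the paper leaves implicit.
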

It is clear that the constructions in \cref{prop:functionalbasicconstructions} preserve the normalization.

Recall that the asymptotic rank and asymptotic subrank of tensor are upper and lower functionals that are multiplicative and additive when restricted to polynomials in a single tensor, with natural coefficients. In the following definition we generalize this property, allowing the natural numbers to be replaced with a larger subsemiring.
\begin{definition}
Let $S_0\subseteq S$ be a subsemiring. We call an upper (lower) functional \emph{$S_0$-spectral} if its restriction to $S_0$ is an element of $\Delta(S_0,\preorderle)$.

Let $g$ be an $S_0$-spectral upper functional. If $p=\sum_{i=0}^d a_iT^i\in S_0[T]$ is a polynomial (in the indeterminate $T$ and with coefficients from $S_0$), then we set $p_g=\sum_{i=0}^d g(a_i)T^i\in\nonnegativereals[T]$. We say that $g$ is \emph{$S_0$-regular}, if for all $x\in S$ and any polynomial $p\in S_0[T]$ we have $g(p(x))=p_g(g(x))$. In the special case when $S_0=\naturals$, we will also say \emph{regular} instead of $\naturals$-regular. Note that in this case $p=p_g$ are polynomials with natural coefficients.

Similarly, an $S_0$-spectral lower functional $f$ is \emph{$S_0$-regular} if $f(p(x))=p_f(f(x))$ for all $x\in S$ and $p\in S_0[T]$.
\end{definition}

With this terminology, the normalized functionals are precisely the $\naturals$-spectral ones. For example, a lower or upper functional is a spectral point iff it is $S$-spectral, which is in turn equivalent to $S$-regularity. In any semiring where $x\neq 0\implies x\preorderge 1$, the asymptotic rank $\abstractasymptoticrank(x)$ and the asymptotic subrank are $\abstractasymptoticsubrank(x)$ regular.

\begin{proposition}\label{prop:regularimpliesspectral}
If $g$ is $S_0$-regular, $s\in S$ and $\langle S_0,s\rangle$ denotes the subsemiring generated by $S_0$ and $s$, then $g$ is also $\langle S_0,s\rangle$-spectral.
\end{proposition}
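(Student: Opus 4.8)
The plan is to reduce everything to two facts already available: that $g|_{S_0}$ is a semiring homomorphism (this is what $S_0$-spectrality means) and the defining identity of $S_0$-regularity, $g(p(x))=p_g(g(x))$. First I would record a normal form for elements of $\langle S_0,s\rangle$: since $S_0$ is a subsemiring containing $0$ and $1$ and multiplication is commutative, every element of $\langle S_0,s\rangle$ can be written as $p(s)=\sum_{i=0}^d a_is^i$ with $a_i\in S_0$, so that the evaluation map $\mathrm{ev}_s\colon S_0[T]\to\langle S_0,s\rangle$, $p\mapsto p(s)$, is a \emph{surjective} semiring homomorphism. Because $g$ is already monotone on all of $S$ and satisfies $g(0)=0$, $g(1)=1$, it suffices to check that $g$ is additive and multiplicative on $\langle S_0,s\rangle$; this will make $g|_{\langle S_0,s\rangle}$ a monotone semiring homomorphism into $\nonnegativereals$, i.e. an element of $\Delta(\langle S_0,s\rangle,\preorderle)$.

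For the homomorphism property I would compute directly with the pushforward $p\mapsto p_g$. Given $p,q\in S_0[T]$, additivity of $g$ on $S_0$ gives $(p+q)_g=p_g+q_g$, and additivity together with multiplicativity of $g$ on $S_0$ (applied to the coefficients $\sum_{i+j=k}a_ib_j$ of $pq$) gives $(pq)_g=p_g\cdot q_g$ in $\nonnegativereals[T]$. Hence, writing $t=g(s)$ and using $S_0$-regularity twice,
\[
g(p(s)+q(s))=g\bigl((p+q)(s)\bigr)=(p+q)_g(t)=p_g(t)+q_g(t)=g(p(s))+g(q(s)),
\]
and likewise $g(p(s)q(s))=g\bigl((pq)(s)\bigr)=(pq)_g(t)=p_g(t)\cdot q_g(t)=g(p(s))g(q(s))$. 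Equivalently: $g\circ\mathrm{ev}_s=\mathrm{ev}_t\circ(p\mapsto p_g)$ is a composition of semiring homomorphisms, and since $\mathrm{ev}_s$ is surjective onto $\langle S_0,s\rangle$, the restriction $g|_{\langle S_0,s\rangle}$ is a semiring homomorphism.

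Combining this with monotonicity inherited from $g$ on $S$ yields $g|_{\langle S_0,s\rangle}\in\Delta(\langle S_0,s\rangle,\preorderle)$, i.e. $g$ is $\langle S_0,s\rangle$-spectral. The same argument applies verbatim to a lower functional, since only monotonicity of $g$ and the additive/multiplicative behaviour of $g$ on $S_0$ are used, never the one-sided sub- or super-additivity/multiplicativity inequalities. I do not anticipate a genuine obstacle here; the only points that need a little care are the verification that every element of $\langle S_0,s\rangle$ has the form $p(s)$ (which is exactly where commutativity of multiplication and closure of $S_0$ under the operations enter) and the elementary bookkeeping that coefficientwise application of $g$ to a polynomial respects polynomial multiplication, which rests precisely on $g$ being both additive and multiplicative on $S_0$.
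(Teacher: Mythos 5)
Your proof is correct and follows essentially the same route as the paper's: write elements of $\langle S_0,s\rangle$ as polynomials $p(s)$ with coefficients in $S_0$, observe that $p\mapsto p_g$ is a semiring homomorphism because $g$ is additive and multiplicative on $S_0$, and then apply $S_0$-regularity to convert $g(p(s)q(s))$ and $g(p(s)+q(s))$ into $p_g(g(s))q_g(g(s))$ and $p_g(g(s))+q_g(g(s))$. The only difference is presentational (your phrasing via the surjective evaluation map $\mathrm{ev}_s$ and the identity $g\circ\mathrm{ev}_s=\mathrm{ev}_{g(s)}\circ(p\mapsto p_g)$), which packages the same computation.
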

\begin{proof}
The statement is a consequence of the fact that elements of $\langle S_0,s\rangle$ can be written as polynomials of the generator $s$ with coefficients in $S_0$, and that the map $p\mapsto p_g$ is a semiring homomorphism. In detail, let $x=p(s)$, $y=q(s)$, where $p$ and $q$ are polynomials with coefficients in $S_0$. For $r:=pq$ we have that $r_g=p_gq_g$, because $g$ is $S_0$-spectral, therefore
\begin{equation}
\begin{split}
g(xy)
 & = g(p(s)q(s))
 = g(r(s))
 = r_g(g(s))
 = p_g(g(s))1_g(g(s))  \\
 & = g(p(s))g(q(s))
 = g(x)g(y).
\end{split}
\end{equation}

Similarly, for the additivity we use that $r:=p+q$ implies $r_g=p_g+q_g$:
\begin{equation}
\begin{split}
g(x+y)
 & = g(p(s)+q(s))
 = g(r(s))
 = r_g(g(s))
 = p_g(g(s))+q_g(g(s))  \\
 & = g(p(s))+g(q(s))
 = g(x)+g(y).
\end{split}
\end{equation}
\end{proof}

\begin{remark}\label{rem:lowerequalsupper}
When $f$ is a lower functional and $g$ is an upper functional such that $f\le g$, then $S_0:=\setbuild{x\in S}{f(x)=g(x)}$ is a subsemiring. Indeed, these elements include the natural numbers by \cref{prop:uppernormalized,prop:lowernormalized} and if $f(x)=g(x)$ and $f(y)=g(y)$ then $g(x+y)\le g(x)+g(y)=f(x)+f(y)\le f(x+y)\le g(x+y)$ and $g(xy)\le g(x)g(y)=f(x)f(y)\le f(xy)\le g(xy)$. Both $f$ and $g$ are $S_0$-spectral, and their restrictions to $S_0$ are also the restrictions of at least one element of $\Delta(S,\preorderle)$ \cite[Corollary 2.18.]{zuiddam2018algebraic}.

For example, the support functionals with weight $\theta$ provide spectral points on the semiring of \emph{$\theta$-robust tensors} \cite[Section 4]{strassen1991degeneration}. Over the complex numbers, they arise as the restrictions of the quantum functionals \cite{christandl2021universal}.
\end{remark}

We state a converse in terms of the following generalization of the abstract rank and subrank. If $S_0\subseteq S$ is a subsemiring and $\varphi\in\Delta(S_0,\preorderle)$, then we set
\begin{align}
\abstractrank_\varphi(x) & = \inf\setbuild{\varphi(z)}{z\in S_0,z\preorderge x}  \\
\intertext{and}
\abstractsubrank_\varphi(x) & = \sup\setbuild{\varphi(z)}{z\in S_0,z\preorderle x}.
\end{align}
Note that $\abstractrank_\varphi$ is an upper functional and $\abstractsubrank_\varphi$ is a lower functional, and they both agree with $\varphi$ on $S_0$ (therefore both are $S_0$-spectral). Clearly any monotone extension of $\varphi$ is between $\abstractsubrank_\varphi$ and $\abstractrank_\varphi$.
\begin{proposition}\leavevmode
\begin{enumerate}
\item If $g$ is an $S_0$-spectral upper functional then $g\ge\abstractsubrank_{g|_{S_0}}$ and the two agree on $S_0$.
\item If $f$ is an $S_0$-spectral lower functional then $f\le\abstractrank_{f|_{S_0}}$ and the two agree on $S_0$.
\end{enumerate}
\end{proposition}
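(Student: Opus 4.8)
The plan is to prove both parts by a single application of monotonicity together with reflexivity of the preorder, treating (ii) as the mirror image of (i).

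For part (i), set $\varphi := g|_{S_0}$, which lies in $\Delta(S_0,\preorderle)$ since $g$ is $S_0$-spectral. To establish $g\ge\abstractsubrank_{\varphi}$, fix $x\in S$ and let $z\in S_0$ be any element with $z\preorderle x$. Monotonicity of $g$ gives $g(z)\le g(x)$, and $z\in S_0$ gives $g(z)=\varphi(z)$, so $\varphi(z)\le g(x)$. Taking the supremum over all such $z$ yields $\abstractsubrank_{\varphi}(x)=\sup\setbuild{\varphi(z)}{z\in S_0,\ z\preorderle x}\le g(x)$. For the claimed agreement on $S_0$: if $x\in S_0$, then $x$ is itself an admissible witness by reflexivity $x\preorderle x$, so $\varphi(x)\le\abstractsubrank_{\varphi}(x)$; combining this with the inequality just obtained (for this particular $x$) gives $\abstractsubrank_{\varphi}(x)=g(x)=\varphi(x)$.

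Part (ii) is the dual argument. With $\varphi := f|_{S_0}\in\Delta(S_0,\preorderle)$, fix $x\in S$ and let $z\in S_0$ satisfy $z\preorderge x$; monotonicity of $f$ gives $f(x)\le f(z)=\varphi(z)$, and taking the infimum over such $z$ yields $f(x)\le\abstractrank_{\varphi}(x)$. If $x\in S_0$, reflexivity makes $x$ an admissible witness, so $\abstractrank_{\varphi}(x)\le\varphi(x)=f(x)$, forcing equality on $S_0$.

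I do not expect a genuine obstacle: the only points needing a word of justification are that the defining sets of $\abstractsubrank_{\varphi}$ and $\abstractrank_{\varphi}$ are nonempty (and the sup, resp.\ inf, finite), but this is exactly what is already recorded in the preceding remark that $\abstractsubrank_{\varphi}$ is a lower functional and $\abstractrank_{\varphi}$ an upper functional — nonemptiness coming from $\naturals\subseteq S_0$ together with $\abstractsubrank(x)\preorderle x\preorderle\abstractrank(x)$. Thus the proof is pure bookkeeping of monotonicity and reflexivity; the substance of the statement is conceptual, namely that any $S_0$-spectral upper (resp.\ lower) functional dominates (resp.\ is dominated by) the canonical extremal monotone extension of its own restriction to $S_0$, complementing the earlier observation that every monotone extension of $\varphi$ lies between $\abstractsubrank_{\varphi}$ and $\abstractrank_{\varphi}$.
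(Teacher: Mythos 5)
Your proof is correct and follows essentially the same route as the paper: monotonicity of $g$ (resp.\ $f$) applied to witnesses $z\in S_0$ with $z\preorderle x$ (resp.\ $z\preorderge x$), together with $g|_{S_0}=\varphi$, gives the inequality, and reflexivity gives agreement on $S_0$ (the paper records that agreement already in the remark preceding the proposition). Nothing further is needed.
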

\begin{proof}
Since $g$ is monotone, $z\preorderge x$ implies $g(z)\le g(x)$, therefore $\abstractsubrank_{g|_{S_0}}(x)\le g(x)$. The second part is similar.
\end{proof}

\begin{proposition}
In the setting of \cref{prop:functionalbasicconstructions}, if $G$ (respectively $(g_i)_{i\in I}$, $F$, $(f_i)_{i\in I}$, ${f_1,\dots,f_r}$) consists of $S_0$-spectral upper (lower) functionals, and their restrictions to $S_0$ are all equal to $\varphi\in\Delta(S_0,\preorderle)$, then $\bar{g}$ (respectively $\ubar{g}$, $\ubar{f}$, $\bar{f}$, $\geometricmean(f_1,\dots,f_r)$) is also $S_0$-spectral and its restriction to $S_0$ is also $\varphi$.

If in addition every functional in $G$ (respectively $(g_i)_{i\in I}$, $F$, $(f_i)_{i\in I}$) is $S_0$-regular, then $\bar{g}$ (respectively $\ubar{g}$, $\ubar{f}$, $\bar{f}$) is also $S_0$-regular.
\end{proposition}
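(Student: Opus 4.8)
The plan is to separate the two assertions and, within each, to handle all five (resp.\ four) constructions by a single uniform argument. For the first assertion I would just evaluate each combined functional at an arbitrary $x\in S_0$. In the four order-theoretic constructions every building-block functional already takes the value $\varphi(x)$ at $x$, so $\bar{g}(x)=\sup_{g\in G}\varphi(x)=\varphi(x)$, and likewise $\ubar{g}(x)=\ubar{f}(x)=\bar{f}(x)=\varphi(x)$. For $\geometricmean(f_1,\dots,f_r)$ I would use the defining formula from \cref{prop:functionalbasicconstructions}, namely $\geometricmean(f_1,\dots,f_r)(x)=\prod_{i=1}^r f_i(x)^{\theta(i)}=\prod_{i=1}^r\varphi(x)^{\theta(i)}=\varphi(x)^{\sum_i\theta(i)}=\varphi(x)$, since $\sum_i\theta(i)=1$. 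Hence in every case the restriction to $S_0$ equals $\varphi\in\Delta(S_0,\preorderle)$, so the combined functional is $S_0$-spectral with restriction $\varphi$.

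For the second assertion, assume in addition the building blocks are $S_0$-regular, fix $x\in S$ and a polynomial $p=\sum_{i=0}^d a_iT^i\in S_0[T]$. I would record two observations. First, the twisted polynomial $p_g=\sum_i g(a_i)T^i$ is independent of the chosen building block: each of them restricts to $\varphi$ on $S_0$ and the $a_i$ lie in $S_0$, so $g(a_i)=\varphi(a_i)$ and $p_g=p_\varphi:=\sum_i\varphi(a_i)T^i$ uniformly; moreover, by the first assertion the combined functional also restricts to $\varphi$ on $S_0$, so \emph{its} twisted polynomial is $p_\varphi$ as well. Consequently $S_0$-regularity of each building block reads $g(p(x))=p_\varphi(g(x))$ (and similarly for $g_i$, $f$, $f_i$). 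Second, $p_\varphi$ is a polynomial in one real variable with coefficients in $\nonnegativereals$ (because $\varphi$ is $\nonnegativereals$-valued), hence it restricts to a continuous, monotone nondecreasing function $[0,\infty)\to[0,\infty)$.

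Next I would invoke the elementary fact that a continuous nondecreasing function $[0,\infty)\to[0,\infty)$ commutes with the infimum of any nonempty subset (bounded below by $0$), with the supremum of any nonempty bounded-above subset, and with the limit of any monotone net (approximate the inf/sup/limit by a convergent sequence and use continuity, together with monotonicity for the trivial inequality). Combining this with the two observations gives, for instance,
\begin{equation}
\bar{g}(p(x))=\sup_{g\in G}g(p(x))=\sup_{g\in G}p_\varphi(g(x))=p_\varphi\Big(\sup_{g\in G}g(x)\Big)=p_\varphi(\bar{g}(x))=p_{\bar{g}}(\bar{g}(x)),
\end{equation}
and the same computation with $\inf$ replacing $\sup$, and with the monotone-net versions, settles $\ubar{g}$, $\ubar{f}$ and $\bar{f}$; throughout one uses that these combined functionals are $\nonnegativereals$-valued (equivalently, the relevant suprema are finite), which is part of the conclusion of \cref{prop:functionalbasicconstructions}.

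I do not expect a genuine obstacle here; the proof is bookkeeping. The only points that need attention are exactly the two observations above: that $p_g$ is common to all building blocks (so that ``the'' twisted polynomial of the combined functional is well defined and equals $p_\varphi$), and that a polynomial with nonnegative coefficients — being continuous and nondecreasing on $[0,\infty)$ — exchanges with the pertinent suprema, infima and monotone-net limits. I would also point out, as a sanity check on the statement, that this last step is precisely where the geometric mean fails, which is why it is excluded from the second assertion: in general $\prod_i p_\varphi(t_i)^{\theta(i)}\neq p_\varphi\big(\prod_i t_i^{\theta(i)}\big)$, already for $p_\varphi(t)=t+1$.
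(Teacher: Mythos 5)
Your proposal is correct and follows essentially the same route as the paper: the first assertion by evaluating on a constant family, and the second by observing that $p_g$ is the same polynomial $p_\varphi$ for every building block and for the combined functional, and then commuting the monotone (and continuous) map $p_\varphi$ with the supremum, infimum, or monotone net limit. Your explicit remarks that continuity is what licenses the exchange and that this step is exactly what fails for the geometric mean are accurate and consistent with the paper's \cref{rem:lowercounterexample}.
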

\begin{proof}
The first part is clear, since the infimum, supremum, and weighted geometric means of a constant family are all equal.

Suppose that $G$ is a set of $S_0$-regular upper functionals that restrict to $\varphi$ on $S_0$. Let $x\in S$. Then for all $g\in G$ and $p\in S_0[T]$ we have $g(p(x))=p_g(g(x))=p_{\bar{g}}(g(x))$. Since $p_{\bar{g}}$ is monotone increasing,
\begin{equation}
\bar{g}(p(x))
 = \sup_{g\in G}g(p(x))
 = \sup_{g\in G}p_{\bar{g}}(g(x))
 = p_{\bar{g}}\left(\sup_{g\in G}g(x)\right)
 = p_{\bar{g}}(\bar{g}(x)),
\end{equation}
therefore $\bar{g}$ is $S_0$-regular. The proofs for $\ubar{g}$, $\ubar{f}$, $\bar{f}$ are similar.
\end{proof}
\begin{remark}\label{rem:lowercounterexample}
The geometric mean of $S_0$-regular lower functionals, with equal restrictions to $S_0$, is in general not $S_0$-regular. For example, let $S=\positivereals^2\cup\{(0,0)\}$ with the componentwise operations and partial order. The functionals $(a,b)\mapsto a$ and $(a,b)\mapsto b$ are spectral points, therefore $\naturals$-regular lower functionals, but the map $(a,b)\mapsto \sqrt{ab}$ is not $\naturals$-regular (e.g. it sends $(1,1)$ to $1$, $(1,4)$ to $2$ and $(1,1)+(1,4)=(2,5)$ to $\sqrt{10}\neq 1+2$).
\end{remark}

\begin{example}
On the semiring of complex tensors with the restriction preorder, the quantum functionals $F^\theta$, parametrized with a distribution $\theta\in\distributions([k])$, form a closed set of spectral points. The asymptotic slice rank of a tensor $t$ is equal to $\min_{\theta\in\distributions([k])}F^\theta(t)$ \cite{christandl2021universal}, therefore the asymptotic slice rank is a regular lower functional.
\end{example}

Our next goal is to describe a construction that turns an $S_0$-spectral upper functional into an $S_0$-regular one.
\begin{definition}\label{def:abstractreg}
Let $g$ be an $S_0$-spectral upper functional. The $S_0$-regularization of $g$ is
\begin{equation}
\tilde{g}(x)=\inf_{\substack{p\in S_0[T]  \\  \deg p\ge 1}} p_g^{-1}(g(p(x))),
\end{equation}
where $x\in S$.
\end{definition}

\begin{remark}\label{rem:reggprop}\leavevmode
\begin{enumerate}
\item The identity is also a nonconstant polynomial, which implies $\tilde{g}\le g$.
\item If $g$ is $S_0$-regular then $p_g^{-1}(g(p(x))=p_g^{-1}(p_g(g(x)))=g(x)$ for all $x\in S$ and nonconstant $p\in S_0[T]$, therefore $\tilde{g}=g$.
\item If $g_1$ and $g_2$ are both $S_0$-spectral, $g_1|_{S_0}=g_2|_{S_0}$ and $g_1\le g_2$, then for all $x\in S$ and nonconstant $p\in S_0[T]$ we have $p_{g_1}^{-1}(g_1(p(x)))=p_{g_2}^{-1}(g_1(p(x)))\le p_{g_2}^{-1}(g_2(p(x)))$, therefore $\tilde{g}_1\le\tilde{g}_2$.
\item In particular, if $g$ is $\naturals$-spectral and $h$ is a spectral point, then $h\le g$ implies $h\le\tilde{g}$ (where $\tilde{g}$ is now the $\naturals$-regularization).
\end{enumerate}
\end{remark}

We will make use of an alternative expression for the $S_0$-regularization, which is similar to the definition of the functionals $\undertilde{\zeta}^\theta$ and $\undertilde{\zeta}_\theta$ in \cite[Section 7. (ii)]{strassen1991degeneration}.
\begin{proposition}\label{prop:twostepregularized}
The $S_0$-regularization $\tilde{g}$ of $g$ can also be expressed as
\begin{equation} \label{eq:regformulag}
\tilde{g}(x) =\lim_{n\to\infty}\sqrt[n]{\inf_{t\in S_0}\frac{g(tx^n)}{g(t)}}.
\end{equation}
\end{proposition}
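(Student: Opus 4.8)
The plan is to establish that the limit on the right-hand side of \eqref{eq:regformulag} exists and then to prove the two inequalities $\tilde g(x)\le L$ and $\tilde g(x)\ge L$ separately, where $L$ denotes that limit. Write $\varphi=g|_{S_0}\in\Delta(S_0,\preorderle)$ and $a_n=\inf_t g(tx^n)/g(t)$, the infimum taken over $t\in S_0$ with $g(t)\neq 0$; since $\varphi$ is a spectral point on a semiring with a Strassen preorder it is positive on nonzero elements, so this infimum runs over all nonzero $t\in S_0$, is over a nonempty set because $1\in S_0$, and $\varphi$ is multiplicative there. The case $x=0$ makes both sides $0$ directly from the definitions, so I assume $x\neq 0$; then $0<a_n<\infty$, since $g(tx^n)\le g(t)g(x)^n$ by submultiplicativity, while using that $g$ is submultiplicative and monotone with $g(r^n)\le r^n$ for $r\in\naturals$ satisfying $rx\preorderge 1$ gives $g(tx^n)\ge g(t)/r^n$. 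Next, $(a_n)$ is submultiplicative: for $t_1,t_2$ almost optimal for $a_m,a_n$ and $t=t_1t_2\in S_0$ (which is nonzero since $\varphi(t)=\varphi(t_1)\varphi(t_2)>0$), multiplicativity of $g$ on $S_0$ gives $g(t)=g(t_1)g(t_2)$, while submultiplicativity gives $g(tx^{m+n})=g((t_1x^m)(t_2x^n))\le g(t_1x^m)g(t_2x^n)$, so $a_{m+n}\le a_m a_n$ after dividing and sending the errors to $0$. By Fekete's lemma, $L=\lim_n\sqrt[n]{a_n}=\inf_n\sqrt[n]{a_n}$ exists.

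For $\tilde g(x)\le L$ I would substitute, for each $n\ge 1$ and nonzero $t\in S_0$, the nonconstant polynomial $p(T)=tT^n$ into \cref{def:abstractreg}: here $p_g(T)=g(t)T^n$, so $p_g^{-1}(g(p(x)))=(g(tx^n)/g(t))^{1/n}$. Taking the infimum over $t$, then over $n$, and recalling that $\tilde g(x)$ is an infimum over \emph{all} nonconstant polynomials, gives $\tilde g(x)\le\inf_n\sqrt[n]{a_n}=L$.

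The reverse inequality $\tilde g(x)\ge L$ carries the content. From $L=\inf_n\sqrt[n]{a_n}$ one gets $a_n\ge L^n$, that is, $g(tx^n)\ge L^n g(t)$ for all $t\in S_0$ and $n\ge 0$. Fix a nonconstant $p=\sum_{i=0}^d a_iT^i\in S_0[T]$ with $a_d\neq 0$; then $g(a_d)=\varphi(a_d)>0$, so $p_g$ has nonnegative coefficients and positive leading coefficient and is therefore a strictly increasing bijection of $[0,\infty)$ onto $[g(a_0),\infty)$, and since $g(p(x))\ge g(a_0)$ (because $a_0\preorderle p(x)$) it suffices to prove $g(p(x))\ge p_g(L)$. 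For this I would amplify by passing to powers: write $p(x)^N=p^N(x)=\sum_{j=0}^{Nd}b_jx^j$ with $b_j\in S_0$ (as $S_0$ is a subsemiring), and use that $p\mapsto p_g$ is a semiring homomorphism because $g$ is $S_0$-spectral, so $(p^N)_g=(p_g)^N$. Using that $0\preorderle y$ for every $y\in S$ (which follows by multiplying $0\preorderle 1$ by $y$), each summand obeys $b_jx^j\preorderle p(x)^N$, so monotonicity and submultiplicativity of $g$ give, for every $0\le j\le Nd$,
\begin{equation}
g(p(x))^N\ge g(p(x)^N)\ge g(b_jx^j)\ge L^jg(b_j).
\end{equation}
Averaging the outer quantities over $j$ and using $\sum_{j=0}^{Nd}L^jg(b_j)=(p_g)^N(L)=p_g(L)^N$ yields $g(p(x))^N\ge(Nd+1)^{-1}p_g(L)^N$, hence $g(p(x))\ge(Nd+1)^{-1/N}p_g(L)$; letting $N\to\infty$ gives $g(p(x))\ge p_g(L)$. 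Applying the order-preserving inverse $p_g^{-1}$ then gives $p_g^{-1}(g(p(x)))\ge L$, and taking the infimum over $p$ gives $\tilde g(x)\ge L$.

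I expect this last inequality to be the main obstacle. Because $g$ is only subadditive and submultiplicative, these properties bound $g$ of composite expressions only from above, so a \emph{lower} bound on $g(p(x))$ has to be extracted from monotonicity applied to the individual monomial summands; a single power ($N=1$) yields merely $g(p(x))\ge\max_i L^i g(a_i)$, and it is the passage to high powers together with the loss of only the subexponential factor $Nd+1$ that upgrades this ``max'' into the ``sum'' $p_g(L)$ in the limit. The remaining points — that $p_g$ is a strictly increasing bijection so that $p_g^{-1}$ is order-preserving on the relevant interval, and that $S_0$-spectrality makes $p\mapsto p_g$ multiplicative — are routine.
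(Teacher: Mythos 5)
Your proof is correct and follows essentially the same route as the paper: the easy inequality via the monomials $tT^n$, and the hard one by raising $p$ to a high power, lower-bounding the resulting sum (in the semiring preorder) by a single term, and absorbing the polynomially many terms in the $N\to\infty$ limit. The only difference is bookkeeping: where the paper introduces the auxiliary functional $g'$ and proves its supermultiplicativity over $S_0$ before expanding $p(x)^n$ into type classes, you use the direct consequence $g(tx^j)\ge L^j g(t)$ of Fekete's lemma and group the expansion of $p(x)^N$ by degree, which streamlines the same argument.
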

\begin{proof}

The inequality
\begin{equation}\label{eq:regformulaineq1}
   \tilde{g}(x)\le \lim_{n\to\infty}\sqrt[n]{\inf_{t\in S_0}\frac{g(tx^n)}{g(t)}} 
\end{equation}
follows from the definition of $\tilde{g}$ by choosing the polynomials $tT^n$ (with $T$ the indeterminate), and replacing the infimum with a limit, which we can do by submultiplicativity.

For the converse inequality, let us denote the right hand side of \eqref{eq:regformulag} by $g'(x)$. The submultiplicativity of $g$ implies that that $g'\le g$. Since $g$ is $S_0$-spectral, it coincides with $g'$ on $S_0$, and therefore $g'$ is multiplicative on $S_0$ and $g'(n)=g(n)=n$ for all $n\in\naturals$.

Using the fact that $g$ is multiplicative on $S_0$, for $x\in S, u\in S_0$ we get
\begin{equation}\label{eq:subsemiringmultip}
\begin{split}
    g'(ux)&=\lim_{n\to\infty}\sqrt[n]{ \inf_{t\in S_0\backslash \{0\}} \frac{g(tu^nx^n)}{g(t)}}\\&= \lim_{n\to\infty}\sqrt[n]{g(u^n) \inf_{t\in S_0\backslash \{0\}} \frac{g(tu^nx^n)}{g(u^n)g(t)}}\\ &=
    g'(u)\lim_{n\to\infty}\sqrt[n]{ \inf_{t\in S_0\backslash \{0\}} \frac{g(tu^nx^n)}{g(u^nt)}} \\&\ge g'(u)g'(x).
\end{split} 
\end{equation}

Since $g$ is submultiplicative and $p_g^{-1}$ is monotone increasing for all $p\in S_0[T]$, the $S_0$-regularization may be written as
\begin{equation}
\begin{split}
\tilde{g}(x)
 & = \inf_p p_g^{-1}(g(p(x)))  \\
 & = \inf_p p_g^{-1}(\sqrt[n]{g(p(x)^n)}).
\end{split}
\end{equation}
We write the $n$-th power of a general polynomial $p\in S_0[T]$, evaluated at $x$, in the form 
\begin{equation}\label{eq:polynomialpower}
    p(x)^n=\sum_{Q\in\partitions[n]([0,d])}|T_Q^n|\prod_{i=0}^d(t_ix^i)^{nQ(i)},
\end{equation}
where $\distributions[n]([0,d])$ is the set of probability distributions on $[0,d]=\{0,\dots,d\}$ that assign multiples of $1/n$ to each point ($n$-types on $[0,d]$), and $T_Q^n$ is the set of strings in $[0,d]^n$ where each number $i$ occurs $nQ(i)$ times (type class).
\begin{equation}\label{eq:reggineq}
    \begin{split}
    \tilde{g}(x)&=
    \inf_p p_g^{-1}\bigg(\lim_{n\to\infty}\big(g(\sum_{Q\in\partitions[n]([0,d])}\lvert T_Q^n\rvert\prod_{i=0}^d(t_ix^i)^{nQ(i)})\big)^{\frac{1}{n}} \bigg)
    \\ 
    &\ge \inf_p p_g^{-1}\bigg(\lim_{n\to\infty}\big(\max_{Q\in\partitions[n]([0,d])} g(\lvert T_Q^n\rvert\prod_{i=0}^d(t_ix^i)^{nQ(i)})\big)^{\frac{1}{n}} \bigg)
    \\ &\ge 
    \inf_p p_g^{-1}\bigg(\lim_{n\to\infty}\big(\max_{Q\in\partitions[n]([0,d])} g'(\lvert T_Q^n\rvert\prod_{i=0}^d(t_ix^i)^{nQ(i)})\big)^{\frac{1}{n}}\bigg)\\
    &\ge\inf_p p_g^{-1}\bigg(\lim_{n\to\infty}\big(\max_{Q\in\partitions[n]([0,d])} g'(\lvert T_Q^n\rvert\prod_{i=0}^dt_i^{nQ(i)})
     g'(\prod_{i=0}^dx^{inQ(i)})  \big)^{\frac{1}{n}}\bigg)\\
    &=\inf_p p_g^{-1}\bigg(\lim_{n\to\infty}\bigg(\max_{Q\in\partitions[n]([0,d])} \lvert T_Q^n\rvert(\prod_{i=0}^d g'(t_i)^{nQ(i)})g'(x)^{\sum_{i=0}^dniQ(i)}\bigg)^{\frac{1}{n}}\bigg)\\ &\ge
    \inf_p p_g^{-1}\bigg(\lim_{n\to\infty}\bigg((n+1)^{-(d+1)}p_g(g'(x))^n\bigg)^{\frac{1}{n}}\bigg)\\[12pt]&=
    \inf_p p_g^{-1}(p_g(g'(x)))=g'(x),
\end{split}
\end{equation}
where the first inequality is due to the lower bound on the sum in \eqref{eq:polynomialpower} -- in the semiring preorder sense -- by only one of its elements\footnote{Note that an analogous upper bound -- i.e. taking the maximal element multiplied by the number of elements -- does not hold in general, which prevents us to perform a similar proof for lower functionals. In fact we know by \cref{rem:lowermaincounterexample} below that the analogous statement for lower functionals cannot hold.}. The third is by \eqref{eq:subsemiringmultip}, then the equality is due to the multiplicativity of $g'$ on $S_0$ and that $g'(x^i)=g'(x)^i$ for $i\in \naturals$ by the definition of $g'$. In the last inequality we bound the polynomial $p_g(g'(x))^n$, expanded in a similar way as in \eqref{eq:polynomialpower}, by the number of terms times the largest term, using that there are at most $(n+1)^{d+1}$ terms.
\end{proof}
If $S_0=\naturals$, then the infimum in \eqref{eq:regformulag} can be replaced with a limit $t\to\infty$ by subadditivity.

\begin{proposition}\label{prop:gregupper}
    $\tilde{g}$ is an upper functional.
    
\end{proposition}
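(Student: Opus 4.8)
The plan is to verify directly that $\tilde{g}$ satisfies all the defining properties of an upper functional in \cref{def:abstractlowerupper}: normalization ($\tilde{g}(0)=0$, $\tilde{g}(1)=1$), monotonicity, subadditivity, and submultiplicativity. It is convenient to work with the expression \eqref{eq:regformulag} from \cref{prop:twostepregularized}, namely $\tilde{g}(x)=\lim_{n\to\infty}\sqrt[n]{\inf_{t\in S_0}g(tx^n)/g(t)}$, and to recall from the proof of that proposition that $\tilde{g}\le g$, that $\tilde{g}$ agrees with $g$ on $S_0$ (hence is multiplicative there and $\tilde{g}(n)=n$), and the submultiplicativity-in-$u$ estimate \eqref{eq:subsemiringmultip}.

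First the easy properties. Normalization is immediate: $\tilde{g}(0)=0$ and $\tilde{g}(1)=1$ since $0,1\in S_0$ and $\tilde{g}|_{S_0}=g|_{S_0}$. For monotonicity, suppose $x\preorderle y$. Then $x^n\preorderle y^n$ for all $n$, hence $tx^n\preorderle ty^n$ for all $t\in S_0$ by compatibility of the preorder with multiplication, so $g(tx^n)\le g(ty^n)$ by monotonicity of $g$; dividing by $g(t)$, taking the infimum over $t$, the $n$-th root and the limit gives $\tilde{g}(x)\le\tilde{g}(y)$.

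For submultiplicativity, fix $x,y\in S$ and $n\in\naturals$. Given $s,t\in S_0\setminus\{0\}$, choose any $u\in S_0$ with $u\preorderge g$-rank witnesses — more simply, use that $g$ is multiplicative on $S_0$ and submultiplicative on $S$: with the substitution $t\mapsto st$ we get $g(st\,x^ny^n)\le g(sx^n)g(ty^n)$, so $g(st\,x^ny^n)/g(st)=g(st\,x^ny^n)/(g(s)g(t))\le (g(sx^n)/g(s))(g(ty^n)/g(t))$. Taking infima over $s$ and over $t$ on the right, then the infimum over the product $st$ on the left (which only ranges over a subset of $S_0$, so the inequality is preserved after taking the infimum over all of $S_0$ on the left), and finally $n$-th roots and the limit $n\to\infty$, yields $\tilde{g}(xy)\le\tilde{g}(x)\tilde{g}(y)$. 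Subadditivity is handled the same way: from $g((s+t)(x+y)^n)\le g(s(x+y)^n)+g(t(x+y)^n)$ one would want a bound in terms of $x^n$ and $y^n$ separately, so instead expand $(x+y)^n=\sum_{k=0}^n\binom{n}{k}x^ky^{n-k}$, use subadditivity of $g$ to bound $g(t(x+y)^n)$ by $\sum_k\binom{n}{k}g(tx^ky^{n-k})$, bound this sum by $(n+1)$ times its largest term, relate $g(tx^ky^{n-k})$ to $\tilde{g}(x)^k\tilde{g}(y)^{n-k}$ up to subexponential factors using the already-established submultiplicativity-type estimates and the $S_0$-multiplicativity of $g$, and conclude after taking $n$-th roots that $\tilde{g}(x+y)\le\tilde{g}(x)+\tilde{g}(y)$ via $\lim_n\sqrt[n]{(n+1)\max_k\binom{n}{k}\tilde g(x)^k\tilde g(y)^{n-k}}=\tilde g(x)+\tilde g(y)$.

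The main obstacle is the bookkeeping in subadditivity: making the passage from $g$ evaluated on $t(x+y)^n$ to the separated quantities $\tilde g(x)^k\tilde g(y)^{n-k}$ rigorous requires inserting an auxiliary $t\in S_0$, controlling the ratio $g(tx^ky^{n-k})/g(t)$ uniformly enough to feed into the limit defining $\tilde g$, and checking that the subexponential prefactors (binomial coefficients, the $(n+1)$ from collapsing the sum, and any polynomial factors from multiplicativity estimates on $S_0$) all disappear under $\lim_n\sqrt[n]{\cdot}$. This is routine but is where the argument needs care; everything else is a short manipulation of the defining limit.
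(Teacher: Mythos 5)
Your proposal follows essentially the same route as the paper's proof: it works from the formula of \cref{prop:twostepregularized}, handles normalization and monotonicity directly, proves submultiplicativity by passing to the infimum over products $st\in S_0$ and using multiplicativity of $g$ on $S_0$, and proves subadditivity via the binomial expansion of $(x+y)^n$. The one place your sketch is looser than the paper is the final estimate in subadditivity: what is actually needed is not a subexponential prefactor but the bound $\inf_t g(tx^i)/g(t)\le c\,(\tilde g(x)+\epsilon)^i$ uniformly in $i$ for each $\epsilon>0$ (the paper's \eqref{eq:ggregepsilonclose}), after which the binomial sum collapses to $c^2(\tilde g(x)+\tilde g(y)+2\epsilon)^n$ and $\epsilon\to 0$ finishes the argument.
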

\begin{proof}
Since $\tilde{g}$ agrees with $g$ on $S_0$, we have $\tilde{g}(0)=0$ and $\tilde{g}(1)=1$. It follows directly from the definition that $\tilde{g}$ is monotone, i.e. $\tilde{g}(x)\le \tilde{g}(y)$ when $x\preorderle y$. 
From \cref{prop:twostepregularized} the submultiplicativity of $\tilde{g}$ follows as
\begin{equation}\label{eq:gregmultip}
\begin{split}
    \tilde{g}(xy)&=\lim_{n\to\infty}\sqrt[n]{\inf_{t\in S_0\backslash \{0\}} \frac{g(tx^ny^n)}{g(t)}}\\
    &=
    \lim_{n\to\infty}\sqrt[n]{\inf_{t\in S_0\backslash \{0\}} \inf_{u\in S_0\backslash \{0\}} \frac{g(tux^ny^n)}{g(tu)}}\\[3pt]
    &\leq \lim_{n\to\infty}\sqrt[n]{\inf_{t\in S_0\backslash \{0\}} \frac{g(tx^n)}{g(t)} \inf_{u\in S_0\backslash \{0\}} \frac{g(uy^n)}{g(u)}}\\[7pt]
    &= \tilde{g}(x)\tilde{g}(y).
\end{split}
\end{equation}

The subadditivity follows as:
\begin{equation}\label{eq:gregsubadd}
\begin{split}
    \tilde{g}(x+y)&= \lim_{n\to\infty}\sqrt[n]{\inf_{t\in S_0\backslash \{0\}} \frac{g(t(x+y)^n)}{g(t)}} \\ 
    &= \lim_{n\to\infty}\sqrt[n]{\inf_{t\in S_0\backslash \{0\}}\frac{g\big(t\sum_{i=0}^n {\binom{n}{i}}x^iy^{n-i}\big)}{g(t)}} \\ 
    &\leq \lim_{n\to\infty}\sqrt[n]{\inf_{t_1\in S_0\backslash \{0\}}\dots
    \inf_{t_n\in S_0\backslash \{0\}}
    \sum_{i=0}^n {\binom{n}{i}} \frac{g\big(t_1\dots t_nx^iy^{n-i}\big)}{g(t_1\dots t_n)}} \\ 
    &\leq \lim_{n\to\infty}\sqrt[n]{\inf_{t_1\in S_0\backslash \{0\}}\dots
    \inf_{t_n\in S_0\backslash \{0\}}
    \sum_{i=0}^n {\binom{n}{i}} \frac{g\big(t_ix^iy^{n-i}\big)}{g(t_i)}} \\ 
    &=  \lim_{n\to\infty}\sqrt[n]{
    \sum_{i=0}^n {\binom{n}{i}} \inf_{t_i\in S_0\backslash \{0\}} \frac{g\big(t_ix^iy^{n-i}\big)}{g(t_i)}} \\ 
    &\le  \lim_{n\to\infty}\sqrt[n]{
    \sum_{i=0}^n {\binom{n}{i}} \left(\inf_{t\in S_0\backslash \{0\}} \frac{g\big(tx^i\big)}{g(t)}\right)
    \left(\inf_{u\in S_0\backslash \{0\}}\frac{g\big(uy^{n-i}\big)}{g(u)}\right)}, 
\end{split}
\end{equation}
where in the last inequality we do the same conversions as in \eqref{eq:gregmultip}.
Now we use that $\lim_{i\to\infty} (\inf_{t\in S_0\backslash \{0\}} \frac{g(tx^i)}{g(t)})^{\frac{1}{i}}=\tilde{g}(x)$
\begin{equation}\label{eq:ggregepsilonclose}
\begin{split}
\implies\quad  \forall\epsilon>0,  \exists c>0, \forall i\in\naturals: \quad &\inf_{t\in S_0\backslash \{0\}} \frac{g\big(tx^i\big)}{g(t)}\leq c (\tilde{g}(x)+\epsilon)^i,\\
&   \inf_{u\in S_0\backslash \{0\}} \frac{g\big(uy^{n-i}\big)}{g(u)}\leq c (\tilde{g}(y)+\epsilon)^{n-i}
\end{split}
\end{equation}
so 
\begin{equation}
\begin{split}
    \sum_{i=0}^n {\binom{n}{i}} \Big(\inf_{t\in S_0\backslash \{0\}} \frac{g\big(tx^i\big)}{g(t)}\Big)
    \Big(\inf_{u\in S_0\backslash \{0\}}\frac{g\big(uy^{n-i}\big)}{g(u)}\Big)
     & \le c^2\sum_{i=0}^n {\binom{n}{i}}\bigg(\tilde{g}(x)+\epsilon\bigg)^i\bigg(\tilde{g}(y)+\epsilon\bigg)^{n-i}  \\
     & =  c^2(\tilde{g}(x)+\tilde{g}(y)+2\epsilon)^n.
\end{split}
\end{equation}
The statement follows by substituting this to \eqref{eq:gregsubadd}.
\end{proof}

\begin{proposition}
     The $S_0$-regularization $\tilde{g}$ of $g$ is $S_0$-regular.
\end{proposition}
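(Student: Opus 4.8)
The plan is to verify the defining identity of $S_0$-regularity directly, namely $\tilde g(p(x)) = p_{\tilde g}(\tilde g(x))$ for every $x\in S$ and every $p\in S_0[T]$. Since (as noted in the proof of \cref{prop:twostepregularized}) $\tilde g$ agrees with $g$ on $S_0$, we have $p_{\tilde g}=p_g$, and the constant-polynomial case is immediate, so it suffices to treat nonconstant $p$ and to show $\tilde g(p(x))=p_g(\tilde g(x))$. Two preliminary remarks will be used throughout. First, because $g|_{S_0}$ is a monotone semiring homomorphism, the map $S_0[T]\to\nonnegativereals[T]$, $q\mapsto q_g$ (apply $g$ to each coefficient), is a semiring homomorphism fixing $T$; consequently it is compatible with composition, $(r\circ q)_g = r_g\circ q_g$, and it preserves degree, since $g$ is strictly positive on $S_0\setminus\{0\}$ (for nonzero $t\in S_0\subseteq S$ the Strassen condition gives $m\in\naturals$ with $mt\preorderge 1$, so $m\,g(t)=g(mt)\ge 1$). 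In particular $q_g$ is a strictly increasing polynomial on $\nonnegativereals$ whenever $\deg q\ge 1$, so $q_g^{-1}$ is well defined and increasing on the relevant half-line.

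The inequality $\tilde g(p(x))\le p_g(\tilde g(x))$ uses only that $\tilde g$ is an upper functional (\cref{prop:gregupper}) restricting to $g$ on $S_0$: writing $p=\sum_{i=0}^d a_iT^i$, subadditivity and submultiplicativity of $\tilde g$ give
\begin{equation*}
\tilde g(p(x))=\tilde g\Big(\textstyle\sum_{i=0}^d a_ix^i\Big)\le\sum_{i=0}^d g(a_i)\tilde g(x)^i=p_g(\tilde g(x)).
\end{equation*}

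The reverse inequality is the crux, and it falls out of the definition of $\tilde g$. That definition says precisely that for \emph{every} nonconstant $q\in S_0[T]$ one has $\tilde g(x)\le q_g^{-1}(g(q(x)))$; applying the increasing function $q_g$ yields $g(q(x))\ge q_g(\tilde g(x))$. Now let $r\in S_0[T]$ be an arbitrary nonconstant polynomial. Then $r\circ p$ is again nonconstant, so the previous observation applied to $q=r\circ p$, together with $r(p(x))=(r\circ p)(x)$ and $(r\circ p)_g=r_g\circ p_g$, gives
\begin{equation*}
g\big(r(p(x))\big)=g\big((r\circ p)(x)\big)\ge (r\circ p)_g(\tilde g(x))=r_g\big(p_g(\tilde g(x))\big),
\end{equation*}
hence $r_g^{-1}\big(g(r(p(x)))\big)\ge p_g(\tilde g(x))$. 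Taking the infimum over all nonconstant $r$ gives exactly $\tilde g(p(x))\ge p_g(\tilde g(x))$, which together with the previous paragraph completes the proof.

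The argument is short, so there is no serious obstacle; the only points demanding attention are the bookkeeping ones handled in the first paragraph — that $q\mapsto q_g$ respects composition and degree, so that the various $q_g^{-1}$ are genuinely defined and monotone on the relevant half-lines — and this is exactly where the hypothesis that $g$ is $S_0$-spectral (so that $g|_{S_0}$ is a faithful semiring homomorphism) is used.
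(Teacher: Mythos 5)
Your proof is correct and follows essentially the same route as the paper's: the inequality $\tilde g(p(x))\le p_g(\tilde g(x))$ from subadditivity and submultiplicativity of $\tilde g$ (\cref{prop:gregupper}), and the reverse inequality by composing polynomials and using $(r\circ p)_g=r_g\circ p_g$ together with the definition of $\tilde g$ as an infimum. The extra care you take with positivity of $g$ on $S_0\setminus\{0\}$ and the invertibility of $q_g$ is a welcome bookkeeping detail that the paper leaves implicit.
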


\begin{proof}
By \cref{prop:gregupper} we have to show only $\tilde{g}(q(x))\ge q_g(\tilde{g}(x))$ for all $q\in S_0[T]$. Let $q\in S_0[T]$.
\begin{equation}\label{eq:gregsuper}
\begin{split}
\tilde{g}(q(s))
 & = \inf_{p\in S_0[T]}  (q_g\circ q_g^{-1})p_g^{-1}\bigg(g\big((p\circ q)(s)\big)\bigg)  \\
 & = \inf_{p\in S_0[T]}  q_g \bigg((p_g\circ q_g)^{-1}g\big((p\circ q)(s)\big)\bigg)  \\
 & = q_g\bigg(\inf_{\substack{r=p\circ q\\ p\in S_0[T]}} r_g^{-1}(g(r(s))) \bigg)  \\
 & \ge q_g\bigg(\inf_{r\in S_0[T]} r_g^{-1}(g(r(s))) \bigg)=q_g(\tilde{g}(s)).
\end{split}
\end{equation}
From the submultiplicativity and subadditivity the converse inequality also follows, so in fact in \eqref{eq:gregsuper} we have an equality.

\end{proof}

\begin{remark}\label{rem:successivereg}\leavevmode
\begin{enumerate}
\item By successive application of regularization on a $\naturals$-spectral upper functional one can construct an $S_0$-regular upper functional for any finitely generated subsemiring $S_0$.

\item An interesting consequence is that any $\naturals$-spectral and multiplicative upper functional is also additive, i.e. a spectral point. To see this, one observes that \eqref{eq:regformulag} implies that $\tilde{g}=g$, therefore a multiplicative upper functional is $S_0$-spectral iff it is $S_0$-regular. Using \cref{prop:regularimpliesspectral} we see that if $g$ is multiplicative and $\naturals$-spectral, then it is also $\langle x\rangle$-spectral, therefore also $\langle x\rangle$-regular for all $x\in S$, which implies additivity.

In contrast, a normalized multiplicative \emph{lower} functional need not be a spectral point (see \cref{rem:lowercounterexample}).
\end{enumerate}
\end{remark}

If a lower functional $f$ is a lower bound on every element of the subset $H\subseteq\Delta(S,\preorderle)$, then the upper functional $g(x)=\sup_{h\in H}h(x)$ also satisfies $f\le g$. However, the inequality $f\le g$ does not even guarantee the existence of a single spectral point between $f$ and $g$. In the next definition we introduce a stronger property which, as we will show below, implies that the regularization of $g$ is the maximum of the spectral points between $f$ and $g$.
\begin{definition}
Let $f$ be a lower functional and $g$ an upper functional. We say that $g$ is \emph{$f$-supermultiplicative} if for all $x,y\in S$ the inequality $f(x)g(y)\le g(xy)$ holds.
\end{definition}
Note that if $g$ is $f$-supermultiplicative then $f\le g$.

\begin{example}\label{ex:fsupermultiplcative}\leavevmode
\begin{enumerate}
\item On the semiring of tensor classes over an arbitrary field, the upper support functional $\zeta^\theta$ is $\zeta_\theta$-supermultiplicative \cite[Corollary 4.2.]{strassen1991degeneration}.
\item On the semiring of graphs (with disjoint union, strong product, and the cohomomorphism preorder) \cite{zuiddam2019asymptotic}, the fractional clique cover number $\complement{\chi}_f$ is a spectral point and the clique cover number $\complement{\chi}$ is an upper functional, and $\complement{\chi}$ is $\complement{\chi}_f$-supermultiplicative \cite[Proposition 3.4.4]{scheinerman2011fractional}.
\item If $f$ is a lower functional and $h\in\Delta(S,\preorderle)$ such that $f\le h$, then $f(x)h(y)\le h(x)h(y)=h(xy)$, therefore $h$ is $f$-supermultiplicative.
\item If $g$ is $S_0$-regular then it is also $\abstractsubrank_{g|_{S_0}}$-supermultiplicative. To see this, let $x,y\in S$ and $z\in S_0$ and suppose that $z\preorderle x$. Then $xy\preorderge zy$, therefore $g(xy)\ge g(zy)=g(z)g(y)$. Taking the supremum over all $z\preorderle x$ gives $g(xy)\ge \abstractsubrank_{g|_{S_0}}(x)g(y)$.
\item In particular, if $g$ is regular then it is also $Q$-supermultiplicative. In fact, $g(nx)=nx$ for all $n\in\naturals$ and $x\in S$ is sufficient for this (which in turn is implied by additivity).
\end{enumerate}
\end{example}

$f$-supermultiplicativity is preserved by the constructions discussed in \cref{prop:functionalbasicconstructions,def:abstractreg}:
\begin{proposition}\label{prop:fsupermultiplicativeconstructions}
Let $f$ be a lower functional.
\begin{enumerate}
\item\label{it:fsupermultiplicativesupremum} The supremum of a nonempty set of $f$-supermultiplicative upper functionals is also $f$-supermultiplicative.
\item\label{it:fsupermultiplicativeinfimum} The infimum of a monotone decreasing net of $f$-supermultiplicative upper functionals is also $f$-supermultiplicative.
\item\label{it:fsupermultiplicativeregularization} The $S_0$-regularization of an $S_0$-spectral $f$-supermultiplicative upper functional is also $f$-supermultiplicative.
\end{enumerate}
\end{proposition}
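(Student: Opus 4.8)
The plan is to treat the three constructions one at a time, and in each case to rely on the fact—already established in \cref{prop:functionalbasicconstructions} and \cref{prop:gregupper}—that the output is an upper functional, so that only the defining inequality $f(x)h(y)\le h(xy)$ for $f$-supermultiplicativity remains to be verified. For \cref{it:fsupermultiplicativesupremum}, fix $x,y\in S$; for every $g$ in the given set of $f$-supermultiplicative upper functionals we have $f(x)g(y)\le g(xy)\le\bar g(xy)$, and taking the supremum over $g$ on the left, together with the fact that multiplication by the nonnegative scalar $f(x)$ commutes with suprema of sets of nonnegative reals, yields $f(x)\bar g(y)\le\bar g(xy)$. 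For \cref{it:fsupermultiplicativeinfimum} the identical manipulation works with infima: from $f(x)g_i(y)\le g_i(xy)$ for every index $i$, take the infimum over $i$ and use $\inf_i f(x)g_i(y)=f(x)\inf_i g_i(y)$ to get $f(x)\ubar g(y)\le\ubar g(xy)$. I expect these two to be immediate; note that the directedness of the net is needed only to ensure $\ubar g$ is an upper functional, not for the supermultiplicativity step.

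The only part with genuine content is \cref{it:fsupermultiplicativeregularization}. Here I would use the closed-form expression
\[
\tilde g(z)=\lim_{n\to\infty}\sqrt[n]{\inf_{t\in S_0\setminus\{0\}}\frac{g(tz^n)}{g(t)}}
\]
from \cref{prop:twostepregularized}. Applying it with $z=xy$ gives $\tilde g(xy)=\lim_{n\to\infty}\big(\inf_{t}g(tx^ny^n)/g(t)\big)^{1/n}$. For each fixed $n$ and each $t\in S_0\setminus\{0\}$, the $f$-supermultiplicativity of $g$—applied to the pair of elements $x^n$ and $ty^n$—gives $g(tx^ny^n)=g\big(x^n\cdot(ty^n)\big)\ge f(x^n)g(ty^n)$, and the supermultiplicativity of the lower functional $f$ gives $f(x^n)\ge f(x)^n$. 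Hence $\inf_{t}g(tx^ny^n)/g(t)\ge f(x)^n\inf_{t}g(ty^n)/g(t)$; taking $n$-th roots, the monotonicity of the root preserves the inequality, and letting $n\to\infty$ (the relevant limits exist by \cref{prop:twostepregularized}) yields $\tilde g(xy)\ge f(x)\tilde g(y)$. The case $f(x)=0$ is trivial.

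In short, parts (i) and (ii) are formal manipulations of suprema and infima, and part (iii) is a two-line estimate once the regularization is rewritten via \cref{prop:twostepregularized}. I do not anticipate a real obstacle; the only point requiring care is the bookkeeping in part (iii)—one must apply $f$-supermultiplicativity of $g$ to the pair $(x^n, ty^n)$ (rather than to $x$ and something else) precisely so that the extracted factor is $f(x^n)$, which can then be bounded below by $f(x)^n$ to survive the extraction of $n$-th roots.
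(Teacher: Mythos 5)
Your proposal is correct and follows essentially the same route as the paper: parts (i) and (ii) by elementary supremum/infimum manipulations (the paper uses an $\epsilon$-argument, yours factors the scalar directly, which is equivalent), and part (iii) via the expression of \cref{prop:twostepregularized}, applying $f$-supermultiplicativity of $g$ inside the infimum and $f(x^n)\ge f(x)^n$. The only cosmetic difference is that you extract $f(x^n)$ and keep $g(ty^n)$, whereas the paper extracts $f(y^n)$ and keeps $g(tx^n)$; by commutativity of the semiring these yield the same conclusion.
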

\begin{proof}
\ref{it:fsupermultiplicativesupremum}:
Let $G$ be a nonempty set of $f$-supermultiplicative upper functionals. $\bar{g}(x)=\sup_{g\in G}g(x)$ is an upper functional by \cref{prop:functionalbasicconstructions}. Let $x,y\in S$ and $\epsilon>0$, and choose $g\in G$ such that $g(y)\ge\bar{g}(y)-\epsilon$. Using $f$-supermultiplicativity of $g$, we have
\begin{equation}
f(x)\bar{g}(y)
 \le f(x)(g(y)+\epsilon)
 \le g(xy)+\epsilon f(x)
 \le \bar{g}(xy)+\epsilon f(x).
\end{equation}
Since this is true for every $\epsilon>0$, the inequality $f(x)\bar{g}(y)\le\bar{g}(xy)$ also holds.

\ref{it:fsupermultiplicativeinfimum}:
Let $(I,\le)$ and $(g_i)_{i\in I}$ be as in \cref{prop:functionalbasicconstructions}, so that $\ubar{g}(x)=\inf_{i\in I}g_i(x)$ is an upper functional. Suppose that $g_i$ is $f$-supermultiplicative for all $i\in I$. Let $x,y\in S$ and $\epsilon>0$, and choose $i\in I$ such that $g_i(xy)\le\ubar{g}(xy)+\epsilon$. Then
\begin{equation}
f(x)\ubar{g}(y)
 \le f(x)g_i(y)
 \le g_i(xy)
 \le \ubar{g}(xy)+\epsilon.
\end{equation}
Since this is true for every $\epsilon>0$, the inequality $f(x)\ubar{g}(y)\le\ubar{g}(xy)$ also holds.

\ref{it:fsupermultiplicativeregularization}:
Let $g$ be $S_0$-spectral and $f$-supermultiplicative, and let $\tilde{g}$ be its $S_0$-regularization. Using the characterization in \cref{prop:twostepregularized} we obtain for $x,y\in S$ the inequality
\begin{equation}
\tilde{g}(xy)
 = \lim_{n\to\infty}\sqrt[n]{\inf_{t\in S_0}\frac{g(tx^ny^n)}{g(t)}}
 \ge \lim_{n\to\infty}\sqrt[n]{\inf_{t\in S_0}\frac{g(tx^n)f(y^n)}{g(t)}}
 \ge \tilde{g}(x)f(y),
\end{equation}
therefore $\tilde{g}$ is $f$-supermultiplicative.
\end{proof}

\begin{corollary}
If $f$ is a lower functional, $H\subseteq\Delta(S,\preorderle)$ is closed and $f\le h$ for all $h\in H$, then $\bar{h}(x)=\max_{h\in H}h(x)$ is an $f$-supermultiplicative regular upper functional.
\end{corollary}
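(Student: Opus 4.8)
The plan is to obtain the corollary by assembling three facts already established above: suprema of upper functionals are upper functionals, $f$-supermultiplicativity is preserved under suprema, and regularity is preserved under suprema of families of functionals sharing a common restriction to $\naturals$. We may assume $H\neq\emptyset$, as otherwise the statement is vacuous.

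First I would check that $\bar h$ is well defined and the maximum is attained. Since $\Delta(S,\preorderle)$ is compact in the topology of pointwise convergence and $H$ is closed, $H$ is itself compact; for each fixed $x\in S$ the evaluation $h\mapsto h(x)$ is continuous and bounded above by $\abstractrank(x)<\infty$, so the supremum is finite and attained, which justifies writing $\bar h(x)=\max_{h\in H}h(x)$. (If one prefers to avoid this topological input, one can work with the supremum throughout; the remaining arguments are unchanged.) Because every $h\in H$ is a spectral point, hence in particular an upper functional, \cref{prop:functionalbasicconstructions} shows that $\bar h$ is an upper functional.

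For regularity, I would use that a spectral point $h$ is additive and multiplicative, so $h(p(x))=p(h(x))=p_h(h(x))$ for every $p\in\naturals[T]$, i.e.\ $h$ is regular, and that every spectral point restricts to the identity on $\naturals\subseteq S$ (the unique element of $\Delta(\naturals,\le)$). The hypotheses of the proposition on suprema of $S_0$-regular upper functionals with common restriction to $S_0$ then apply with $S_0=\naturals$, and it follows that $\bar h$ is regular. Equivalently, one checks this directly: since a polynomial with natural coefficients is monotone increasing and continuous, $\bar h(p(x))=\sup_{h\in H}h(p(x))=\sup_{h\in H}p(h(x))=p\bigl(\sup_{h\in H}h(x)\bigr)=p(\bar h(x))$.

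For $f$-supermultiplicativity I would invoke \cref{ex:fsupermultiplcative}(iii): each $h\in H$ is a spectral point with $f\le h$, hence $f$-supermultiplicative; then \cref{prop:fsupermultiplicativeconstructions}\ref{it:fsupermultiplicativesupremum} shows that the supremum $\bar h$ is $f$-supermultiplicative. Combining the three conclusions gives that $\bar h$ is an $f$-supermultiplicative regular upper functional, as claimed. I do not expect any real obstacle here: the corollary is a direct synthesis of the preceding propositions, and the only point requiring a little care is the attainment of the maximum, which rests on compactness of the asymptotic spectrum.
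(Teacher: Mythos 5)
Your proof is correct and follows exactly the route the paper intends: the corollary is stated without proof as a direct consequence of \cref{prop:functionalbasicconstructions}, the proposition on suprema of $S_0$-regular functionals (with $S_0=\naturals$), \cref{ex:fsupermultiplcative}, and \cref{prop:fsupermultiplicativeconstructions}, which is precisely what you assemble. Your additional remark on attainment of the maximum via compactness of $\Delta(S,\preorderle)$ is a reasonable (and correct) piece of diligence that the paper leaves implicit.
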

The main result of this section is the following converse.
\begin{theorem}\label{thm:reguppermaximum}
Let $(S,\preorderle)$ be a semiring with a Strassen preorder, $F$ a set of lower functionals, and $g$ an $\naturals$-spectral upper functional that is $f$-supermultiplicative for all $f\in F$. Let $\tilde{g}$ be the regularization of $g$ and $H=\setbuild{h\in\Delta(S,\preorderle)}{h\le g,\forall f\in F:f\le h}$. Then $H$ is nonempty and for all $x\in S$ the equality
\begin{equation}
\tilde{g}(x)=\max_{h\in H}h(x)
\end{equation}
holds.
\end{theorem}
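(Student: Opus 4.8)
The plan is to establish the two inequalities separately, extracting from them both the nonemptiness of $H$ and the attainment of the maximum. The easy inequality $\max_{h\in H}h\le\tilde g$ is immediate: any $h\in H$ is a spectral point, hence $\naturals$-regular, and satisfies $h\le g$, so $h\le\tilde g$ by \cref{rem:reggprop}. The substantial content is to produce, for each fixed $x_0\in S$, a spectral point $h_0\in H$ with $h_0(x_0)=\tilde g(x_0)$; together with the easy inequality this forces the maximum at $x_0$ to be attained and equal to $\tilde g(x_0)$. (Nonemptiness of $H$ will fall out of the main lemma below.)

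The key lemma I would isolate is: \emph{if $g_*$ is an $\naturals$-spectral upper functional and $F_*$ is a set of lower functionals such that $g_*$ is $f$-supermultiplicative for every $f\in F_*$, then there is a spectral point $h$ with $f\le h\le g_*$ for all $f\in F_*$.} To prove it I would apply Zorn's lemma to the set $\mathcal U$ of $\naturals$-spectral upper functionals $g'\le g_*$ that are $f$-supermultiplicative for all $f\in F_*$, ordered pointwise. It is nonempty ($g_*\in\mathcal U$), and the infimum of a decreasing chain again lies in $\mathcal U$: it is an upper functional by \cref{prop:functionalbasicconstructions}, stays $f$-supermultiplicative by \cref{prop:fsupermultiplicativeconstructions}, and an infimum of normalized functionals is normalized. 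Let $h$ be a minimal element. Its $\naturals$-regularization $\tilde h$ is an upper functional (\cref{prop:gregupper}), is $\le h\le g_*$, and stays $f$-supermultiplicative (\cref{prop:fsupermultiplicativeconstructions}), so $\tilde h\in\mathcal U$, and minimality gives $\tilde h=h$; thus $h$ is $\naturals$-regular, hence $\langle a\rangle$-spectral for every $a\in S$ by \cref{prop:regularimpliesspectral}. Now for each $a$ the $\langle a\rangle$-regularization of $h$ is again an element of $\mathcal U$ below $h$ (same three ingredients, now using that $h$ is $\langle a\rangle$-spectral), so minimality forces $h$ to be $\langle a\rangle$-regular for every $a$. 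Evaluating $\langle b\rangle$-regularity of $h$ on the linear polynomial $bT\in\langle b\rangle[T]$ gives $h(bx)=h(b)h(x)$ for all $b,x\in S$, i.e.\ $h$ is multiplicative; an $\naturals$-spectral multiplicative upper functional is a spectral point by \cref{rem:successivereg}. Since $h$ is $f$-supermultiplicative (so $f\le h$) and $h\le g_*$, the lemma follows.

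It remains to assemble the theorem. Applying the lemma to $g_*=g$ and $F_*=F$ gives $H\neq\emptyset$. For the maximum at $x_0$: the regularization $\tilde g$ is $\naturals$-regular, hence $\langle x_0\rangle$-spectral, so let $g^\sharp$ be its $\langle x_0\rangle$-regularization. Then $g^\sharp$ is an $\naturals$-spectral upper functional, $g^\sharp\le\tilde g\le g$, it is $f$-supermultiplicative for all $f\in F$ (\cref{prop:fsupermultiplicativeconstructions} applied to $\tilde g$), it is $\langle x_0\rangle$-regular, and it agrees with $\tilde g$ on $\langle x_0\rangle$, so $g^\sharp(x_0)=\tilde g(x_0)$. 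Set $f_0=\abstractsubrank_{g^\sharp|_{\langle x_0\rangle}}$, a lower functional that agrees with $g^\sharp$ on $\langle x_0\rangle$ (so $f_0(x_0)=\tilde g(x_0)$) and with respect to which $g^\sharp$ is supermultiplicative by \cref{ex:fsupermultiplcative}. Applying the lemma to $g_*=g^\sharp$ and $F_*=F\cup\{f_0\}$ yields a spectral point $h_0$ with $f\le h_0\le g^\sharp\le g$ for all $f\in F$ and $f_0\le h_0$; hence $h_0\in H$ and $\tilde g(x_0)=f_0(x_0)\le h_0(x_0)\le g^\sharp(x_0)=\tilde g(x_0)$, so $h_0(x_0)=\tilde g(x_0)$.

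The one non-mechanical step is the lemma, and within it the observation that a pointwise-minimal element of $\mathcal U$ is automatically multiplicative. The reason this works — and the point requiring care — is that $\mathcal U$ must be closed not only under decreasing infima but also under $\naturals$- and $\langle a\rangle$-regularization, which is exactly why $f$-supermultiplicativity (rather than the weaker $f\le g'$) is the right property to carry down the chain, by \cref{prop:fsupermultiplicativeconstructions}; and that $\langle a\rangle$-regularity for all $a$, applied through the polynomials $bT$, upgrades submultiplicativity to genuine multiplicativity, after which \cref{rem:successivereg} supplies additivity.
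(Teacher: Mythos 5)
Your proposal is correct and follows essentially the same route as the paper's first proof: a Zorn's-lemma argument on $f$-supermultiplicative, $\naturals$-spectral upper functionals below $g$, whose minimal elements are forced by minimality to be $\langle a\rangle$-regular for every $a$ and hence spectral points. The only difference is packaging: where the paper pins down the value at $x_0$ by building ``agrees with $\tilde g$ on $\langle x_0\rangle$'' into the Zorn set $\mathcal{U}_{x_0}$, you achieve the same effect by adjoining the lower functional $\abstractsubrank_{g^\sharp|_{\langle x_0\rangle}}$ to $F$ and shrinking $g$ to the $\langle x_0\rangle$-regularization $g^\sharp$ --- the same device the paper itself uses in \cref{cor:S0reguppermaximum}.
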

\begin{proof}
We may assume that $F$ is nonempty, otherwise we can take $F=\{Q\}$ and replace $g$ with $\tilde{g}$ if necessary.

For $z\in S$ let $\mathcal{U}_z$ denote the set of upper functionals that are $f$-supermultiplicative for all $f\in F$, pointwise less than $g$, and agree with $\tilde{g}$ on $\langle z\rangle$. Note in particular that $\mathcal{U}_z\cap\Delta(S,\preorderle)\subseteq H$. Equip $\mathcal{U}_z$ with the pointwise partial order.

$\tilde{g}\in\mathcal{U}_z$ by \cref{rem:reggprop,prop:fsupermultiplicativeconstructions}, therefore $\mathcal{U}_z$ is nonempty. If $C$ is a chain in $\mathcal{U}_z$, then its pointwise infimum is also in $\mathcal{U}_z$ by \cref{prop:fsupermultiplicativeconstructions}, and is a lower bound on $C$. By Zorn's lemma, $\mathcal{U}_z$ contains at least one minimal element. Let one of these be $h$.

We claim that $h$ is an element of $H$. By the definition of $\mathcal{U}_z$, $h$ is a $\langle z\rangle$-spectral upper functional between $f$ and $g$ for all $f\in F$, therefore we need to show that $h$ is additive and multiplicative. The $\langle z\rangle$-regularization of $h$ is in $\mathcal{U}_z$, therefore equals $h$ by minimality. Let $x,y\in S$. $h$ is $\langle z,x\rangle$-spectral by \cref{prop:regularimpliesspectral}, therefore we can consider its $\langle z,x\rangle$-regularization, which is also in $\mathcal{U}_z$ by \cref{prop:fsupermultiplicativeconstructions}, therefore agrees with $h$ by minimality. Thus $h$ is $\langle z,x\rangle$-regular, which implies that it is also $\langle z,x,y\rangle$-spectral. It follows that $h(x+y)=h(x)+h(y)$ and $h(xy)=h(x)h(y)$. As $x$ and $y$ were arbitrary, this means that $h\in\Delta(S,\preorderle)$, therefore also $h\in H$. This proves that $H$ is nonempty and $\tilde{g}(z)\le\max_{h\in H}h(z)$.

On the other hand, $h\le g$ implies $h\le\tilde{g}$ by \cref{rem:reggprop}, therefore $\tilde{g}(z)\ge\max_{h\in H}h(z)$.
\end{proof}
We now give a different proof of the statement by a reduction to Strassen's theorem, in particular to the assertion that the asymptotic spectrum of a semiring with a Strassen preorder is nonempty, and a corollary stating that a spectral point of a subsemiring can be extended to an element of $\Delta(S,\preorderle)$ \cite[Corollary 2.18.]{zuiddam2018algebraic}.
\begin{proof}[Second proof of \cref{thm:reguppermaximum}]
The idea is to equip $S$ with a modified preorder that is also a Strassen preorder, and the asymptotic spectrum with respect to the new preorder is a subset of $H$, which will show that $H$ is nonempty. To this end, consider the relation
\begin{equation}
R = \setbuild{(x,n)\in S^2}{n\in\naturals,g(x)\le n} \cup \setbuild{(n,x)\in S^2}{n\in\naturals,n\le\smash{\sup_{f\in F}f(x)}},
\end{equation}
and the preorder $\preorderle_R$ defined as $a\preorderle_R b$ iff there is an $n\in\naturals$ and $s_1,\dots,s_n,x_1,\dots,x_n,\\y_1,\dots,y_n\in S$ such that $(x_i,y_i)\in R$ for all $i=1,\dots,n$ and the inequality
\begin{equation}
a+s_1y_1+\dots+s_ny_n\preorderle b+s_1x_1+\dots+s_nx_n
\end{equation}
holds \cite[Definition 4]{vrana2021generalization}. Then $(S,\preorderle_R)$ is a preordered semiring, and $a\preorderle b$ implies $a\preorderle_R b$, and also $(a,b)\in R$ implies $a\preorderle_R b$ \cite[Lemma 7]{vrana2021generalization}. In particular, $\Delta(S,\preorderle_R)\subseteq\Delta(S,\preorderle)$.

We show that $\Delta(S,\preorderle_R)\subseteq H$. Let $h\in\Delta(S,\preorderle_R)$ and $f\in F$. For all $x\in S$ and $r\in\naturals$ we have $\lfloor rf(x)\rfloor\le\lfloor f(rx)\rfloor\le f(rx)$, therefore $(\lfloor rf(x)\rfloor,rx)\in R$, which implies $\lfloor rf(x)\rfloor\preorderle_R rx$. Applying $h$ to both sides yields $\lfloor rf(x)\rfloor\le h(rx)=rh(x)$, which implies
\begin{equation}
h(x)\ge\lim_{r\to\infty}\frac{\lfloor rf(x)\rfloor}{r}=f(x).
\end{equation}
Analogously, the inequality $g(rx)\le\lceil g(rx)\rceil\le\lceil rg(x)\rceil$ shows that $(rx,\lceil rg(x)\rceil)\in R$, which implies $rx\preorderle_R\lceil rg(x)\rceil$, therefore
\begin{equation}
h(x)\le\lim_{r\to\infty}\frac{\lceil rg(x)\rceil}{r}=g(x).
\end{equation}

Next we show that $\preorderle_R$ is a Strassen preorder. The only property that needs to be verified is that the homomorphism $\naturals\to S$ is order reflecting with respect to $\preorderle_R$. Let $m_1,m_2\in\naturals$ such that $m_1\preorderle_R m_2$. By the definition of $m_1\preccurlyeq_R m_2$ there exist $s_i, x_i\in S$ and $n_i\in\naturals$ ($i\in [r_1+r_2]$) and $f_i\in F$ ($i\in [r_1+1,r_2]$), such that $n_i\ge g(x_i)$ for $i\in [r_1]$, $f_i(x_i)\ge n_i$ for $i\in [r_1+1,r_2]$ and 
\begin{equation}\label{eq:newrelation}
    m_1 +\sum_{i=1}^{r_1}s_in_i + \sum_{i=r_1+1}^{r_1+r_2}s_ix_i \preorderle m_2 +\sum_{i=1}^{r_1}s_ix_i + \sum_{i=r_1+1}^{r_1+r_2}s_in_i.
\end{equation}
By \cref{rem:successivereg,prop:fsupermultiplicativeconstructions} there exists an upper functional $g'$ that is multiplicative and additive on $\langle x_1,\dots,x_{(r_1+r_2)} \rangle$, and satisfies $g'\le g$ as well as $f$-supermultiplicativity for all $f\in F$. Using its monotonicity we get
\begin{equation}\label{eq:actwithgreg}
\begin{split}
   m_1 +\sum_{i=1}^{r_1}g'(s_i)n_i + \sum_{i=r_1+1}^{r_1+r_2}g'(s_i)g'(x_i) &\le m_2 +\sum_{i=1}^{r_1}g'(s_i)g'(x_i) + \sum_{i=r_1+1}^{r_1+r_2}g'(s_i)n_i\\
      \sum_{i=1}^{r_1}g'(s_i)(n_i-g'(x_i)) + \sum_{i=r_1+1}^{r_1+r_2}g'(s_i)(g'(x_i)-n_i) &\le m_2 - m_1
\end{split}
\end{equation}
The first term on the left hand side is positive since $n_i\ge g(x_i)\ge g'(x_i)$, the second term is positive because $g'(x_i)\ge f(x_i)\ge n_i$ by $f_i$-multiplicativity of $g'$, so $0\le m_2-m_1$, i.e. $m_1\le m_2$. This shows that the preorder is indeed Strassen, therefore $\Delta(S,\preorderle_R)\neq\emptyset$ \cite[Theorem 2.4]{strassen1988asymptotic}.

As in the first proof, the inequality $\tilde{g}(z)\le\max_{h\in H}h(z)$ is clear. For the reverse inequality first we show that $\tilde{g}$ is $\preorderle_R$-monotone. Let $m_1,m_2\in S$ and $m_1\preorderle_R m_2$. Then by definition \eqref{eq:newrelation} can be written again. Now we use that by \eqref{rem:successivereg} there exists an upper functional $g'\le \tilde{g}$ which is $\langle x_1,\dots,x_{(r_1+r_2)} \rangle$-spectral and $g'(m_2)=\tilde{g}(m_2)$ (We start the iterative regularizations on the semiring $\langle m_2\rangle$). Applying $\tilde{g}$ to \eqref{eq:newrelation} we get a similar inequality as in \eqref{eq:actwithgreg} but with $\tilde{g}(m_2)-\tilde{g}(m_1)$ on the right hand side, which lower bounds $g'(m_2)-g'(m_1)$.

$\tilde{g}$ is regular and $\preorderle_R$-monotone, so it restricts to an element of the spectrum defined by the relation $\preorderle_R$ on any subsemiring of the form $\langle z\rangle$, where $z\in S$. By \cite[Corollary 2.18.]{zuiddam2018algebraic} there exists an extension of this in $\Delta(S,\preorderle_R)$, i.e. for $z\in S$ there exists $h_z\in\Delta(S,\preorderle_R)$ such that $h_z(z)=\tilde{g}(z)$. This implies $\tilde{g}(z)\ge\max_{h\in H}h(z)$.\end{proof}

\begin{corollary}\label{cor:S0reguppermaximum}
If $S_0$ is a subsemiring of $S$ and in addition to the conditions of \cref{thm:reguppermaximum}, $g$ is $S_0$-spectral and $\tilde{g}$ denotes its $S_0$-regularization, then
\begin{equation}
H:=\setbuild{h\in\Delta(S,\preorderle)}{\left.h\right\rvert_{S_0}=\left.g\right\rvert_{S_0},\forall f\in F:f\le h\le g}
\end{equation}
is nonempty and for all $x\in S$ the equality
\begin{equation}
\tilde{g}(x)=\max_{h\in H}h(x)
\end{equation}
holds.
\end{corollary}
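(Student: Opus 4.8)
The plan is to \emph{reduce} the corollary to \cref{thm:reguppermaximum} rather than rerun its proof. Let $\tilde g$ be the $S_0$-regularization of $g$ and put $F':=F\cup\{\abstractsubrank_{\tilde g|_{S_0}}\}$. I would show that $(\tilde g,F')$ satisfies the hypotheses of \cref{thm:reguppermaximum}, that the $\naturals$-regularization of $\tilde g$ is $\tilde g$ itself, and that the set $H$ produced by the theorem for this data coincides with the set $H$ in the statement of the corollary.

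First I would check the hypotheses. Since $g$ is $S_0$-spectral, $\tilde g$ is a well-defined $S_0$-regular upper functional with $\tilde g\le g$ and $\tilde g|_{S_0}=g|_{S_0}$ (\cref{prop:gregupper,rem:reggprop} together with the proposition on $S_0$-regularity of $\tilde g$); in particular $\tilde g$ is $\naturals$-spectral. By \cref{prop:fsupermultiplicativeconstructions} it remains $f$-supermultiplicative for every $f\in F$, and by \cref{ex:fsupermultiplcative} its $S_0$-regularity makes it $\abstractsubrank_{\tilde g|_{S_0}}$-supermultiplicative as well, hence $f$-supermultiplicative for all $f\in F'$. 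An $S_0$-regular functional is in particular $\naturals$-regular, since polynomials with natural coefficients lie in $S_0[T]$; so by \cref{rem:reggprop} the $\naturals$-regularization of $\tilde g$ equals $\tilde g$, which is precisely the $\tilde g$ appearing in the corollary.

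The key step is to identify the two versions of $H$. Write $H_0=\setbuild{h\in\Delta(S,\preorderle)}{h\le\tilde g,\ \forall f\in F':f\le h}$ for the set given by \cref{thm:reguppermaximum} applied to $(\tilde g,F')$, and $H$ for the set in the corollary. If $h\in H$, then $h$ is a spectral point, $h\le g$, and $h|_{S_0}=g|_{S_0}$; since $h$ and $g$ are both $S_0$-spectral with the same restriction to $S_0$ and $h$ is $S_0$-regular, \cref{rem:reggprop} gives $h=\tilde h\le\tilde g$, and whenever $z\in S_0$ with $z\preorderle x$ we have $\tilde g(z)=g(z)=h(z)\le h(x)$, so $\abstractsubrank_{\tilde g|_{S_0}}(x)\le h(x)$; together with the constraints $f\le h$ for $f\in F$ this shows $h\in H_0$. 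Conversely, if $h\in H_0$ then $h\le\tilde g\le g$, and the constraint $\abstractsubrank_{\tilde g|_{S_0}}\le h$ combined with $h\le\tilde g$ forces $h(x)=\tilde g(x)=g(x)$ for $x\in S_0$, so $h\in H$. Hence $H=H_0$, and \cref{thm:reguppermaximum} yields $H\neq\emptyset$ and $\tilde g(x)=\max_{h\in H}h(x)$ for all $x\in S$.

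I expect the only genuinely delicate point to be this last identification of $H_0$ with $H$, and within it the equivalence, for spectral points $h\le\tilde g$, between $\abstractsubrank_{\tilde g|_{S_0}}\le h$ and $h|_{S_0}=g|_{S_0}$; everything else is assembling results already proved. As an alternative, one can instead copy the first (Zorn's lemma) proof of \cref{thm:reguppermaximum} verbatim, replacing $\langle z\rangle$ by $\langle S_0,z\rangle$, adjoining to the class $\mathcal U_z$ the requirement of agreeing with $g$ on $S_0$, and using the $S_0$-regularization in place of the $\naturals$-regularization throughout; the fact (noted in the proof of \cref{prop:gregupper}) that regularization leaves a functional unchanged on the subsemiring with respect to which one regularizes guarantees that this extra requirement survives the chain infima and the successive regularizations.
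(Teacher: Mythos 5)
Your proposal is correct and takes essentially the same route as the paper: apply \cref{thm:reguppermaximum} to $\tilde g$ with $F'=F\cup\{\abstractsubrank_{\tilde g|_{S_0}}\}$ (justified via \cref{ex:fsupermultiplcative}) and then identify the resulting set with $H$. Your write-up is merely more explicit than the paper's about why the $\naturals$-regularization of $\tilde g$ is $\tilde g$ and why $h\in H$ implies $h\le\tilde g$ (via \cref{rem:reggprop}), both of which the paper leaves implicit.
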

\begin{proof}
\Cref{thm:reguppermaximum}, applied to $\tilde{g}$ and $F'=F\cup\{\abstractsubrank_{\left.\tilde{g}\right\vert_{S_0}}\}$ (possible by \cref{ex:fsupermultiplcative}), states that $H'=\setbuild{h\in\Delta(S,\preorderle)}{h\le \tilde{g},\forall f\in F':f\le h}$ is nonempty and $\tilde{g}(x)=\max_{h\in H'}h(x)$. On $S_0$ the lower bound $\abstractsubrank_{\left.\tilde{g}\right\vert_{S_0}}\le h$ matches the upper bound $h\le\tilde{g}$, therefore $H'=H$.
\end{proof}

\Cref{thm:reguppermaximum} can be used to give a new proof of the extension property of spectral points from subsemirings (\cite[Corollary 2.18.]{zuiddam2018algebraic}; see also \cite[7.17. Theorem]{fritz2021abstract} and \cite[Proposition 3]{vrana2021generalization} for semirings with more general preorders). Note that the second proof uses this property, but the first proof does not, thus the reasoning is not circular.
\begin{corollary}
Let $S_0\subseteq S$ be a subsemiring and $\varphi\in\Delta(S_0,\preorderle)$. Then there exists an element $h\in\Delta(S,\preorderle)$ such that $\varphi=h|_{S_0}$.
\end{corollary}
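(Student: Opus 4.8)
The plan is to obtain this as an immediate consequence of \cref{cor:S0reguppermaximum}, by exhibiting an $S_0$-spectral upper functional that extends $\varphi$ and to which the corollary applies. The natural starting point is the relative rank $\abstractrank_\varphi$ introduced just before \cref{thm:reguppermaximum}. First I would note that it is everywhere finite: for $x\neq 0$ the Strassen property provides $r\in\naturals\subseteq S_0$ with $x\preorderle r$, whence $\abstractrank_\varphi(x)\le\varphi(r)=r<\infty$, and $\abstractrank_\varphi(0)=0$. As already recorded in the text, $\abstractrank_\varphi$ is then an upper functional that agrees with $\varphi$ on $S_0$ (so it is $S_0$-spectral, and in particular $\naturals$-spectral), while $\abstractsubrank_\varphi$ is a lower functional.

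Next I would pass to the $S_0$-regularization $g$ of $\abstractrank_\varphi$. By \cref{prop:gregupper} (and the subsequent proposition asserting regularity of the regularization) $g$ is an $S_0$-regular upper functional; the computation behind \cref{prop:twostepregularized}, applied to the $S_0$-spectral functional $\abstractrank_\varphi$, gives $g|_{S_0}=\abstractrank_\varphi|_{S_0}=\varphi$, and by \cref{rem:reggprop} the functional $g$ is its own $S_0$-regularization. Finally, as observed in \cref{ex:fsupermultiplcative}, an $S_0$-regular upper functional is automatically $\abstractsubrank_{g|_{S_0}}$-supermultiplicative, i.e.\ $\abstractsubrank_\varphi$-supermultiplicative. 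Consequently $g$, taken together with $F=\{\abstractsubrank_\varphi\}$, satisfies every hypothesis of \cref{cor:S0reguppermaximum}.

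Applying \cref{cor:S0reguppermaximum} then shows that the set $\setbuild{h\in\Delta(S,\preorderle)}{h|_{S_0}=g|_{S_0},\ \abstractsubrank_\varphi\le h\le g}$ is nonempty; choosing any $h$ in it and using $g|_{S_0}=\varphi$ yields $h|_{S_0}=\varphi$, which is exactly the claim. I expect no genuine obstacle here, since the substantive work is already packaged in \cref{thm:reguppermaximum}; the only points demanding (minor) care are that $\abstractrank_\varphi$ is everywhere finite and that $S_0$-regularization leaves its restriction to $S_0$ unchanged, both of which follow directly from the Strassen axiom and from $\abstractrank_\varphi$ being $S_0$-spectral.
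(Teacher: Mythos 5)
Your proof is correct and follows essentially the same route as the paper, whose entire argument is to choose $g=\abstractrank_\varphi$ in \cref{cor:S0reguppermaximum}; your extra steps (checking finiteness, passing explicitly to the $S_0$-regularization, and supplying $F=\{\abstractsubrank_\varphi\}$) merely spell out details that the corollary already packages internally.
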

\begin{proof}
Choose $g=\abstractrank_{\left.\tilde{g}\right\vert_{S_0}}$ in \cref{cor:S0reguppermaximum}.
\end{proof}

A result similar to part of \cref{thm:reguppermaximum} was announced already in \cite[Section 7, (ii)]{strassen1991degeneration} in the special case when $g$ is the upper support functional, but we have not been able to locate a proof in the literature. The precise statement is that there is a unique minimal compact subset $\mathsf{Z}^\theta$ of the asymptotic spectrum of tensors such that $\tilde{\zeta}^\theta(x)=\max_{h\in \mathsf{Z}^\theta}h(x)$. Since $\zeta^\theta$ is $\zeta_\theta$-supermultiplicative, \cref{thm:reguppermaximum} implies a variant of this, where $\mathsf{Z}^\theta$ is replaced with the set of spectral points between the upper and lower support functionals, in particular showing that there exist spectral points between $\zeta_\theta$ and $\zeta^\theta$ over any field. This was previously not known except over the field of complex numbers, in which case the quantum functional $F^\theta$ is an explicit spectral point between the two \cite{christandl2021universal}.

\begin{remark}\label{rem:lowermaincounterexample}
The roles of upper and lower functionals in the results presented in this section are not symmetric, and it would be interesting to see how a counterpart of \cref{thm:reguppermaximum} for lower functionals could look like. There seems to be a fundamental asymmetry between upper and lower functionals: if $g$ is an $\naturals$-spectral upper functional, then there exist spectral points $h$ satisfying $h\le g$; however, there are $\naturals$-spectral lower functionals such that no spectral point $h$ satisfies $h\ge f$.

To see this, let $S=\positivereals^2\cup\{(0,0)\}$ with the pointwise operations and preorder $\preorderle$ (as in \cref{rem:lowercounterexample}). $\Delta(S,\preorderle)$ consists of exactly two elements, given by $h_1(a,b)=a$ and $h_2(a,b)=b$. The map $f(a,b)=\sqrt{ab}=\sqrt{h_1(a,b)h_2(a,b)}$ is a lower functional (see \cref{prop:functionalbasicconstructions}), but $f\not\le h_1$ and $f\not\le h_2$.

A key difference is that if we replace upper functionals with lower ones and the infimum with a supremum (in \cref{def:abstractreg} and thereafter) then \cref{prop:twostepregularized} becomes false. Continuing the previous example, since $f$ is multiplicative, we have
\begin{equation}
\lim_{n\to\infty}\sqrt[n]{\sup_{t\in\naturals}\frac{f(tx^n)}{f(t)}}=f(x),
\end{equation}
but
\begin{equation}
\sup_{\substack{p\in\naturals[T]  \\  \deg p\ge 1}}p_f^{-1}(f(p(a,b))=\max\{a,b\},
\end{equation}
which can be seen from the upper bound $f(p(a,b))=f(p(a),p(b))\le\max\{p(a),p(b)\}$ and the lower bound given by the polynomials $p=M+T^n$ with $M\to\infty$ and then $n\to\infty$.
\end{remark}

\section{Upper functionals from families of observables}\label{sec:upper}

From now on we will work with the preordered semiring of tensors and LOCC transformations. It is known that any spectral point $h$ has an exponent $\alpha\in[0,1]$ such that $h(\sqrt{p}\psi)=p^\alpha h(\psi)$ for all $p\ge 0$ and $\psi\in\mathcal{H}$ \cite[Theorem 3]{jensen2019asymptotic}. This is in general not true for upper or lower functionals: for instance, $g(\psi):=\max\{\tensorrank(\psi),\norm{\psi}^2\}$ is an upper functional by \cref{prop:functionalbasicconstructions}, but $p\mapsto g(\sqrt{p}\psi)$ is constant for small $p$ and proportional to $p$ for large arguments. We will focus on special functionals that do have a well-defined scaling behavior.
\begin{definition}
Let $k\in\naturals$, $k\ge 2$ and $\alpha\in(0,1]$. We will say that a functional $g$ on the semiring of order-$k$ tensors and LOCC transformations is an \emph{upper functional of order $\alpha$} if $g(1)=1$ and it satisfies the following properties:
\begin{enumerate}
\item\label[property]{it:upperscaling} $g(\sqrt{p}\ket{\psi})=p^\alpha g(\ket{\psi})$.
\item\label[property]{it:uppersubmultiplicative} $g(\ket{\psi}\otimes\ket{\varphi})\le g(\ket{\psi})g(\ket{\varphi})$
\item\label[property]{it:uppersubadditive} $g(\ket{\psi}\oplus\ket{\varphi})\le g(\ket{\psi})+g(\ket{\varphi})$
\item\label[property]{it:upperlocalprojection} 
$g(\ket{\psi})^{1/\alpha}\ge g(\Pi\ket{\psi})^{1/\alpha}+g((I-\Pi)\ket{\psi})^{1/\alpha}$.
\end{enumerate}
for all $p>0$, $\psi\in\mathcal{H}=\mathcal{H}_1\otimes\cdots\otimes\mathcal{H}_k$, $\varphi\in\mathcal{K}=\mathcal{K}_1\otimes\cdots\otimes\mathcal{K}_k$ and projection $\Pi\in\boundeds(\mathcal{H}_j)\subseteq\boundeds(\mathcal{H})$, for all $j\in[k]$.

Likewise, a functional $f$ is a \emph{lower functional of order $\alpha$} if it satisfies similar properties with the inequalities in \cref{it:uppersubmultiplicative,it:uppersubadditive} reversed (but not in \cref{it:upperlocalprojection}).
\end{definition}
Upper (lower) functionals of order $\alpha$ are upper (lower) functionals in the sense of \cref{def:abstractlowerupper}: the inequalities connecting the functionals and the operations are identical in the two definitions, while \cref{it:upperscaling,it:upperlocalprojection} imply that the functionals are monotone \cite[Theorem 3]{jensen2019asymptotic} (the statement there is concerned with spectral points, but the proof of this direction does not use multiplicativity or additivity).

In this section we describe a construction of upper functionals, in which operators on symmetric powers play a central role. Before we proceed, recall that there exist (up to phase) unique equivariant isometries
\begin{align}
    W_{\mathcal{H},m,n} & : \symmetricpower[m+n](\mathcal{H})\to\symmetricpower[m](\mathcal{H})\otimes\symmetricpower[n](\mathcal{K})  \\
    W_{\mathcal{H},\mathcal{K},n} & : \symmetricpower[n](\mathcal{H})\otimes\symmetricpower[n](\mathcal{K})\to\symmetricpower[n](\mathcal{H}\otimes\mathcal{K})  \\
    W_{\mathcal{H},\mathcal{K},m,n} & : \symmetricpower[m](\mathcal{H})\otimes\symmetricpower[n](\mathcal{K})\to\symmetricpower[m+n](\mathcal{H}\oplus\mathcal{K})
\end{align}
with respect to the groups $U(\mathcal{H})$ and $U(\mathcal{H})\times U(\mathcal{K})$, respectively. For brevity, when $X\in\boundeds(\symmetricpower[m](\mathcal{H})\otimes\symmetricpower[n](\mathcal{K}))$, we will use the notation
\begin{equation}
    \left.X\right|_{\symmetricpower[m+n](\mathcal{H})}=W_{\mathcal{H},m,n}^*XW_{\mathcal{H},m,n},
\end{equation}
and similarly in the other cases. Note in particular that (choosing the phase suitably)
\begin{equation}
W_{\mathcal{H},\mathcal{K},m,n}\left(\psi^{\otimes m}\otimes\varphi^{\otimes n}\right)=\frac{1}{\sqrt{m!n!(m+n)!}}\sum_{\sigma\in S_{m+n}}\sigma\cdot(\psi^{\otimes m}\otimes\varphi^{\otimes n})
\end{equation}
and therefore
\begin{equation}\label{eq:directsumpowerrestriction}
W_{\mathcal{H},\mathcal{K},m,n}^*\left((\psi\oplus\varphi)^{\otimes(m+n)}\right)=\sqrt{\binom{m}{m+n}}\psi^{\otimes m}\otimes\varphi^{\otimes n}.
\end{equation}

Our construction of upper functionals will be based on families of observables $A_{\mathcal{H},n}\in\boundeds(\symmetricpower[n](\mathcal{H}))$ indexed by $k$-partite Hilbert spaces $\mathcal{H}$ and natural numbers $n\in\naturals$, subject to the following axioms with some $\alpha\in(0,1)$:
\begin{enumerate}[({O}1)]
\item\label[property]{it:observableisometry} if $U_i:\mathcal{H}_i\to\mathcal{K}_i$ are isometries then $(U_1^*\otimes\dots\otimes U_k^*)^{\otimes n}A_{\mathcal{K},n}(U_1\otimes\dots\otimes U_k)^{\otimes n}=A_{\mathcal{H},n}$
\item\label[property]{it:observablebound} there exist constants $c_\mathcal{H}$ depending only on $\mathcal{H}$ such that $I\le A_{\mathcal{H},n}\le c_\mathcal{H}^n I$
\item\label[property]{it:observablesupermultiplicative} $\left.\left(A^{\frac{1-\alpha}{\alpha}}_{\mathcal{H},m}\otimes A^{\frac{1-\alpha}{\alpha}}_{\mathcal{H},n}\right)\right|_{\symmetricpower[n+m](\mathcal{H})}\le A^{\frac{1-\alpha}{\alpha}}_{\mathcal{H},m+n}$
\item\label[property]{it:observabletensorproduct} $\left.A^{\frac{1-\alpha}{\alpha}}_{\mathcal{H}\otimes\mathcal{K},n}\right|_{\symmetricpower[n](\mathcal{H})\otimes\symmetricpower[n](\mathcal{K})}\le A^{\frac{1-\alpha}{\alpha}}_{\mathcal{H},n}\otimes A^{\frac{1-\alpha}{\alpha}}_{\mathcal{K},n}$
\item\label[property]{it:observabledirectsum} $\left.A^{\frac{1-\alpha}{\alpha}}_{\mathcal{H}\oplus\mathcal{K},m+n}\right|_{\symmetricpower[m](\mathcal{H})\otimes\symmetricpower[n](\mathcal{K})}\le 2^{(m+n)\frac{1-\alpha}{\alpha}h(\frac{m}{m+n})}A^{\frac{1-\alpha}{\alpha}}_{\mathcal{H},m}\otimes A^{\frac{1-\alpha}{\alpha}}_{\mathcal{K},n}$.
\end{enumerate}
\begin{remark}\label{rem:observablecommutation}
\Cref{it:observableisometry} implies in particular that $A_{\mathcal{H},n}$ commutes with operators of the form $(S_1\otimes\dots\otimes S_k)^{\otimes n}$ where $S_j:\mathcal{H}_j\to\mathcal{H}_j$ are arbitrary linear maps. Indeed, when the $S_j$ are unitaries, this is a special case of \cref{it:observableisometry}, and it extends to the complexification since commutation is a (holomorphic) polynomial condition. The extension to noninvertible maps follows by continuity. More generally, if $B_j\in\boundeds(\mathcal{H}_j^{\otimes n})$ is a symmetrization of a tensor product of operators on $\mathcal{H}_j$ for all $j$, then $B_1\otimes\dots\otimes B_k$ also commutes with $A_{\mathcal{H},n}$, which can be seen by writing $B_j$ as a linear combination of tensor powers.
\end{remark}

\begin{proposition}
Let $0<\alpha\le\beta<1$ and suppose that the observables $A_{\mathcal{H},n}$ satisfy \cref{it:observableisometry,it:observablebound,it:observablesupermultiplicative,it:observabletensorproduct,it:observabledirectsum} with parameter $\alpha$. Then they also satisfy the properties with parameter $\beta$ instead of $\alpha$.
\end{proposition}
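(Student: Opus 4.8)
The plan is to observe that \cref{it:observableisometry,it:observablebound} make no reference to $\alpha$, so they hold for $\beta$ with no work, and to treat the three remaining properties uniformly. Write $s=\tfrac{1-\alpha}{\alpha}$ and $t=\tfrac{1-\beta}{\beta}$. Since $\alpha\mapsto\tfrac{1-\alpha}{\alpha}=\tfrac1\alpha-1$ is strictly decreasing on $(0,1)$, the hypothesis $\alpha\le\beta$ gives $0<t\le s$, hence $p:=t/s\in(0,1]$. Each of \cref{it:observablesupermultiplicative,it:observabletensorproduct,it:observabledirectsum} asserts an inequality of the shape $\left.X\right|_{\mathcal{L}}\le Y$, where $X$ and $Y$ are built from $s$-th powers of the observables on tensor-product or direct-sum spaces, $\left.\,\cdot\,\right|_{\mathcal{L}}$ denotes compression along one of the equivariant isometries $W$, and in the $\beta$-version every exponent $s$ is replaced by $t=ps$ (in \cref{it:observabledirectsum} the scalar prefactor $2^{(m+n)s\,h(m/(m+n))}$ likewise becomes $2^{(m+n)t\,h(m/(m+n))}$, which is exactly its $p$-th power).

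The two analytic facts I would invoke are that on $[0,\infty)$ the map $x\mapsto x^{p}$ is operator monotone for $p\in[0,1]$, and that it satisfies the operator Jensen inequality of Hansen and Pedersen: $W^*X^{p}W\le(W^*XW)^{p}$ for any isometry $W$ and any $X\ge0$, since $W^*(\cdot)W$ is unital completely positive and $x^p$ is operator concave. Together with the elementary identities $(cX)^{p}=c^{p}X^{p}$ for $c>0$ and $(X\otimes X')^{p}=X^{p}\otimes(X')^{p}$, and the fact that all the observables are $\ge I$ so that every power is unambiguous, these let me pass from the $s$-version to the $t$-version. For \cref{it:observablesupermultiplicative}, for instance, I would write $\left.\bigl(A^{t}_{\mathcal{H},m}\otimes A^{t}_{\mathcal{H},n}\bigr)\right|_{\symmetricpower[m+n](\mathcal{H})}=W_{\mathcal{H},m,n}^*\bigl(A^{s}_{\mathcal{H},m}\otimes A^{s}_{\mathcal{H},n}\bigr)^{p}W_{\mathcal{H},m,n}$, bound this above by $\bigl(W_{\mathcal{H},m,n}^*(A^{s}_{\mathcal{H},m}\otimes A^{s}_{\mathcal{H},n})W_{\mathcal{H},m,n}\bigr)^{p}$ using operator Jensen, and then apply operator monotonicity of $x^{p}$ together with the $\alpha$-version of \cref{it:observablesupermultiplicative} to bound it further by $\bigl(A^{s}_{\mathcal{H},m+n}\bigr)^{p}=A^{t}_{\mathcal{H},m+n}$. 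The arguments for \cref{it:observabletensorproduct,it:observabledirectsum} are identical, using the isometry $W_{\mathcal{H},\mathcal{K},n}$ respectively $W_{\mathcal{H},\mathcal{K},m,n}$ and, in the last case, raising the scalar prefactor to the power $p$.

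I expect the only genuinely non-formal point to be the interchange of the compression $W^*(\cdot)W$ with the $p$-th power, which is precisely what the operator Jensen inequality supplies; everything else is bookkeeping with powers, tensor products, direct sums, and scalars. Accordingly I would write out \cref{it:observablesupermultiplicative} in detail and then note that \cref{it:observabletensorproduct,it:observabledirectsum} follow by the same three-line computation.
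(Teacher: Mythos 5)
Your proposal is correct and follows essentially the same route as the paper: both reduce everything to the operator monotonicity and operator concavity of $x\mapsto x^{p}$ with $p=\tfrac{1-\beta}{\beta}/\tfrac{1-\alpha}{\alpha}\in(0,1]$, apply the Hansen--Pedersen operator Jensen inequality $W^*X^pW\le(W^*XW)^p$ to commute the compression with the power, and then invoke the $\alpha$-version of the property. The only cosmetic difference is that you work out \cref{it:observablesupermultiplicative} in detail while the paper works out \cref{it:observabletensorproduct}; the arguments are interchangeable.
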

\begin{proof}
Recall that $x\mapsto x^t$ is an operator monotone and operator concave function on $[0,\infty)$ when $t\in(0,1]$. Note also that $\frac{1-\alpha}{\alpha}$ is a decreasing function of $\alpha$ when $\alpha>0$, therefore $t=\frac{1-\beta}{\beta}/\frac{1-\alpha}{\alpha}\in(0,1]$. This implies for example
\begin{equation}
\begin{split}
\left.A^{\frac{1-\beta}{\beta}}_{\mathcal{H}\otimes\mathcal{K},n}\right|_{\symmetricpower[n](\mathcal{H})\otimes\symmetricpower[n](\mathcal{K})}
 & = \left.\left(A^{\frac{1-\alpha}{\alpha}}_{\mathcal{H}\otimes\mathcal{K},n}\right)^t\right|_{\symmetricpower[n](\mathcal{H})\otimes\symmetricpower[n](\mathcal{K})}  \\
 & \le \left(\left.A^{\frac{1-\alpha}{\alpha}}_{\mathcal{H}\otimes\mathcal{K},n}\right|_{\symmetricpower[n](\mathcal{H})\otimes\symmetricpower[n](\mathcal{K})}\right)^t  \\
 & \le \left(A^{\frac{1-\alpha}{\alpha}}_{\mathcal{H},n}\otimes A^{\frac{1-\alpha}{\alpha}}_{\mathcal{K},n}\right)^t  \\
 & = A^{\frac{1-\beta}{\beta}}_{\mathcal{H},n}\otimes A^{\frac{1-\beta}{\beta}}_{\mathcal{K},n}
\end{split}
\end{equation}
The first inequality uses that if $f$ is operator concave and $K$ is a contraction then $f(K^*AK)\ge K^*f(A)K$ (see e.g. \cite[Theorem V.2.3]{bhatia2013matrix}), while the second inequality uses \cref{it:observabletensorproduct} (for $\alpha$) and operator monotonicity. This proves \cref{it:observabletensorproduct} for $\beta$. The remaining properties either do not involve the parameter at all or are proved in a similar way.
\end{proof}

Constructing families of observables satisfying \cref{it:observableisometry,it:observablebound,it:observablesupermultiplicative,it:observabletensorproduct,it:observabledirectsum} appears to be a difficult task. In the following proposition we introduce a particular family defined for order-2 tensors, and later we show how can such sequences be combined in a nontrivial way.
\begin{proposition}\label{prop:bipartiteobservables}
Let $k=2$. The observables
\begin{equation}\label{eq:bipartiteobservables}
A_{\mathcal{H},n}=\left.\sum_{\lambda\vdash n}2^{n\entropy(\lambda/n)}P^{\mathcal{H}_1}_\lambda\otimes P^{\mathcal{H}_2}_\lambda\right|_{\symmetricpower[n](\mathcal{H})}
\end{equation}
satisfy \cref{it:observableisometry,it:observablebound,it:observablesupermultiplicative,it:observabletensorproduct,it:observabledirectsum} with $c_\mathcal{H}=\min\{\dim\mathcal{H}_1,\dim\mathcal{H}_2\}$ and any $\alpha\in(0,1)$.
\end{proposition}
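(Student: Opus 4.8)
The starting point I would use is that each $A_{\mathcal{H},n}$ is diagonal in the Cauchy decomposition of the symmetric power of a tensor product. Since $g_{(n)\lambda\mu}=\delta_{\lambda\mu}$ and, by the commutation relations for tensor-product Hilbert spaces in \cref{sec:SchurWeyl}, $\big(P^{\mathcal{H}_1}_\lambda\otimes P^{\mathcal{H}_2}_\lambda\big)P^{\mathcal{H}}_{(n)}=\big(P^{\mathcal{H}_1}_\lambda\otimes I\big)P^{\mathcal{H}}_{(n)}$ is the orthogonal projection $Q^{\mathcal{H}}_{\lambda,n}$ onto the block $\mathbb{S}_\lambda(\mathcal{H}_1)\otimes\mathbb{S}_\lambda(\mathcal{H}_2)$ of $\symmetricpower[n](\mathcal{H})\cong\bigoplus_{\lambda\vdash n}\mathbb{S}_\lambda(\mathcal{H}_1)\otimes\mathbb{S}_\lambda(\mathcal{H}_2)$, one has $A_{\mathcal{H},n}=\sum_{\lambda\vdash n}2^{n\entropy(\lambda/n)}Q^{\mathcal{H}}_{\lambda,n}$ and hence $A^{\frac{1-\alpha}{\alpha}}_{\mathcal{H},n}=\sum_\lambda 2^{n\beta\entropy(\lambda/n)}Q^{\mathcal{H}}_{\lambda,n}$ with $\beta:=\frac{1-\alpha}{\alpha}>0$. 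Property \cref{it:observableisometry} is then immediate from the functoriality of the Schur functors $\mathbb{S}_\lambda$ under isometries (which also carry $\symmetricpower[n](\mathcal{H})$ into $\symmetricpower[n](\mathcal{K})$ compatibly), and \cref{it:observablebound} holds because $Q^{\mathcal{H}}_{\lambda,n}\neq 0$ forces $l(\lambda)\le\min\{\dim\mathcal{H}_1,\dim\mathcal{H}_2\}$, whence $0=\entropy((n)/n)\le\entropy(\lambda/n)\le\log l(\lambda)\le\log c_{\mathcal{H}}$, so the eigenvalues of $A_{\mathcal{H},n}$ lie in $[1,c_{\mathcal{H}}^n]$.

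The three remaining properties \cref{it:observablesupermultiplicative,it:observabletensorproduct,it:observabledirectsum} I would prove by one uniform argument. Each is an inequality $X|_{V}\le c\,Y$, where $X$ and $Y$ are built from operators $A^{\frac{1-\alpha}{\alpha}}_{\bullet}$ by tensor products, $V$ is the symmetric subspace cut out by one of the isometries $W_{\mathcal{H},m,n},W_{\mathcal{H},\mathcal{K},n},W_{\mathcal{H},\mathcal{K},m,n}$, and $c$ is a scalar ($c=1$ for \cref{it:observablesupermultiplicative,it:observabletensorproduct}, $c=2^{(m+n)\beta h(\frac{m}{m+n})}$ for \cref{it:observabledirectsum}). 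With respect to the product of the unitary groups of the elementary factors $\mathcal{H}_i$ (and $\mathcal{K}_i$), both the source and the target of each $W$ decompose multiplicity-freely into irreducibles labelled by the relevant partitions (blocks $\mathbb{S}_\lambda(\mathcal{H}_1)\otimes\mathbb{S}_\lambda(\mathcal{H}_2)$, blocks $\mathbb{S}_\mu(\mathcal{H}_1)\otimes\mathbb{S}_\mu(\mathcal{H}_2)\otimes\mathbb{S}_\nu(\mathcal{K}_1)\otimes\mathbb{S}_\nu(\mathcal{K}_2)$, and so on); these are pairwise non-isomorphic because a repeated label determines the partition. As $X$ commutes with the corresponding product group and $W$ is equivariant, Schur's lemma forces $X|_{V}$ to act as a scalar on each source block. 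Moreover $W$ sends a source block only into those target blocks related to it by the appropriate $GL$-branching rule: Littlewood--Richardson coefficients $c^\lambda_{\mu\nu}$ for \cref{it:observablesupermultiplicative} (product of two Schur functors) and for \cref{it:observabledirectsum} ($\mathbb{S}_\lambda$ of a direct sum restricted to the block-diagonal subgroup), Kronecker coefficients $g_{\lambda\mu\nu}$ for \cref{it:observabletensorproduct} ($\mathbb{S}_\lambda$ of a tensor product restricted to $U(\mathcal{H}_i)\times U(\mathcal{K}_i)$). These are precisely the non-vanishing conditions for the products of isotypic projections recorded in \cref{sec:SchurWeyl}, the same structure constant governing the $\mathcal{H}_1$- and the $\mathcal{H}_2$-side.

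It then remains to bound the scalars. For \cref{it:observablesupermultiplicative} the scalar of $\big(A^{\beta}_{\mathcal{H},m}\otimes A^{\beta}_{\mathcal{H},n}\big)\big|_{\symmetricpower[m+n](\mathcal{H})}$ on the $\lambda$-block is at most $\max\{2^{\beta(m\entropy(\mu/m)+n\entropy(\nu/n))}:c^\lambda_{\mu\nu}\neq 0\}$, which by the lower entropic vanishing inequality for Littlewood--Richardson coefficients is $\le 2^{\beta(m+n)\entropy(\lambda/(m+n))}$; summing over $\lambda$ gives $A^{\beta}_{\mathcal{H},m+n}$. For \cref{it:observabletensorproduct} the scalar on the $(\mu,\nu)$-block is at most $\max\{2^{\beta n\entropy(\lambda/n)}:g_{\lambda\mu\nu}\neq 0\}\le 2^{\beta n(\entropy(\mu/n)+\entropy(\nu/n))}$ by the Kronecker entropic vanishing condition, summing to $A^{\beta}_{\mathcal{H},n}\otimes A^{\beta}_{\mathcal{K},n}$. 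For \cref{it:observabledirectsum} the scalar on the $(\mu,\nu)$-block (now $\mu\vdash m$, $\nu\vdash n$) is at most $\max\{2^{\beta(m+n)\entropy(\lambda/(m+n))}:c^\lambda_{\mu\nu}\neq 0\}$, and the upper entropic vanishing inequality for Littlewood--Richardson coefficients bounds this by $2^{\beta(m+n)h(\frac{m}{m+n})}\cdot 2^{\beta(m\entropy(\mu/m)+n\entropy(\nu/n))}$, i.e.\ exactly the required constant times $A^{\beta}_{\mathcal{H},m}\otimes A^{\beta}_{\mathcal{K},n}$. Throughout, $\beta>0$ makes $t\mapsto 2^{\beta t}$ increasing, which is what converts the entropic inequalities into operator inequalities.

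The main obstacle I anticipate is the precise matching between the branching of the three isometries $W$ and the non-vanishing conditions from \cref{sec:SchurWeyl}: one must check that a source block is carried exactly into the span of the target blocks with the relevant structure constant nonzero, tracking both the coincidence of the $\mathcal{H}_1$- and $\mathcal{H}_2$-side coefficients and whether the partitions range over $n$ (as in \cref{it:observabletensorproduct}) or over $m$ and $n$ separately (as in \cref{it:observablesupermultiplicative,it:observabledirectsum}). Everything else is standard Schur--Weyl and Cauchy-identity combinatorics together with a direct appeal to the entropic vanishing inequalities.
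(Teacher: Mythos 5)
Your proof is correct and follows essentially the same route as the paper's: both rest on the multiplicity-free decomposition $\symmetricpower[n](\mathcal{H})\simeq\bigoplus_{\lambda\vdash n}\mathbb{S}_\lambda(\mathcal{H}_1)\otimes\mathbb{S}_\lambda(\mathcal{H}_2)$, the non-vanishing conditions for products of isotypic projections from \cref{sec:SchurWeyl}, and the entropic vanishing inequalities for Littlewood--Richardson and Kronecker coefficients. The only difference is presentational: the paper establishes each operator inequality on the first tensor factor alone (using the commuting projections) and then tensors with the identity and restricts, whereas you diagonalize via Schur's lemma and compare block scalars -- the underlying combinatorics is identical.
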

\begin{proof}
\cref{it:observableisometry} follows from the fact that the isotypic projections are invariant under tensor powers of isometries. The lower bound $A_{\mathcal{H},n}\ge I$ holds because the entropy is nonnegative and the symmetric subspace is contained in the range of the sum of the projections, a consequence of the isomorphism
\begin{equation}
\symmetricpower[n](\mathcal{H}_1\otimes\mathcal{H}_2)\simeq\bigoplus_{\lambda\vdash n}\mathbb{S}_\lambda(\mathcal{H}_1)\otimes\mathbb{S}_\lambda(\mathcal{H}_2).
\end{equation}
Since the terms with $l(\lambda)\ge\min\{\dim\mathcal{H}_1,\dim\mathcal{H}_2\}$ vanish and $\entropy(\lambda/n)\le\log l(\lambda)$, \cref{it:observablebound} is satisfied with $c_\mathcal{H}=\min\{\dim\mathcal{H}_1,\dim\mathcal{H}_2\}$.

We prove \cref{it:observablesupermultiplicative}. Let $m,n\in\naturals$, $\mu\vdash m$, $\nu\vdash n$, $\lambda\vdash m+n$. Then the projectons $P^{\mathcal{H}_1}_\mu\otimes P^{\mathcal{H}_1}_\nu$ and $P^{\mathcal{H}_1}_\lambda$ commute and their product is nonzero only if $c^\lambda_{\mu\nu}\neq 0$, which implies $m\entropy(\mu/m)+n\entropy(\nu/n)\le(m+n)\entropy(\lambda/(m+n))$, therefore
\begin{equation}
\begin{split}
\sum_{\lambda\vdash m+n}2^{(m+n)\frac{1-\alpha}{\alpha}\entropy(\lambda/(m+n))}P^{\mathcal{H}_1}_\lambda
 & = \sum_{\substack{\lambda\vdash m+n  \\  \mu\vdash m  \\  \nu\vdash n}}2^{(m+n)\frac{1-\alpha}{\alpha}\entropy(\lambda/(m+n))}P^{\mathcal{H}_1}_\lambda\left(P^{\mathcal{H}_1}_\mu\otimes P^{\mathcal{H}_1}_\nu\right)  \\
 & \ge \sum_{\substack{\lambda\vdash m+n  \\  \mu\vdash m  \\  \nu\vdash n}}2^{m\frac{1-\alpha}{\alpha}\entropy(\mu/m)+n\frac{1-\alpha}{\alpha}\entropy(\nu/n)}P^{\mathcal{H}_1}_\lambda\left(P^{\mathcal{H}_1}_\mu\otimes P^{\mathcal{H}_1}_\nu\right)  \\
 & = \left(\sum_{\mu\vdash m}2^{m\frac{1-\alpha}{\alpha}\entropy(\mu/m)}P^{\mathcal{H}_1}_\mu\right)\otimes\left(\sum_{\nu\vdash n}2^{n\frac{1-\alpha}{\alpha}\entropy(\mu/n)}P^{\mathcal{H}_1}_\nu\right).
\end{split}
\end{equation}
Take the tensor product of this inequality with the identity operator on $\mathcal{H}_2^{\otimes(m+n)}$, and then restrict to $\symmetricpower[m](\mathcal{H})\otimes\symmetricpower[n](\mathcal{H})$. Then the right hand side becomes $A^{\frac{1-\alpha}{\alpha}}_{\mathcal{H},m}\otimes A^{\frac{1-\alpha}{\alpha}}_{\mathcal{H},n}$. Finally, restrict to $\symmetricpower[m+n](\mathcal{H})$ to obtain \cref{it:observablesupermultiplicative}, using that $\symmetricpower[m+n](\mathcal{H})$ is a subspace of $\symmetricpower[m](\mathcal{H})\otimes\symmetricpower[n](\mathcal{H})$.

We prove \cref{it:observabletensorproduct}. Let $\mathcal{H}=\mathcal{H}_1\otimes\mathcal{H}_2$ and $\mathcal{K}=\mathcal{K}_1\otimes\mathcal{K}_2$, $n\in\naturals$, $\lambda,\mu,\nu\vdash n$. The projections $P^{\mathcal{H}_1\otimes\mathcal{K}_1}_\lambda$ and $P^{\mathcal{H}_1}_\mu\otimes P^{\mathcal{K}_1}_\nu$ commute and their product is nonzero only if $g_{\lambda\mu\nu}\neq 0$, which implies $\entropy(\lambda/n)\le\entropy(\mu/n)+\entropy(\nu/n)$, therefore
\begin{equation}
\begin{split}
\sum_{\lambda\vdash n}2^{n\frac{1-\alpha}{\alpha}\entropy(\lambda/n)}P^{\mathcal{H}_1\otimes\mathcal{K}_1}_\lambda
 & = \sum_{\lambda,\mu,\nu\vdash n}2^{n\frac{1-\alpha}{\alpha}\entropy(\lambda/n)}P^{\mathcal{H}_1\otimes\mathcal{K}_1}_\lambda\left(P^{\mathcal{H}_1}_\mu\otimes P^{\mathcal{K}_1}_\nu\right)  \\
 & \le \sum_{\lambda,\mu,\nu\vdash n}2^{n\frac{1-\alpha}{\alpha}\entropy(\mu/n)+n\frac{1-\alpha}{\alpha}\entropy(\nu/n)}P^{\mathcal{H}_1\otimes\mathcal{K}_1}_\lambda\left(P^{\mathcal{H}_1}_\mu\otimes P^{\mathcal{K}_1}_\nu\right)  \\
 & = \left(\sum_{\mu\vdash n}2^{n\frac{1-\alpha}{\alpha}\entropy(\mu/n)}P^{\mathcal{H}_1}_\mu\right)\otimes\left(\sum_{\nu\vdash n}2^{n\frac{1-\alpha}{\alpha}\entropy(\nu/n)}P^{\mathcal{H}_1}_\nu\right).
\end{split}
\end{equation}
Take the tensor product of this inequality with the identity operator on $(\mathcal{H}_2\otimes\mathcal{K}_2)^{\otimes n}$, and then restrict to $\symmetricpower[n](\mathcal{H}\otimes\mathcal{K})$. Then the right hand side becomes $A^{\frac{1-\alpha}{\alpha}}_{\mathcal{H},n}\otimes A^{\frac{1-\alpha}{\alpha}}_{\mathcal{K},n}$. Finally, restrict to $\symmetricpower[n](\mathcal{H})\otimes\symmetricpower[n](\mathcal{K})$ to obtain \cref{it:observabletensorproduct}.

We prove \cref{it:observabledirectsum}. Let $m,n\in\naturals$, $\lambda\vdash m+n$, $\mu\vdash m$, $\nu\vdash n$. The projections $P^{\mathcal{H}_1\oplus\mathcal{K}_1}_\lambda$ and $I_{\binom{m}{m+n}}\otimes P^{\mathcal{H}_1}_\mu\otimes P^{\mathcal{K}_1}_\nu$ commute and their product is nonzero only if $c^\lambda_{\mu\nu}\neq 0$, which implies $(m+n)\entropy(\lambda/(m+n))\le m\entropy(\mu/m)+n\entropy(\nu/n)+(m+n)h(m/(m+n))$, therefore
\begin{multline}
\left.\sum_{\lambda\vdash m+n}2^{(m+n)\frac{1-\alpha}{\alpha}\entropy(\lambda/(m+n))}P^{\mathcal{H}_1\oplus\mathcal{K}_1}_\lambda\right|_{\complexes^{\binom{m}{m+n}}\otimes\mathcal{H}_1^{\otimes m}\otimes\mathcal{K}_1^{\otimes n}}  \\
  = \sum_{\substack{\lambda\vdash m+n  \\  \mu\vdash m  \\  \nu\vdash n}}2^{(m+n)\frac{1-\alpha}{\alpha}\entropy(\lambda/(m+n))}P^{\mathcal{H}_1\oplus\mathcal{K}_1}_\lambda\left(I_{\binom{m}{m+n}}\otimes P^{\mathcal{H}_1}_\mu\otimes P^{\mathcal{K}_1}_\nu\right)  \\
  \le \sum_{\substack{\lambda\vdash m+n  \\  \mu\vdash m  \\  \nu\vdash n}}2^{m\frac{1-\alpha}{\alpha}\entropy(\mu/m)+n\frac{1-\alpha}{\alpha}\entropy(\nu/n)+(m+n)\frac{1-\alpha}{\alpha}h(\frac{m}{m+n})}P^{\mathcal{H}_1\oplus\mathcal{K}_1}_\lambda\left(I_{\binom{m}{m+n}}\otimes P^{\mathcal{H}_1}_\mu\otimes P^{\mathcal{K}_1}_\nu\right)  \\
  = 2^{(m+n)\frac{1-\alpha}{\alpha}h(\frac{m}{m+n})}I_{\binom{m}{m+n}}\otimes\left(\sum_{\mu\vdash m}2^{m\frac{1-\alpha}{\alpha}\entropy(\mu/m)}P^{\mathcal{H}_1}_\mu\right)\otimes\left(\sum_{\nu\vdash n}2^{n\frac{1-\alpha}{\alpha}\entropy(\nu/n)}  P^{\mathcal{K}_1}_\nu\right).
\end{multline}
Take the tensor product of this inequality with $\complexes^{\binom{m}{m+n}}\otimes\mathcal{H}_2^{\otimes m}\otimes\mathcal{K}_2^{\otimes n}$ (considered as a subspace of $(\mathcal{H}_2\oplus\mathcal{K}_2)^{\otimes(m+n)}$), and then restrict to $\symmetricpower[m+n](\mathcal{H}\oplus\mathcal{K})$ to get \cref{it:observabledirectsum}.
\end{proof}

\begin{theorem}\label{thm:upper}
Let $A$ be a family of observables satisfying \cref{it:observableisometry,it:observablebound,it:observablesupermultiplicative,it:observabletensorproduct,it:observabledirectsum} with parameter $\alpha$. The limit
\begin{equation}\label{eq:logupperdef}
E^{\alpha,A}(\psi)=\lim_{n\to\infty}\frac{1}{n}\frac{\alpha}{1-\alpha}\log\bra{\psi}^{\otimes n}A^{\frac{1-\alpha}{\alpha}}_{\mathcal{H},n}\ket{\psi}^{\otimes n}=-\lim_{n\to\infty}\frac{1}{n} \sandwiched[\alpha]{\vectorstate{\psi}^{\otimes n}}{A_{\mathcal{H},n}}
\end{equation}
(see \eqref{eq:reformulatedivergencetfamily}) exists for all $\psi\in\mathcal{H}\setminus\{0\}$, and
\begin{equation}\label{eq:upperdef}
F^{\alpha,A}(\psi)=\begin{cases}
2^{(1-\alpha)E^{\alpha,A}(\psi)} & \text{if $\psi\neq 0$}  \\
0 & \text{otherwise}
\end{cases}
\end{equation}
is an upper functional of order $\alpha$. Moreover, $F^{\alpha,A}$ can be bounded as
\begin{equation}\label{eq:upperbounds}
\norm{\psi}^{2\alpha}\le F^{\alpha,A}(\psi)\le\norm{\psi}^{2\alpha}c_\mathcal{H}^{1-\alpha}
\end{equation}
when $\psi\in\mathcal{H}$.
\end{theorem}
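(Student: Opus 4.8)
The plan is to produce $E^{\alpha,A}$ by a Fekete argument from \cref{it:observablesupermultiplicative}, read off the bounds \eqref{eq:upperbounds} from \cref{it:observablebound}, and then check the four defining properties of an upper functional of order $\alpha$ one at a time. Throughout I write $a_n(\psi)=\bra{\psi}^{\otimes n}A^{\frac{1-\alpha}{\alpha}}_{\mathcal{H},n}\ket{\psi}^{\otimes n}$, which is well defined since $\ket{\psi}^{\otimes n}\in\symmetricpower[n](\mathcal{H})$ and equals $2^{-\frac{1-\alpha}{\alpha}\sandwiched[\alpha]{\vectorstate{\psi}^{\otimes n}}{A_{\mathcal{H},n}}}$ by \eqref{eq:reformulatedivergencetfamily}, and I set $e_\psi=\lim_n\tfrac1n\log a_n(\psi)$, so that $E^{\alpha,A}(\psi)=\tfrac{\alpha}{1-\alpha}e_\psi$ and $F^{\alpha,A}(\psi)=2^{\alpha e_\psi}$, i.e. $F^{\alpha,A}(\psi)^{1/\alpha}=2^{e_\psi}$.

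\textbf{Existence of the limit and \eqref{eq:upperbounds}.} Since $W_{\mathcal{H},m,n}\ket{\psi}^{\otimes(m+n)}=\ket{\psi}^{\otimes m}\otimes\ket{\psi}^{\otimes n}$, taking the expectation of \cref{it:observablesupermultiplicative} in $\ket{\psi}^{\otimes(m+n)}$ gives $a_{m+n}(\psi)\ge a_m(\psi)a_n(\psi)$, so $\log a_n(\psi)$ is superadditive and $\tfrac1n\log a_n(\psi)\to\sup_n\tfrac1n\log a_n(\psi)$ by Fekete's lemma. \cref{it:observablebound} gives $\norm{\psi}^{2n}\le a_n(\psi)\le c_{\mathcal{H}}^{\,n(1-\alpha)/\alpha}\norm{\psi}^{2n}$, so the limit is finite and lies in $[\,2\log\norm{\psi},\ \tfrac{1-\alpha}{\alpha}\log c_{\mathcal{H}}+2\log\norm{\psi}\,]$; rescaling and applying $2^{(1-\alpha)(\cdot)}$ simultaneously defines $E^{\alpha,A}(\psi)$ and yields \eqref{eq:upperbounds}. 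Taking $\mathcal{H}=\complexes^{\otimes k}$, for which each $\symmetricpower[n](\mathcal{H})$ is one-dimensional and $c_{\mathcal{H}}$ may be taken to be $1$, gives $F^{\alpha,A}(1)=1$.

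\textbf{Scaling, submultiplicativity, subadditivity.} \cref{it:upperscaling} is immediate from $a_n(\sqrt p\,\psi)=p^{n}a_n(\psi)$. For \cref{it:uppersubmultiplicative}, using $W_{\mathcal{H},\mathcal{K},n}(\ket{\psi}^{\otimes n}\otimes\ket{\varphi}^{\otimes n})=\ket{\psi\otimes\varphi}^{\otimes n}$ together with \cref{it:observabletensorproduct} gives $a_n(\psi\otimes\varphi)\le a_n(\psi)a_n(\varphi)$, hence $e_{\psi\otimes\varphi}\le e_\psi+e_\varphi$. For \cref{it:uppersubadditive}, by \eqref{eq:directsumpowerrestriction} the component of $\ket{\psi\oplus\varphi}^{\otimes N}$ in the summand $\symmetricpower[m](\mathcal{H})\otimes\symmetricpower[n](\mathcal{K})$ of $\symmetricpower[N](\mathcal{H}\oplus\mathcal{K})$ is $\sqrt{\binom Nm}\,\ket{\psi}^{\otimes m}\otimes\ket{\varphi}^{\otimes n}$, while $A^{\frac{1-\alpha}{\alpha}}_{\mathcal{H}\oplus\mathcal{K},N}$ is block-diagonal for this grading by \cref{rem:observablecommutation} (it commutes with $((zI_{\mathcal{H}})\oplus(wI_{\mathcal{K}}))^{\otimes N}$). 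Applying \cref{it:observabledirectsum} on each block and $\binom Nm\le 2^{Nh(m/N)}$ yields $a_N(\psi\oplus\varphi)\le (N+1)\max_{m+n=N}2^{\frac N\alpha h(m/N)}a_m(\psi)a_n(\varphi)$; passing to $\tfrac1N\log$, letting $N\to\infty$, and using $\tfrac1m\log a_m(\psi)\le e_\psi$ gives $e_{\psi\oplus\varphi}\le\max_{s\in[0,1]}\big[\tfrac1\alpha h(s)+s\,e_\psi+(1-s)e_\varphi\big]$. Exponentiating by $2^{\alpha(\cdot)}$ and using the elementary identity $\max_{s\in[0,1]}2^{h(s)}a^s b^{1-s}=a+b$ (attained at $s=a/(a+b)$, by concavity of $h$) with $a=F^{\alpha,A}(\psi)$, $b=F^{\alpha,A}(\varphi)$ gives \cref{it:uppersubadditive}.

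\textbf{The local-projection inequality --- where the work concentrates.} Fix $j$ and a projection $\Pi\in\boundeds(\mathcal{H}_j)$, put $\psi'=\Pi\psi$, $\psi''=(I-\Pi)\psi$, and let $\mathcal{H}'$, $\mathcal{H}''$ be obtained from $\mathcal{H}$ by replacing $\mathcal{H}_j$ with $\Pi\mathcal{H}_j$, resp.\ $(I-\Pi)\mathcal{H}_j$, so that $\mathcal{H}=\mathcal{H}'\oplus\mathcal{H}''$ is an internal direct sum in the $j$-th factor. Then $\symmetricpower[N](\mathcal{H})=\bigoplus_{m+n=N}\symmetricpower[m](\mathcal{H}')\otimes\symmetricpower[n](\mathcal{H}'')$, the $(m,n)$-component of $\ket{\psi}^{\otimes N}$ is $\sqrt{\binom Nm}\,\ket{\psi'}^{\otimes m}\otimes\ket{\psi''}^{\otimes n}$, and $A^{\frac{1-\alpha}{\alpha}}_{\mathcal{H},N}$ is again block-diagonal by \cref{rem:observablecommutation} (commuting with $(I\otimes\dots\otimes(\Pi+z(I-\Pi))\otimes\dots\otimes I)^{\otimes N}$). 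The decisive ingredient is the operator \emph{lower} bound
\begin{equation}\label{eq:proofplanblockbound}
W^{*}A^{\frac{1-\alpha}{\alpha}}_{\mathcal{H},m+n}W\ \ge\ A^{\frac{1-\alpha}{\alpha}}_{\mathcal{H}',m}\otimes A^{\frac{1-\alpha}{\alpha}}_{\mathcal{H}'',n}
\end{equation}
for the natural isometry $W\colon\symmetricpower[m](\mathcal{H}')\otimes\symmetricpower[n](\mathcal{H}'')\hookrightarrow\symmetricpower[m+n](\mathcal{H})$. Granting \eqref{eq:proofplanblockbound}, one gets $a_N(\psi)\ge\binom Nm\,a_m(\psi')a_n(\psi'')$ for every $m+n=N$, hence $e_\psi\ge\max_{s\in[0,1]}[h(s)+s\,e_{\psi'}+(1-s)e_{\psi''}]$, and the same identity $\max_s 2^{h(s)}a^sb^{1-s}=a+b$, now with $a=F^{\alpha,A}(\psi')^{1/\alpha}=2^{e_{\psi'}}$ and $b=F^{\alpha,A}(\psi'')^{1/\alpha}$, gives exactly \cref{it:upperlocalprojection}. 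Together with \cref{it:upperscaling} this also yields $\preorderle$-monotonicity by \cite[Theorem~3]{jensen2019asymptotic}, so $F^{\alpha,A}$ is an upper functional of order $\alpha$, and \eqref{eq:upperbounds} is the final assertion.

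\textbf{The hard part.} I expect \eqref{eq:proofplanblockbound} to be the main obstacle: unlike \cref{it:observabledirectsum} it is a lower bound with \emph{no} entropic correction, and it is not obtained by a single restriction of \cref{it:observablesupermultiplicative}, since the embedding of the graded piece into $\symmetricpower[m+n](\mathcal{H})$ and the isometry $W_{\mathcal{H},m,n}$ into $\symmetricpower[m](\mathcal{H})\otimes\symmetricpower[n](\mathcal{H})$ do not compose to the tautological inclusion of $\symmetricpower[m](\mathcal{H}')\otimes\symmetricpower[n](\mathcal{H}'')$ (they differ by the higher $U(\mathcal{H}')\times U(\mathcal{H}'')$-isotypic components). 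The natural route is induction on $m+n$, bounding the left side of \eqref{eq:proofplanblockbound} below via \cref{it:observablesupermultiplicative}, decomposing along those isotypic components of $\symmetricpower[m](\mathcal{H})\otimes\symmetricpower[n](\mathcal{H})$, and applying the inductive hypothesis at the smaller sizes together with \cref{it:observableisometry} and the equivariance of all the symmetric-power isometries involved; the explicit bipartite family of \cref{prop:bipartiteobservables} realizes this concretely, with the Littlewood--Richardson entropic vanishing condition (the inequality $m\,\entropy(\mu/m)+n\,\entropy(\nu/n)\le(m+n)\entropy(\lambda/(m+n))$) playing the role of the inductive comparison.
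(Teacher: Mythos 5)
Most of your argument tracks the paper's proof closely: the Fekete argument from \cref{it:observablesupermultiplicative}, the bounds \eqref{eq:upperbounds} from \cref{it:observablebound}, the scaling property, submultiplicativity from \cref{it:observabletensorproduct}, and the direct-sum estimate from \cref{it:observabledirectsum} combined with the identity $\max_{s\in[0,1]}2^{h(s)}a^sb^{1-s}=a+b$ are all essentially the paper's steps (the paper uses the pinching inequality with a harmless $(n+1)$ factor where you invoke exact block-diagonality of $A_{\mathcal{H}\oplus\mathcal{K},N}$; both routes work, though your commuting operator should be taken of the local product form $\bigotimes_j(zI_{\mathcal{H}_j}\oplus wI_{\mathcal{K}_j})$ so that \cref{rem:observablecommutation} actually applies).

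The gap is exactly where you located it, in \cref{it:upperlocalprojection}, and the key point is that the operator inequality you single out as the hard part is both unproven and unnecessary. The bound $W^*A^{\frac{1-\alpha}{\alpha}}_{\mathcal{H},m+n}W\ge A^{\frac{1-\alpha}{\alpha}}_{\mathcal{H}',m}\otimes A^{\frac{1-\alpha}{\alpha}}_{\mathcal{H}'',n}$ does not follow from \cref{it:observablesupermultiplicative} for the reason you yourself note (the relevant isometries do not compose to the tautological inclusion), and your suggested repair via Littlewood--Richardson vanishing is specific to the bipartite family of \cref{prop:bipartiteobservables}, whereas the theorem concerns arbitrary families satisfying the axioms. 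The paper avoids the issue by never passing to the restricted spaces $\mathcal{H}',\mathcal{H}''$: since $\psi_1^{\otimes m}=(\Pi\psi)^{\otimes m}$ is a vector of $\symmetricpower[m](\mathcal{H})$ and, by \cref{it:observableisometry}, the quantity $\bra{\psi_1}^{\otimes m}A^{\frac{1-\alpha}{\alpha}}_{\mathcal{H},m}\ket{\psi_1}^{\otimes m}$ coincides with the one computed on $\mathcal{H}'$, it suffices to prove the scalar inequality $a_N(\psi)\ge\binom{N}{m}\,a_m(\psi_1)\,a_{N-m}(\psi_2)$ with all observables on the \emph{full} space $\mathcal{H}$. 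This follows by inserting the projection $P_{N,m}=\frac{1}{m!(N-m)!}\sum_{\sigma\in S_N}\sigma\cdot\bigl(\Pi^{\otimes m}\otimes(I-\Pi)^{\otimes(N-m)}\bigr)$, which commutes with $A_{\mathcal{H},N}$ by \cref{rem:observablecommutation}, so that $P_{N,m}A^{\frac{1-\alpha}{\alpha}}_{\mathcal{H},N}P_{N,m}\le A^{\frac{1-\alpha}{\alpha}}_{\mathcal{H},N}$; expanding the symmetrization against the symmetric vector $\psi^{\otimes N}$ produces the factor $\binom{N}{m}$ together with the vector $\psi_1^{\otimes m}\otimes\psi_2^{\otimes(N-m)}$, and then \cref{it:observablesupermultiplicative}, applied on $\symmetricpower[N](\mathcal{H})$ rather than on the graded piece, yields the product of expectation values. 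From there your concluding optimization over $s$ closes the argument exactly as you describe; but as written, your proof of \cref{it:upperlocalprojection} is incomplete.
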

Bringing out the $1-\alpha$ factor may seem arbitrary, but we will benefit from this choice of definition.

\begin{proof}
Let $\psi\in\mathcal{H}\setminus\{0\}$. For all $m,n\in\naturals$ we have
\begin{equation}
\begin{split}
\log\bra{\psi}^{\otimes (n+m)}A^{\frac{1-\alpha}{\alpha}}_{\mathcal{H},n+m}\ket{\psi}^{\otimes (n+m)}
 & \ge 
 \log\bra{\psi}^{\otimes (n+m)}A^{\frac{1-\alpha}{\alpha}}_{\mathcal{H},n}\otimes A^{\frac{1-\alpha}{\alpha}}_{\mathcal{H},m}\ket{\psi}^{\otimes (n+m)}
 \\
 & =  \log\bra{\psi}^{\otimes n}A^{\frac{1-\alpha}{\alpha}}_{\mathcal{H},n}\ket{\psi}^{\otimes n}+
  \log\bra{\psi}^{\otimes m} A^{\frac{1-\alpha}{\alpha}}_{\mathcal{H},m}\ket{\psi}^{\otimes m}
\end{split}
\end{equation}
 \Cref{it:observablebound} and \cref{it:observablesupermultiplicative,} imply that
\begin{equation}
\begin{split}
\log\bra{\psi}^{\otimes n}A^{\frac{1-\alpha}{\alpha}}_{\mathcal{H},n}\ket{\psi}^{\otimes n}
 & \le \log\bra{\psi}^{\otimes n}c_\mathcal{H}^{{\frac{1-\alpha}{\alpha}}n}I\ket{\psi}^{\otimes n}  \\
 & = n\log\norm{\psi}^2+{{\frac{1-\alpha}{\alpha}}n}\log c_\mathcal{H}
\end{split}
\end{equation}
and
\begin{equation}
\begin{split}
\log\bra{\psi}^{\otimes n}A^{\frac{1-\alpha}{\alpha}}_{\mathcal{H},n}\ket{\psi}^{\otimes n}
 & \ge \log\bra{\psi}^{\otimes n}I\ket{\psi}^{\otimes n}  \\
 & = n\log\norm{\psi}^2.
\end{split}
\end{equation}
This means that the sequence $n\mapsto\log\bra{\psi}^{\otimes n}A^{\frac{1-\alpha}{\alpha}}_{\mathcal{H},n}\ket{\psi}^{\otimes n}$ is subadditive and has linear upper and lower bounds, therefore, by the Fekete lemma the limit in \eqref{eq:logupperdef} exists and $\frac{\alpha}{1-\alpha}\log\norm{\psi}^2\le E^{\alpha,A}(\psi)\le \frac{\alpha}{1-\alpha}\log\norm{\psi}^2+\frac{\alpha}{1-\alpha}\log c_\mathcal{H}$. Note also that for every $n\in\naturals$ we have
\begin{equation}\label{eq:logupperlowerbound}
E^{\alpha,A}(\psi)\ge\frac{1}{n}\frac{\alpha}{1-\alpha}\log\bra{\psi}^{\otimes n}A^{\frac{1-\alpha}{\alpha}}_{\mathcal{H},n}\ket{\psi}^{\otimes n}.
\end{equation}

If $p>0$ and $\psi\in\mathcal{H}\setminus\{0\}$, then
\begin{equation}
\begin{split}
E^{\alpha,A}(\sqrt{p}\psi)
 & =\lim_{n\to\infty} \frac{1}{n}\frac{\alpha}{1-\alpha}\log\bra{\psi}^{\otimes n}A^{\frac{1-\alpha}{\alpha}}_{\mathcal{H},n}\ket{\psi}^{\otimes n} +\frac{\alpha}{1-\alpha}\log p \\
 & = E^{\alpha,A}(\psi)+\frac{\alpha}{1-\alpha}\log p,
\end{split}
\end{equation}
i.e. $F^{\alpha,A}(\sqrt{p}\psi)=p^\alpha F^{\alpha,A}(\psi)$.

Let $\psi\in\mathcal{H}\setminus\{0\}$ and $\varphi\in\mathcal{K}\setminus\{0\}$. Then $(\psi\otimes\varphi)^{\otimes n}\in\symmetricpower[n](\mathcal{H})\otimes\symmetricpower[n](\mathcal{K})$, therefore by \cref{it:observabletensorproduct,} we have
\begin{equation}
\begin{split}
E^{\alpha,A}(\psi\otimes\varphi)
 & = \lim_{n\to\infty} \frac{1}{n}\frac{\alpha}{1-\alpha}\log\bra{\psi\otimes\varphi}^{\otimes n}A^{\frac{1-\alpha}{\alpha}}_{\mathcal{H}\otimes\mathcal{K},n}\ket{\psi\otimes\varphi}^{\otimes n}\\
& \le \lim_{n\to\infty} \frac{1}{n}\frac{\alpha}{1-\alpha}\log\bra{\psi\otimes\varphi}^{\otimes n}A^{\frac{1-\alpha}{\alpha}}_{\mathcal{H},n}\otimes A^{\frac{1-\alpha}{\alpha}}_{\mathcal{K},n}\ket{\psi\otimes\varphi}^{\otimes n}\\
 & = E^{\alpha,A}(\psi)+E^{\alpha,A}(\varphi),
\end{split}
\end{equation}
i.e. $F^{\alpha,A}$ is submultiplicative.

Also, by \cref{it:observabledirectsum}, \eqref{eq:logupperlowerbound} and using the short-hand notations $W_m^*$ for the isometry in \eqref{eq:directsumpowerrestriction}, for every $q\in(0,1)$ we have
\begin{equation}
\begin{split}
E^{\alpha,A}(\psi\oplus\varphi)
 & = \lim_{n\to\infty} \frac{1}{n}\frac{\alpha}{1-\alpha}\log\bra{\psi\oplus\varphi}^{\otimes n}A^{\frac{1-\alpha}{\alpha}}_{\mathcal{H}\oplus\mathcal{K},n}\ket{\psi\oplus\varphi}^{\otimes n}  \\
 & \le \lim_{n\to\infty} \frac{1}{n}\frac{\alpha}{1-\alpha}\log\bra{\psi\oplus\varphi}^{\otimes n}(n+1)\sum_{m=0}^n 
 W_mW_m^* A^{\frac{1-\alpha}{\alpha}}_{\mathcal{H}\oplus\mathcal{K},n}W_mW_m^*\ket{\psi\oplus\varphi}^{\otimes n}  \\
 & \le \lim_{n\to\infty} \max_{m\in \{0,\dots,n \}} \frac{1}{n} \Big[  \frac{\alpha}{1-\alpha}\log\bra{\psi\oplus\varphi}^{\otimes n}
 W_mW_m^*
 A^{\frac{1-\alpha}{\alpha}}_{\mathcal{H}\oplus\mathcal{K},n} W_mW_m^*
 \ket{\psi\oplus\varphi}^{\otimes n}\\
&\qquad{}+\frac{\alpha}{1-\alpha}\log(n+1)^2 \Big]  \\
& = \lim_{n\to\infty} \max_{m\in \{0,\dots,n \}} \frac{1}{n} \Big[  \frac{\alpha}{1-\alpha}\log{\binom{n}{m}}\bra{\psi}^{\otimes m}\bra{\varphi}^{\otimes n-m}W_m^*A^{\frac{1-\alpha}{\alpha}}_{\mathcal{H}\oplus\mathcal{K},n}W_m
\ket{\psi}^{\otimes m}\ket{\varphi}^{\otimes n-m} \Big]  \\
& \le \lim_{n\to\infty} \max_{m\in \{0,\dots,n \}} \frac{1}{n} \Big[  \frac{\alpha}{1-\alpha}\log{\binom{n}{m}}\bra{\psi}^{\otimes m}\bra{\varphi}^{\otimes n-m}
2^{n\frac{1-\alpha}{\alpha}h(\frac{m}{n})}
\\ &\qquad{}\cdot
 A^{\frac{1-\alpha}{\alpha}}_{\mathcal{H},m}\otimes A^{\frac{1-\alpha}{\alpha}}_{\mathcal{K},n-m}
\ket{\psi}^{\otimes m}\ket{\varphi}^{\otimes n-m}\Big]  \\
 & = \lim_{n\to\infty}\max_{m\in\{0,\ldots,n\}}\frac{1}{n}\Big[\frac{\alpha}{1-\alpha}\log\bra{\psi}^{\otimes m}
A^{\frac{1-\alpha}{\alpha}}_{\mathcal{H},m}\ket{\psi}^{\otimes m}+\log \bra{\varphi}^{\otimes n-m}A^{\frac{1-\alpha}{\alpha}}_{\mathcal{K},n-m}
\ket{\varphi}^{\otimes n-m}  \\
 &\qquad{}+\frac{\alpha}{1-\alpha}\log\binom{n}{m}+nh\left(\frac{m}{n}\right)\Big]  \\
 & \le \lim_{n\to\infty}\max_{m\in\{0,\ldots,n\}}\left[\frac{m}{n} E^{\alpha,A}(\psi)+\frac{n-m}{n}E^{\alpha,A}(\varphi)+\frac{1}{1-\alpha}h\left(\frac{m}{n}\right)\right]  \\
 & \le \max_{q\in[0,1]}\left[q E^{\alpha,A}(\psi)+(1-q)E^{\alpha,A}(\varphi)+\frac{1}{1-\alpha}h(q)\right]  \\
 & = \frac{1}{1-\alpha}\log 2^{(1-\alpha)E^{\alpha,A}(\psi)}+2^{(1-\alpha)E^{\alpha,A}(\varphi)},
\end{split}
\end{equation}
where the first inequality is the pinching inequality. The last inequality is \cite[Eq. (2.13)]{strassen1991degeneration}. This means that $F^{\alpha,A}$ is subadditive.

Let $\psi\in\mathcal{H}$ and $\Pi\in\boundeds(\mathcal{H}_j)$ be a projection, which we identify with an element of $\boundeds(\mathcal{H})$ by considering the tensor product with the identity operators on the remaining factors. Let $\psi_1=\Pi\psi$ and $\psi_2=(I-\Pi)\psi$. For $m,n\in\naturals$ such that $m\le n$ we form the projection
\begin{equation}
P_{n,m}=\frac{1}{m!(n-m)!}\sum_{\sigma\in S_n}\sigma\cdot(\Pi^{\otimes m}\otimes(I-\Pi)^{\otimes n-m}).
\end{equation}
These operators commute with $A_{\mathcal{H},n}$ by \cref{it:observableisometry} (see \cref{rem:observablecommutation}).

Let $q\in(0,1)$ and choose $m=\lfloor nq\rfloor$ with $n$ so large that $m\neq 0$ and $m\neq n$. By \eqref{eq:logupperlowerbound} we have
\begin{equation}
\begin{split}
E^{\alpha,A}(\psi)
 & \ge \frac{1}{n}\frac{\alpha}{1-\alpha}\log\bra{\psi}^{\otimes n}A^{\frac{1-\alpha}{\alpha}}_{\mathcal{H},n}\ket{\psi}^{\otimes n} \\
 & \ge\frac{1}{n}\frac{\alpha}{1-\alpha}\log\bra{\psi}^{\otimes n}P_{n,m}A^{\frac{1-\alpha}{\alpha}}_{\mathcal{H},n}\ket{\psi}^{\otimes n} \\
  & =\frac{1}{n}\frac{\alpha}{1-\alpha}\log\bra{\psi}^{\otimes n}\frac{1}{\lfloor qn\rfloor!(n-\lfloor qn\rfloor)!}\sum_{\sigma\in S_n}\sigma\cdot(\Pi^{\otimes \lfloor qn\rfloor}\otimes(I-\Pi)^{\otimes n-\lfloor qn\rfloor})
  A^{\frac{1-\alpha}{\alpha}}_{\mathcal{H},n}\ket{\psi}^{\otimes n} \\
  & \ge\frac{1}{n}\frac{\alpha}{1-\alpha}\log {\binom{n}{\lfloor qn\rfloor}} \bra{\psi}^{\otimes n}
  (\Pi^{\otimes \lfloor qn\rfloor}\otimes(I-\Pi)^{\otimes n-\lfloor qn\rfloor})
  A^{\frac{1-\alpha}{\alpha}}_{\mathcal{H},\lfloor qn\rfloor}\otimes A^{\frac{1-\alpha}{\alpha}}_{\mathcal{H},n-\lfloor qn\rfloor}\Big|_{\symmetricpower[n](\mathcal{H})}
  \ket{\psi}^{\otimes n} \\
  &= \frac{1}{n}\frac{\alpha}{1-\alpha}\log{\binom{n}{\lfloor qn\rfloor}}\bra{\psi_1}^{\otimes{\lfloor qn\rfloor}}
  A^{\frac{1-\alpha}{\alpha}}_{\mathcal{H},\lfloor qn\rfloor} \ket{\psi_1}^{\otimes {\lfloor qn\rfloor}} \\
&\qquad\qquad\qquad\qquad\quad  \cdot \bra{\psi_2}^{\otimes n- {\lfloor qn\rfloor}} 
  A^{\frac{1-\alpha}{\alpha}}_{\mathcal{H},{n-\lfloor qn\rfloor}}
  \ket{\psi_2}^{\otimes n- {\lfloor qn\rfloor}} \\
 & = \frac{\lfloor qn\rfloor}{n}\frac{1}{\lfloor qn\rfloor}\frac{\alpha}{1-\alpha}\log\bra{\psi_1}^{\otimes{\lfloor qn\rfloor}}
  A^{\frac{1-\alpha}{\alpha}}_{\mathcal{H},\lfloor qn\rfloor} \ket{\psi_1}^{\otimes {\lfloor qn\rfloor}}  \\
 &\qquad{}+\frac{n-\lfloor qn\rfloor}{n}\frac{1}{n-\lfloor qn\rfloor}\frac{\alpha}{1-\alpha}
 \log
 \bra{\psi_2}^{\otimes n- {\lfloor qn\rfloor}}
 A^{\frac{1-\alpha}{\alpha}}_{\mathcal{H},{n-\lfloor qn\rfloor}}
  \ket{\psi_2}^{\otimes n- {\lfloor qn\rfloor}} 
  \\
 &\qquad{}
 +\frac{\alpha}{1-\alpha}\frac{1}{n}\log\binom{n}{\lfloor qn\rfloor}.
\end{split}
\end{equation}
In the first equality we used the $\symmetricpower[n]$ symmetry of the observable.
Taking the limit $n\to\infty$ gives
\begin{equation}
E^{\alpha,A}(\psi)\ge qE^{\alpha,A}(\psi_1)+(1-q)E^{\alpha,A}(\psi_2)+\frac{\alpha}{1-\alpha}h(q).
\end{equation}
This is true for all $q\in(0,1)$, therefore we can maximize the right hand side over $q$, which leads to the inequality
\begin{equation}
F^{\alpha,A}(\psi)^{1/\alpha} \ge F^{\alpha,A}(\Pi\psi)^{1/\alpha}+F^{\alpha,A}((I-\Pi)\psi)^{1/\alpha}.
\end{equation}

\end{proof}

\begin{lemma}\label{lem:bipartiteasymptotic}
Let $k=2$ and $A_{\mathcal{H},n}$ be as in \cref{prop:bipartiteobservables} and $\psi\in\mathcal{H}_1\otimes\mathcal{H}_2$ a unit vector.
\begin{enumerate}
\item If $\alpha\in(0,1)\cup(1,\infty]$, then
\begin{equation}
\lim_{n\to\infty}\frac{1}{n}\frac{\alpha}{1-\alpha}\log\bra{\psi^{\otimes n}}A_{\mathcal{H},n}^{\frac{1-\alpha}{\alpha}}\ket{\psi^{\otimes n}}=\entropy_{\alpha}(\Tr_2\vectorstate{\psi}).
\end{equation}
\item $\lim_{\alpha\to 1}\frac{\alpha}{1-\alpha}\log\bra{\psi^{\otimes n}}A_{\mathcal{H},n}^{\frac{1-\alpha}{\alpha}}\ket{\psi^{\otimes n}}=\bra{\psi^{\otimes n}}\log A_{\mathcal{H},n}\ket{\psi^{\otimes n}}$ and
\begin{equation}
\lim_{n\to\infty}\frac{1}{n}\bra{\psi^{\otimes n}}\log A_{\mathcal{H},n}\ket{\psi^{\otimes n}}=\entropy(\Tr_2\vectorstate{\psi}).
\end{equation}
\item
In particular by the scaling property of the upper functional, for an arbitrary vector $\psi\in\mathcal{H}_1\otimes\mathcal{H}_2\setminus\{0\}$ we have
\begin{equation}
E^{\alpha,A}(\psi)=\frac{\alpha}{1-\alpha}\log\norm{\psi}^2+\entropy_{\alpha}(\Tr_2\vectorstate{\psi}/\norm{\psi}^2)
\end{equation}
and
\begin{equation}\label{eq:bipartitespectrum}
F^{\alpha,A}(\psi)=\Tr(\Tr_2\vectorstate{\psi})^\alpha.
\end{equation}
For $\alpha\in[0,1]$, these are exactly the elements of the asymptotic spectrum of LOCC transformations of bipartite states \cite{jensen2019asymptotic}.
\end{enumerate}
\end{lemma}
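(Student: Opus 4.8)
The plan is to evaluate the matrix element $\bra{\psi^{\otimes n}}A_{\mathcal{H},n}^{\frac{1-\alpha}{\alpha}}\ket{\psi^{\otimes n}}$ exactly in terms of the reduced density operator $\rho=\Tr_2\vectorstate{\psi}$ and then read off its exponential growth rate by the quantum method of types. Since $\symmetricpower[n](\mathcal{H}_1\otimes\mathcal{H}_2)\simeq\bigoplus_{\lambda\vdash n}\mathbb{S}_\lambda(\mathcal{H}_1)\otimes\mathbb{S}_\lambda(\mathcal{H}_2)$, the operators $Q_\lambda:=\left.(P^{\mathcal{H}_1}_\lambda\otimes P^{\mathcal{H}_2}_\lambda)\right|_{\symmetricpower[n](\mathcal{H})}$ form a complete system of orthogonal projections on $\symmetricpower[n](\mathcal{H})$, so $A_{\mathcal{H},n}^{s}=\sum_{\lambda\vdash n}2^{sn\entropy(\lambda/n)}Q_\lambda$ for every $s\in\reals$ and $\log A_{\mathcal{H},n}=\sum_{\lambda\vdash n}n\entropy(\lambda/n)Q_\lambda$. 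As $\psi^{\otimes n}\in\symmetricpower[n](\mathcal{H})$ is fixed by $P^{\mathcal{H}_1\otimes\mathcal{H}_2}_{(n)}$, the Schur--Weyl identity $(P^{\mathcal{H}_1}_\lambda\otimes P^{\mathcal{H}_2}_\lambda)P^{\mathcal{H}_1\otimes\mathcal{H}_2}_{(n)}=(P^{\mathcal{H}_1}_\lambda\otimes I^{\otimes n})P^{\mathcal{H}_1\otimes\mathcal{H}_2}_{(n)}$ gives $\bra{\psi^{\otimes n}}Q_\lambda\ket{\psi^{\otimes n}}=\Tr\big[(P^{\mathcal{H}_1}_\lambda\otimes I^{\otimes n})\vectorstate{\psi}^{\otimes n}\big]=\Tr\big[P^{\mathcal{H}_1}_\lambda\rho^{\otimes n}\big]$, hence $\bra{\psi^{\otimes n}}A_{\mathcal{H},n}^{\frac{1-\alpha}{\alpha}}\ket{\psi^{\otimes n}}=\sum_{\lambda\vdash n}2^{\frac{1-\alpha}{\alpha}n\entropy(\lambda/n)}\Tr[\rho^{\otimes n}P^{\mathcal{H}_1}_\lambda]$ and likewise $\bra{\psi^{\otimes n}}\log A_{\mathcal{H},n}\ket{\psi^{\otimes n}}=\sum_{\lambda\vdash n}n\entropy(\lambda/n)\Tr[\rho^{\otimes n}P^{\mathcal{H}_1}_\lambda]$.

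For part (i) I would substitute the bounds \eqref{eq:quantumtypeestimate} (with $r$ the spectrum of $\rho$) for $\Tr[\rho^{\otimes n}P^{\mathcal{H}_1}_\lambda]$; since there are only polynomially many partitions of $n$, the sum is squeezed between $\max_{\lambda\vdash n}2^{n(\frac{1-\alpha}{\alpha}\entropy(\lambda/n)-\relativeentropy{\lambda/n}{r})}$ and a polynomial multiple of it. Taking $\frac1n\log$ and $n\to\infty$, and using that the $n$-types become dense in the simplex while $\entropy$ and $\relativeentropy{\cdot}{r}$ (restricted to distributions supported in $\supp r$, where the latter is continuous) are continuous, the limit equals $\frac{\alpha}{1-\alpha}\max_{Q}\big[\frac{1-\alpha}{\alpha}\entropy(Q)-\relativeentropy{Q}{r}\big]$, which by the variational formula \eqref{eq:variationalRenyientropy} — valid, after the appropriate sign bookkeeping, for every $\alpha\in(0,1)\cup(1,\infty]$ — is $\entropy_\alpha(r)=\entropy_\alpha(\rho)$. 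For part (ii), the first equality is the elementary expansion $\tfrac1t\log\bra{\phi}A^t\ket{\phi}\to\bra{\phi}\log A\ket{\phi}$ as $t=\frac{1-\alpha}{\alpha}\to0$ for the unit vector $\phi=\psi^{\otimes n}$; for the second, $\frac1n\bra{\psi^{\otimes n}}\log A_{\mathcal{H},n}\ket{\psi^{\otimes n}}$ is the expectation of $\entropy(\lambda/n)$ under the probability distribution $\lambda\mapsto\Tr[\rho^{\otimes n}P^{\mathcal{H}_1}_\lambda]$, which by \eqref{eq:quantumtypeestimate} together with Pinsker's inequality concentrates exponentially on $r$, so by boundedness and uniform continuity of $\entropy$ the limit is $\entropy(r)=\entropy(\rho)$.

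Part (iii) is then bookkeeping: for $\psi\neq0$, writing $\psi=\sqrt{\norm{\psi}^2}\,\hat\psi$ with $\hat\psi$ a unit vector and combining the scaling property of $E^{\alpha,A}$ from \cref{thm:upper} with part (i) applied to $\hat\psi$ gives $E^{\alpha,A}(\psi)=\frac{\alpha}{1-\alpha}\log\norm{\psi}^2+\entropy_\alpha(\Tr_2\vectorstate{\psi}/\norm{\psi}^2)$; substituting $\entropy_\alpha(\sigma)=\frac{1}{1-\alpha}\log\Tr\sigma^\alpha$ and using $\norm{\psi}^{2\alpha}\Tr(\rho/\norm{\psi}^2)^\alpha=\Tr\rho^\alpha$ yields $F^{\alpha,A}(\psi)=2^{(1-\alpha)E^{\alpha,A}(\psi)}=\Tr(\Tr_2\vectorstate{\psi})^\alpha$. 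These are exactly the exponentiated R\'enyi entanglement entropies of a bipartite state across its unique bipartition; together with the endpoints $\alpha=0$ (Schmidt rank) and $\alpha=1$ (squared norm), the fact that they exhaust the asymptotic spectrum of LOCC transformations of bipartite states is the characterization of \cite{jensen2019asymptotic}. I do not anticipate a genuine obstacle here; the only points requiring care are the sign bookkeeping when matching the limit with \eqref{eq:variationalRenyientropy} across the ranges $\alpha<1$, $\alpha>1$ and $\alpha=\infty$ (where one checks directly that $\min_Q[\entropy(Q)+\relativeentropy{Q}{r}]=-\log\max_i r_i$), and making the standard "density of types plus continuity of the relevant entropic quantities" estimate precise.
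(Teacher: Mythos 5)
Your proposal is correct and follows essentially the same route as the paper: reduce $\bra{\psi^{\otimes n}}(P^{\mathcal{H}_1}_\lambda\otimes P^{\mathcal{H}_2}_\lambda)\ket{\psi^{\otimes n}}$ to $\Tr[\rho^{\otimes n}P^{\mathcal{H}_1}_\lambda]$, squeeze the sum via the spectrum estimate \eqref{eq:quantumtypeestimate} and the polynomial count of partitions, identify the limit with $\entropy_\alpha$ through \eqref{eq:variationalRenyientropy}, and use spectrum concentration for the $\alpha\to 1$ case. The only difference is that you make the Schur--Weyl reduction to the reduced density operator explicit, which the paper leaves implicit.
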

\begin{proof}
Let $r=(r_1,\dots,r_d)$ ($d=\min\{\dim\mathcal{H}_1,\dim\mathcal{H}_2\}$) be the Schmidt coefficients of $\psi$ with multiplicities and listed in decreasing order. By \eqref{eq:quantumtypeestimate} we have
\begin{equation}
(n+d)^{-d(d+1)/2}2^{-n\relativeentropy{\frac{\lambda}{n}}{r}}\le\bra{\psi^{\otimes n}}(P^{\mathcal{H}_1}_\lambda\otimes P^{\mathcal{H}_2}_\lambda)\ket{\psi^{\otimes n}}\le(n+1)^{d(d-1)/2}2^{-n\relativeentropy{\frac{\lambda}{n}}{r}},
\end{equation}
which implies (using that the number of partitions of $n$ into at most $d$ parts is bounded by $(n+1)^d$)
\begin{equation}
\begin{split}
\bra{\psi^{\otimes n}}A_{\mathcal{H},n}^{\frac{1-\alpha}{\alpha}}\ket{\psi^{\otimes n}}
 & = \sum_{\lambda\vdash n}2^{\frac{1-\alpha}{\alpha}n\entropy(\lambda/n)}\bra{\psi^{\otimes n}}(P^{\mathcal{H}_1}_\lambda\otimes P^{\mathcal{H}_2}_\lambda)\ket{\psi^{\otimes n}}  \\
 & \le \sum_{\lambda\vdash n}2^{\frac{1-\alpha}{\alpha}n\entropy(\lambda/n)-n\relativeentropy{\frac{\lambda}{n}}{r}}(n+1)^{d(d-1)/2}  \\
 & \le (n+1)^{d(d+1)/2}\max_{\lambda\vdash n}2^{-n\left(\frac{\alpha-1}{\alpha}\entropy(\lambda/n)+\relativeentropy{\frac{\lambda}{n}}{r}\right)},
\end{split}
\end{equation}
and
\begin{equation}
\bra{\psi^{\otimes n}}A_{\mathcal{H},n}^{\frac{1-\alpha}{\alpha}}\ket{\psi^{\otimes n}}
 \ge (n+d)^{-d(d+1)/2}\max_{\lambda\vdash n}2^{-n\left(\frac{\alpha-1}{\alpha}\entropy(\lambda/n)+\relativeentropy{\frac{\lambda}{n}}{r}\right)}.
\end{equation}
It follows that
\begin{equation}
\begin{split}
-\lim_{n\to\infty}\frac{1}{n}\frac{\alpha}{\alpha-1}\log\bra{\psi^{\otimes n}}A_{\mathcal{H},n}^{\frac{1-\alpha}{\alpha}}\ket{\psi^{\otimes n}}
 & = \lim_{n\to\infty}\frac{\alpha}{\alpha-1}\min_{\lambda\vdash n}\left[\frac{\alpha-1}{\alpha}\entropy(\lambda/n)+\relativeentropy{\frac{\lambda}{n}}{r}\right]  \\
 & = \frac{\alpha}{\alpha-1}\min_{Q\in\distributions([d])}\left[\frac{\alpha-1}{\alpha}\entropy(Q)+\relativeentropy{Q}{r}\right]  \\
 & = \entropy_{\alpha}(\Tr_2\vectorstate{\psi}),
\end{split}
\end{equation}
in the last line using \eqref{eq:variationalRenyientropy}.

In the $\alpha\to 1$ case we obtain
\begin{equation}
\begin{split}
\lim_{n\to\infty}\frac{1}{n}\bra{\psi^{\otimes n}}\log A_{\mathcal{H},n}\ket{\psi^{\otimes n}}
 & = \lim_{n\to\infty}\sum_{\lambda\vdash n}\entropy\left(\frac{\lambda}{n}\right)\bra{\psi^{\otimes n}}(P^{\mathcal{H}_1}_\lambda\otimes P^{\mathcal{H}_2}_\lambda)\ket{\psi^{\otimes n}}  \\
 & = \entropy(\Tr_2\vectorstate{\psi}),
\end{split}
\end{equation}
using the continuity of the Shannon entropy and that the probabilities $\bra{\psi^{\otimes n}}(P^{\mathcal{H}_1}_\lambda\otimes P^{\mathcal{H}_2}_\lambda)\ket{\psi^{\otimes n}}$ are concentrated around $\lambda\approx nr$ \cite{alicki1988symmetry,keyl2001estimating,hayashi2002quantum,christandl2006structure}.
\end{proof}

\Cref{lem:bipartiteasymptotic} shows that the observables defined in \cref{prop:bipartiteobservables} correspond to the bipartite monotone homomorphisms. The following propositions provide tools to generate more families of observables from the bipartite ones. The idea behind \cref{prop:groupedobservables} is very simple: grouping the tensor factors turns upper functionals on order-$k'$ tensors to upper functionals on order-$k$ tensors for any $k>k'$, and the observation is that this can be done at the level of families of observables. The more interesting construction is that of \cref{prop:geometricobservables}, which allows one to combine families of observables using geometric means.
\begin{proposition}\label{prop:groupedobservables}
Let $A_{\mathcal{H},n}$ be observables defined for order-$k$ tensors that satisfy \cref{it:observableisometry,it:observablebound,it:observablesupermultiplicative,it:observabletensorproduct,it:observabledirectsum}. Let $k'\ge k$ and $f:[k']\to[k]$ a surjective map. For all $\mathcal{H}=\mathcal{H}_1\otimes\cdots\otimes\mathcal{H}_{k'}$, consider the isomorphism
\begin{equation}
V_\mathcal{H}:\mathcal{H}\to\bigotimes_{j=1}^k\left(\bigotimes_{j'\in f^{-1}(j)}\mathcal{H}_{j'}\right)
\end{equation}
that rearranges and groups the factors according to the function $f$. Then the observables
\begin{equation}
A'_{\mathcal{H},n}:=(V_\mathcal{H}^*)^{\otimes n}A_{V_\mathcal{H}\mathcal{H},n}V_\mathcal{H}^{\otimes n}
\end{equation}
on order-$k'$ tensors also satisfy \cref{it:observableisometry,it:observablebound,it:observablesupermultiplicative,it:observabletensorproduct,it:observabledirectsum} with the same constant $c_\mathcal{H}=c_{V_\mathcal{H}\mathcal{H}}$.
\end{proposition}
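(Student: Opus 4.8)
The plan is to observe that $A'_{\mathcal{H},n}$ is obtained from $A_{V_\mathcal{H}\mathcal{H},n}$ by conjugating with the unitary rearrangement $V_\mathcal{H}^{\otimes n}$, and that every one of \cref{it:observableisometry,it:observablebound,it:observablesupermultiplicative,it:observabletensorproduct,it:observabledirectsum} is an identity or an operator inequality between expressions built from the $A$'s using only tensor products, the functional calculus $X\mapsto X^{\frac{1-\alpha}{\alpha}}$, and the equivariant ``restriction'' isometries $W_{\mathcal{H},m,n}$, $W_{\mathcal{H},\mathcal{K},n}$, $W_{\mathcal{H},\mathcal{K},m,n}$. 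Since all of these operations commute with conjugation by the $V$'s, each property for $A'$ will follow from the corresponding property for $A$ by sandwiching between the appropriate powers of the $V$'s.

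Concretely, I would first record the following compatibilities. (i) Because $V_\mathcal{H}$ acts identically on each of the $n$ tensor copies, $V_\mathcal{H}^{\otimes n}$ commutes with the $S_n$-action, hence restricts to a unitary isomorphism $\symmetricpower[n](\mathcal{H})\to\symmetricpower[n](V_\mathcal{H}\mathcal{H})$, and the functional calculus gives $(A'_{\mathcal{H},n})^{\frac{1-\alpha}{\alpha}}=(V_\mathcal{H}^*)^{\otimes n}A^{\frac{1-\alpha}{\alpha}}_{V_\mathcal{H}\mathcal{H},n}V_\mathcal{H}^{\otimes n}$ on $\symmetricpower[n](\mathcal{H})$. (ii) For local isometries $U_{j'}\colon\mathcal{H}_{j'}\to\mathcal{K}_{j'}$ ($j'\in[k']$), the grouped maps $\tilde U_j:=\bigotimes_{j'\in f^{-1}(j)}U_{j'}$ are isometries between the grouped factors, and by the definition of the rearrangement $V_\mathcal{K}\circ(U_1\otimes\dots\otimes U_{k'})=(\tilde U_1\otimes\dots\otimes\tilde U_k)\circ V_\mathcal{H}$. (iii) Since $V_\mathcal{H}$ is a unitary isomorphism of the underlying Hilbert spaces and conjugation by it carries $U(\mathcal{H})$ onto $U(V_\mathcal{H}\mathcal{H})$, conjugating $W_{\mathcal{H},m,n}$ by the corresponding tensor powers of $V_\mathcal{H}$ produces an equivariant isometry, which by the uniqueness-up-to-phase of such isometries equals $W_{V_\mathcal{H}\mathcal{H},m,n}$ times a phase; the same holds for $W_{\mathcal{H},\mathcal{K},n}$ and $W_{\mathcal{H},\mathcal{K},m,n}$ with $V_\mathcal{H},V_\mathcal{K},V_{\mathcal{H}\otimes\mathcal{K}},V_{\mathcal{H}\oplus\mathcal{K}}$, after also using \cref{it:observableisometry} for $A$ to identify the ``group-then-tensor'' $k$-partite space $V_{\mathcal{H}\otimes\mathcal{K}}(\mathcal{H}\otimes\mathcal{K})$ with the ``tensor-then-group'' space $(V_\mathcal{H}\mathcal{H})\otimes(V_\mathcal{K}\mathcal{K})$ via the obvious local permutation unitary (and analogously for $\oplus$).

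With these in hand: \cref{it:observableisometry} for $A'$ follows from (ii) together with \cref{it:observableisometry} for $A$; \cref{it:observablebound} is preserved because conjugation by a unitary preserves operator inequalities and the identity, giving $c_\mathcal{H}=c_{V_\mathcal{H}\mathcal{H}}$; and each of \cref{it:observablesupermultiplicative,it:observabletensorproduct,it:observabledirectsum} for $A'$ is obtained by conjugating the corresponding inequality for $A$ by the relevant power of $V_\mathcal{H}$ (respectively $V_{\mathcal{H}\otimes\mathcal{K}}$, $V_{\mathcal{H}\oplus\mathcal{K}}$), using (i) and (iii) so that the phases of the $W$'s cancel, and using that unitary conjugation is monotone and commutes with the power $X\mapsto X^{\frac{1-\alpha}{\alpha}}$ (and, for \cref{it:observabledirectsum}, that the numerical factor $2^{(m+n)\frac{1-\alpha}{\alpha}h(m/(m+n))}$ is unaffected).

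The argument is essentially bookkeeping; I expect the only mildly delicate point to be item (iii), namely verifying the intertwining relations between the $V$'s and the $W$-isometries --- in particular reconciling the two a priori distinct $k$-partite structures on $\mathcal{H}\otimes\mathcal{K}$ and $\mathcal{H}\oplus\mathcal{K}$ that appear in \cref{it:observabletensorproduct,it:observabledirectsum} --- and keeping track of the phase ambiguities, all of which are harmless since they drop out of every conjugation.
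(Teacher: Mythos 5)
Your approach is the same as the paper's: conjugate every property by the tensor powers of the rearrangement unitaries and use that conjugation commutes with tensor products, the functional calculus, and the equivariant $W$-isometries. Steps (i), (ii) and the treatment of \cref{it:observableisometry,it:observablebound,it:observablesupermultiplicative,it:observabletensorproduct} are fine and match the paper's (much terser) argument.

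The one point where your write-up is off is the parenthetical ``(and analogously for $\oplus$)'' in item (iii). For the tensor product there is indeed a local permutation unitary identifying $V_{\mathcal{H}\otimes\mathcal{K}}(\mathcal{H}\otimes\mathcal{K})$ with $(V_\mathcal{H}\mathcal{H})\otimes(V_\mathcal{K}\mathcal{K})$, but for the tensor direct sum no such unitary exists: when some $\lvert f^{-1}(j)\rvert>1$ the space $V_{\mathcal{H}\oplus\mathcal{K}}(\mathcal{H}\oplus\mathcal{K})=\bigotimes_j\bigotimes_{j'\in f^{-1}(j)}(\mathcal{H}_{j'}\oplus\mathcal{K}_{j'})$ is strictly larger than $(V_\mathcal{H}\mathcal{H})\oplus(V_\mathcal{K}\mathcal{K})$ (already the dimensions differ), so the two $k$-partite structures genuinely cannot be reconciled by a unitary. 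The paper resolves this by observing that \cref{it:observabledirectsum} only involves the restriction to tensor powers of the vector-space direct sum of $\mathcal{H}$ and $\mathcal{K}$ (the subspace $\complexes^{\binom{m}{m+n}}\otimes\mathcal{H}^{\otimes m}\otimes\mathcal{K}^{\otimes n}$ reached by $W_{\mathcal{H},\mathcal{K},m,n}$), and this subspace embeds compatibly into both groupings; so one conjugates the inequality for $A$ and then restricts further, using only an isometric embedding rather than an isomorphism. You correctly flagged the direct-sum reconciliation as the delicate point, but the mechanism you name for it would fail as stated; replacing the claimed unitary identification by this restriction argument completes the proof.
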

\begin{proof}
\Cref{it:observableisometry} is true for the new family because permuting the factors and grouping turns a product isometry to a product isometry with respect to the new factorization. Since $V_\mathcal{H}$ is an isometry, we have $(V_\mathcal{H}^*)^{\otimes n}I(V_\mathcal{H}^{\otimes n})=I$, therefore \cref{it:observablebound} is preserved. \Cref{it:observablesupermultiplicative,it:observabletensorproduct} are not sensitive to the product structure of the spaces $\mathcal{H},\mathcal{K}$, and $V_\mathcal{H}^{\otimes n}$ is an isometry between the respective symmetric tensor powers, therefore these inequalities are inherited as well.

In the case of (tensor) direct sums, $V_{\mathcal{H}\oplus\mathcal{K}}(\mathcal{H}\oplus\mathcal{K})$ is in general \emph{not} isomorphic to $(V_\mathcal{H}\mathcal{H})\oplus(V_\mathcal{K}\mathcal{K})$, but the latter is naturally a subspace of the former. However, \cref{it:observabledirectsum} only sees restrictions to tensor powers of an even smaller subspace, the vector space direct sum of $\mathcal{H}$ and $\mathcal{K}$ (respectively $V_\mathcal{H}\mathcal{H}$ and $V_\mathcal{K}\mathcal{K}$), and is therefore not affected by the grouping.
\end{proof}

\begin{proposition}\label{prop:geometricobservables}
Let $\geometricmean$ be an $r$-variable operator geometric mean, and let $A^{ (1)}_{\mathcal{H},n},\dots,A^{ (r)}_{\mathcal{H},n}$ be families satisfying \cref{it:observableisometry,it:observablebound,it:observablesupermultiplicative,it:observabletensorproduct,it:observabledirectsum} with constants $c^{(1)}_{\mathcal{H}},\dots,c^{(r)}_{\mathcal{H}}$ in \cref{it:observablebound}. Then the family $A_{\mathcal{H},n}$, where
\begin{equation}
A_{\mathcal{H},n}:=\geometricmean(A^{(1){\frac{1-\alpha}{\alpha}} }_{\mathcal{H},n},\dots,A^{(r){\frac{1-\alpha}{\alpha}} }_{\mathcal{H},n})^{\frac{\alpha}{1-\alpha}}
\end{equation}
also satisfies \cref{it:observableisometry,it:observablebound,it:observablesupermultiplicative,it:observabletensorproduct,it:observabledirectsum} with $c_\mathcal{H}=\geometricmean(c^{(1)}_{\mathcal{H}},\dots,c^{(r)}_{\mathcal{H}})$.
\end{proposition}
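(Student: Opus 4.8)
The plan is to reduce every one of the five axioms to a statement about the operators $B^{(i)}_{\mathcal{H},n}:=A^{(i),\frac{1-\alpha}{\alpha}}_{\mathcal{H},n}$, using the identity $A^{\frac{1-\alpha}{\alpha}}_{\mathcal{H},n}=\geometricmean(B^{(1)}_{\mathcal{H},n},\dots,B^{(r)}_{\mathcal{H},n})$ which holds by the very definition of $A_{\mathcal{H},n}$. Since \cref{it:observablesupermultiplicative,it:observabletensorproduct,it:observabledirectsum} are all phrased in terms of $A^{\frac{1-\alpha}{\alpha}}$, they can be verified by the same three-step argument: (i) by the tensor-product property \cref{it:gmeantensorproduct} of $\geometricmean$, the right-hand side — say $B_{\mathcal{H},m}\otimes B_{\mathcal{H},n}$, or $B_{\mathcal{H},n}\otimes B_{\mathcal{K},n}$, or $2^{(m+n)\frac{1-\alpha}{\alpha}h(m/(m+n))}B_{\mathcal{H},m}\otimes B_{\mathcal{K},n}$ — equals $\geometricmean\bigl(B^{(1)}_{\mathcal{H},m}\otimes B^{(1)}_{\mathcal{H},n},\dots\bigr)$ (for \cref{it:observabledirectsum} one first absorbs the scalar prefactor into the geometric mean using homogeneity \cref{it:gmeanhomogeneous}); (ii) the relevant restriction operation $X\mapsto\left.X\right|_{\cdots}=W^{*}XW$, where $W$ is the appropriate canonical equivariant isometry, is completely positive, so \cref{rem:gmeanproperties}(4) gives $\left.\geometricmean(X_1,\dots,X_r)\right|_{\cdots}\le\geometricmean\bigl(\left.X_1\right|_{\cdots},\dots,\left.X_r\right|_{\cdots}\bigr)$; (iii) each $\left.B^{(i)}\right|_{\cdots}$ is bounded above by the corresponding term of the desired right-hand side by the axiom for the family $A^{(i)}$, and one concludes using monotonicity \cref{it:gmeanmonotone} of $\geometricmean$. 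Note that the compressions $W^{*}B^{(i)}W$ remain $\ge I$, so no degeneracy of the geometric means arises.

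Property \cref{it:observablebound} is immediate: since $t\mapsto t^{s}$ is operator monotone for $s>0$, from $I\le A^{(i)}_{\mathcal{H},n}\le(c^{(i)}_{\mathcal{H}})^{n}I$ we get $I\le B^{(i)}_{\mathcal{H},n}\le(c^{(i)}_{\mathcal{H}})^{n\frac{1-\alpha}{\alpha}}I$, and then monotonicity of $\geometricmean$ together with its value on scalar (hence commuting) arguments, namely $\geometricmean(I,\dots,I)=I$ and $\geometricmean(\lambda_1 I,\dots,\lambda_r I)=\geometricmean(\lambda_1,\dots,\lambda_r)I$ (see \cref{rem:gmeanproperties}(1)), yields $I\le B_{\mathcal{H},n}\le\geometricmean(c^{(1)}_{\mathcal{H}},\dots,c^{(r)}_{\mathcal{H}})^{n\frac{1-\alpha}{\alpha}}I$; raising to the power $\frac{\alpha}{1-\alpha}$ gives \cref{it:observablebound} with $c_{\mathcal{H}}=\geometricmean(c^{(1)}_{\mathcal{H}},\dots,c^{(r)}_{\mathcal{H}})$.

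The delicate point is \cref{it:observableisometry}. When the $U_i$ are unitaries it follows at once from the unitary covariance \cref{it:gmeanunitary} of $\geometricmean$ together with the fact that conjugation by a unitary commutes with taking operator powers, so it is enough to treat the case where each $U_i$ is the inclusion of a subspace $\mathcal{H}_i\subseteq\mathcal{K}_i$; write $W$ for the resulting isometry $\symmetricpower[n](\mathcal{H})\hookrightarrow\symmetricpower[n](\mathcal{K})$ and $P=WW^{*}$. The key observation is that $P$ is the compression to $\symmetricpower[n](\mathcal{K})$ of a tensor power of the local projections, so by \cref{rem:observablecommutation} it commutes with every $A^{(i)}_{\mathcal{K},n}$, hence with every $B^{(i)}_{\mathcal{K},n}$, hence with $B_{\mathcal{K},n}=\geometricmean(B^{(1)}_{\mathcal{K},n},\dots,B^{(r)}_{\mathcal{K},n})$ by \cref{rem:gmeanproperties}(3). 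Decomposing each $B^{(i)}_{\mathcal{K},n}$ as a direct sum with respect to the splitting $\symmetricpower[n](\mathcal{K})=P\symmetricpower[n](\mathcal{K})\oplus(I-P)\symmetricpower[n](\mathcal{K})$ and applying the direct-sum covariance \cref{it:gmeandirectsum} shows that $W^{*}B_{\mathcal{K},n}W=\geometricmean(W^{*}B^{(1)}_{\mathcal{K},n}W,\dots,W^{*}B^{(r)}_{\mathcal{K},n}W)=\geometricmean(B^{(1)}_{\mathcal{H},n},\dots,B^{(r)}_{\mathcal{H},n})=B_{\mathcal{H},n}$, the middle equality being \cref{it:observableisometry} for each $A^{(i)}$. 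Finally, since $B_{\mathcal{K},n}$ commutes with $P=WW^{*}$, conjugation by $W$ commutes with raising to the power $\frac{\alpha}{1-\alpha}$, so $W^{*}A_{\mathcal{K},n}W=(W^{*}B_{\mathcal{K},n}W)^{\frac{\alpha}{1-\alpha}}=A_{\mathcal{H},n}$, as required. I expect \cref{it:observableisometry} to be the main obstacle: the other four properties reduce mechanically to the axioms of $\geometricmean$ and of the families $A^{(i)}$, whereas here one must additionally invoke the commutation statement of \cref{rem:observablecommutation}, the direct-sum covariance of the geometric mean, and the compatibility of compression with operator powers.
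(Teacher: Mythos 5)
Your proof is correct and follows essentially the same route as the paper's: \cref{it:observablebound} via \cref{it:gmeanmonotone} and the value of $\geometricmean$ on scalars, \cref{it:observablesupermultiplicative,it:observabletensorproduct,it:observabledirectsum} via the combination of \cref{it:gmeantensorproduct}, the completely-positive-compression inequality of \cref{rem:gmeanproperties}, and \cref{it:gmeanmonotone}, and \cref{it:observableisometry} via the commutation of $A^{(i)}_{\mathcal{K},n}$ with the range projection of the lifted isometry (\cref{rem:observablecommutation}) together with \cref{it:gmeandirectsum,it:gmeanunitary}. The only (welcome) difference is presentational: you split the isometry case into unitary plus subspace inclusion and make explicit that compression by $W$ commutes with raising to the power $\frac{\alpha}{1-\alpha}$ because $B_{\mathcal{K},n}$ commutes with $WW^{*}$, a point the paper leaves implicit.
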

\begin{proof}
Let $U_j:\mathcal{H}_j\to\mathcal{K}_j$ be isometries and introduce the abbreviation $V=(U_1\otimes\dots\otimes U_k)^{\otimes n}$.
\Cref{it:observableisometry} implies that $A^{(i){\frac{1-\alpha}{\alpha}}}_{\mathcal{H},n}$ commutes with $VV^*$ (see \cref{rem:observablecommutation}), therefore
\begin{equation}
\begin{split}
(U_1^*\otimes\dots\otimes U_k^*)^{\otimes n}A^{\frac{1-\alpha}{\alpha}}_{\mathcal{K},n}(U_1\otimes\dots\otimes U_k)^{\otimes n}
 & = V^*\geometricmean(A^{(1){\frac{1-\alpha}{\alpha}}}_{\mathcal{K},n},\dots,A^{(r){\frac{1-\alpha}{\alpha}}}_{\mathcal{K},n})V  \\
 & = \geometricmean(V^*A^{(1){\frac{1-\alpha}{\alpha}}}_{\mathcal{K},n}V,\dots,V^*A^{(r){\frac{1-\alpha}{\alpha}}}_{\mathcal{K},n}V)  \\
 & = \geometricmean(A^{(1){\frac{1-\alpha}{\alpha}}}_{\mathcal{H},n},\dots,A^{(r){\frac{1-\alpha}{\alpha}}}_{\mathcal{H},n})  \\
 & = A^{\frac{1-\alpha}{\alpha}}_{\mathcal{H},n}
\end{split}
\end{equation}
by \cref{it:gmeandirectsum,it:gmeanunitary}. This means that $A_{\mathcal{H},n}$ satisfies \cref{it:observableisometry}.

By \cref{it:gmeanmonotone} and \cref{it:observablebound} for $A^{(i)}_{\mathcal{H},n}$ we have
\begin{equation}
\begin{split}
I
 & = \geometricmean(I,\dots,I)  \\
 & \le \underbrace{\geometricmean(A^{ (1){\frac{1-\alpha}{\alpha}}}_{\mathcal{H},n},\dots,A^{(r){\frac{1-\alpha}{\alpha}} }_{\mathcal{H},n})}_{A^{\frac{1-\alpha}{\alpha}}_{\mathcal{H},n}}  \\
 & \le \geometricmean((c^{(1){\frac{1-\alpha}{\alpha}}}_\mathcal{H})^nI,\dots,(c^{(r){\frac{1-\alpha}{\alpha}}}_\mathcal{H})^nI)  \\
 & = \geometricmean(c^{(1)}_{\mathcal{H}},\dots,c^{(r)}_{\mathcal{H}})^{n{\frac{1-\alpha}{\alpha}}}I,
\end{split}
\end{equation}
which verifies \cref{it:observablebound} for $A_{\mathcal{H},n}$ with $c_\mathcal{H}=\geometricmean(c^{(1)}_{\mathcal{H}},\dots,c^{(r)}_{\mathcal{H}})$.

To see that \cref{it:observablesupermultiplicative} is satisfied by $A_{\mathcal{H},n}$, we use \cref{it:observablesupermultiplicative} for $A^{ (i)}_{\mathcal{H},n}$ together with \cref{it:gmeantensorproduct,rem:gmeanproperties,it:gmeanmonotone}:
\begin{equation}
\begin{split}
\left.(A^{\frac{1-\alpha}{\alpha}}_{\mathcal{H},m}\otimes A^{\frac{1-\alpha}{\alpha}}_{\mathcal{H},n})\right|_{\symmetricpower[m+n](\mathcal{H})}
 & = \left.(\geometricmean(A^{(1){\frac{1-\alpha}{\alpha}}}_{\mathcal{H},m},\dots,A^{ (r){\frac{1-\alpha}{\alpha}}}_{\mathcal{H},m})\otimes \geometricmean(A^{ (1){\frac{1-\alpha}{\alpha}}}_{\mathcal{H},n},\dots,A^{ (r){\frac{1-\alpha}{\alpha}}}_{\mathcal{H},n}))\right|_{\symmetricpower[m+n](\mathcal{H})}  \\
 & = \left.(\geometricmean(A^{ (1){\frac{1-\alpha}{\alpha}}}_{\mathcal{H},m}\otimes A^{ (1){\frac{1-\alpha}{\alpha}}}_{\mathcal{H},n},\dots,A^{ (r){\frac{1-\alpha}{\alpha}}}_{\mathcal{H},m}\otimes A^{ (r){\frac{1-\alpha}{\alpha}}}_{\mathcal{H},n}))\right|_{\symmetricpower[m+n](\mathcal{H})}  \\
 & \le \geometricmean\left(\left.(A^{ (1){\frac{1-\alpha}{\alpha}}}_{\mathcal{H},m}\otimes A^{ (1){\frac{1-\alpha}{\alpha}}}_{\mathcal{H},n})\right|_{\symmetricpower[m+n](\mathcal{H})},\dots,\left.(A^{(r){\frac{1-\alpha}{\alpha}}}_{\mathcal{H},m}\otimes A^{\alpha (r)}_{\mathcal{H},n})\right|_{\symmetricpower[m+n](\mathcal{H})}\right)  \\
 & \le \geometricmean\left(A^{ (1){\frac{1-\alpha}{\alpha}}}_{\mathcal{H},m+n},\dots,A^{ (r){\frac{1-\alpha}{\alpha}}}_{\mathcal{H},m+n}\right)  \\
 & = A^{\frac{1-\alpha}{\alpha}}_{\mathcal{H},m+n}.
\end{split}
\end{equation}
\Cref{it:observabletensorproduct,it:observabledirectsum} can be proved in a similar way.
\end{proof}
\begin{remark}
    If the observables $A^{ (i)}_{\mathcal{H},n}$ pairwise commute (e.g. if we consider only noncrossing bipartitions), then 
    $A_{\mathcal{H},n}:=\geometricmean(A_{\mathcal{H},n},\dots,A_{\mathcal{H},n})$.
\end{remark}

\begin{example}\label{ex:allbipart}
Recall that $B_k$ denotes the set of bipartitions of $[k]$, i.e. unordered pairs $b=\{S,[k]\setminus S\}$ where $S\subseteq[k]$, $S\neq\emptyset$ and $S\neq[k]$. For a bipartition $b=\{S,[k]\setminus S\}$ we consider the observables
\begin{equation}
A^{(b)}_{\mathcal{H},n}=\left.\sum_{\lambda\vdash n}2^{n\entropy(\lambda/n)}P^{\mathcal{H}_{S}}_\lambda\otimes P^{\mathcal{H}_{[k]\setminus S}}_\lambda\right|_{\symmetricpower[n](\mathcal{H})},
\end{equation}
which satisfy \cref{it:observableisometry,it:observablebound,it:observablesupermultiplicative,it:observabletensorproduct,it:observabledirectsum} by \cref{prop:groupedobservables}.
They give rise to the bipartite exponentiated R\'enyi entropies of entanglement (see \cref{lem:bipartiteasymptotic})
\begin{equation}\label{eq:bipartitionfunctional}
F^{\alpha,A^{(b)}}(\psi)=\Tr(\Tr_S\ketbra{\psi}{\psi})^\alpha.
\end{equation}
By \cref{prop:geometricobservables} we can combine the observables using a $\lvert B_k\rvert=2^{k-1}-1$-variable geometric mean $\geometricmean$ as (by a slight abuse of notation, indexing the arguments of $\geometricmean$ by $B_k$ itself instead of $1,\dots,\lvert B_k\rvert$)
\begin{equation}
A_{\mathcal{H},n}:=\geometricmean\left(\left(\left(A^{(b)}_{\mathcal{H},n}\right)^{\frac{1-\alpha}{\alpha}}\right)_{b\in B_k}\right)^{\frac{\alpha}{1-\alpha}}.
\end{equation}
Consider the upper functional constructed in \cref{thm:upper} with these observables. The resulting logarithmic upper functional \begin{equation}
\begin{split}
E^{\alpha,\geometricmean}(\psi)=\lim_{n\to\infty}\frac{1}{n}\frac{\alpha}{1-\alpha}\log\bra{\psi}^{\otimes n}\geometricmean\left(\left(\left.\sum_{\lambda\vdash n}2^{n\frac{1-\alpha}{\alpha}\entropy(\lambda/n)}P^{\mathcal{H}_{S}}_\lambda\otimes P^{\mathcal{H}_{[k]\setminus S}}_\lambda\right|_{\symmetricpower[n](\mathcal{H})}\right)_{b\in B_k}\right)^{\frac{\alpha}{1-\alpha}} \ket{\psi}^{\otimes n}\\
\\
\end{split}
\end{equation}
interpolates between all the bipartite R\'enyi entropies of entanglement. The operator geometric means parametrize this family of upper functionals, so instead of the observable we show $\geometricmean$ in the index.
\end{example}

\begin{proposition}\label{prop:extension}
The upper functional defined in \eqref{eq:upperdef} is an extension of the upper functional introduced in \cite{vrana2020family}. In particular, for $\alpha\in(0,1)$ and $A_{\mathcal{H},n}=\prod_{j\in [k]} (A^{ (\{\{j\},[k]\setminus\{j\}\})}_{\mathcal{H},n})^{\theta_i}$, we get the logarithmic upper functional \eqref{eq:logupperLOCC}, so the family of functionals in \cref{ex:allbipart} is an extension of this family from elementary bipartitions to all of the bipartitions.
\end{proposition}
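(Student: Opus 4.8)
The plan is to verify the displayed special case; the extension claim itself then follows, because the family $A_{\mathcal{H},n}=\prod_{j\in[k]}(A^{(\{\{j\},[k]\setminus\{j\}\})}_{\mathcal{H},n})^{\theta_j}$ is an instance of the construction of \cref{ex:allbipart}: take a geometric mean $\geometricmean$ over $B_k$ whose associated weight is supported on the elementary bipartitions $\{\{j\},[k]\setminus\{j\}\}$ with weights $\theta_j$, and note that on these pairwise commuting observables $\geometricmean$ evaluates to the weighted product (\cref{rem:gmeanproperties}), whereas putting weight on crossing bipartitions gives genuinely new families. The family $A$ is valid by \cref{prop:bipartiteobservables,prop:groupedobservables,prop:geometricobservables} (it is a weighted product of grouped bipartite families), so $E^{\alpha,A}$ and $F^{\alpha,A}$ are well-defined upper functionals by \cref{thm:upper}, and it remains to identify $E^{\alpha,A}$ with $E^{\alpha,\theta}$ from \eqref{eq:logupperLOCC}.

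First I would compute $A^{\frac{1-\alpha}{\alpha}}_{\mathcal{H},n}$ in closed form. For the elementary bipartition $b_j=\{\{j\},[k]\setminus\{j\}\}$, the Schur--Weyl identity $(P^{\mathcal{H}_j}_\lambda\otimes P^{\mathcal{H}_{[k]\setminus\{j\}}}_\lambda)P^{\mathcal{H}}_{(n)}=(P^{\mathcal{H}_j}_\lambda\otimes I)P^{\mathcal{H}}_{(n)}$ of \cref{sec:SchurWeyl} identifies the spectral projections of $A^{(b_j)}_{\mathcal{H},n}$ on $\symmetricpower[n](\mathcal{H})$ with the compressions of $P^{\mathcal{H}_j}_\lambda\otimes I$. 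For different $j$ these act on disjoint groups of tensor factors, hence commute with one another and with $P^{\mathcal{H}}_{(n)}$ (cf.\ \cref{rem:observablecommutation}); therefore raising to the power $\frac{1-\alpha}{\alpha}$ commutes with compression and
\begin{equation}
A^{\frac{1-\alpha}{\alpha}}_{\mathcal{H},n}=\left.\sum_{\lambda_1,\dots,\lambda_k\vdash n}2^{n\frac{1-\alpha}{\alpha}\sum_{j=1}^k\theta_j\entropy(\lambda_j/n)}\,P^{\mathcal{H}_1}_{\lambda_1}\otimes\cdots\otimes P^{\mathcal{H}_k}_{\lambda_k}\right|_{\symmetricpower[n](\mathcal{H})}.
\end{equation}
Since $\psi^{\otimes n}\in\symmetricpower[n](\mathcal{H})$ and each $P^{\mathcal{H}_1}_{\lambda_1}\otimes\cdots\otimes P^{\mathcal{H}_k}_{\lambda_k}$ is a projection, for a unit vector $\psi$ this gives
\begin{equation}
\bra{\psi}^{\otimes n}A^{\frac{1-\alpha}{\alpha}}_{\mathcal{H},n}\ket{\psi}^{\otimes n}=\sum_{\lambda_1,\dots,\lambda_k\vdash n}2^{n\frac{1-\alpha}{\alpha}\sum_{j}\theta_j\entropy(\lambda_j/n)}\norm{(P^{\mathcal{H}_1}_{\lambda_1}\otimes\cdots\otimes P^{\mathcal{H}_k}_{\lambda_k})\psi^{\otimes n}}^2 .
\end{equation}

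Second, I would evaluate the exponential growth rate of this sum by a method-of-types / Laplace argument. Only the $k$-tuples with $l(\lambda_j)\le\dim\mathcal{H}_j$ contribute, so there are at most $\prod_j(n+1)^{\dim\mathcal{H}_j}$ nonzero terms and the sum grows like its largest term. For the upper bound I would choose maximizers $(\lambda_1^{(n)},\dots,\lambda_k^{(n)})$ and pass to a subsequence along which $\lambda_j^{(n)}/n\to\overline{\lambda_j}$: continuity of the Shannon entropy controls the prefactor, while for each fixed $\epsilon>0$ one eventually has $\lambda_j^{(n)}/n$ in the $\epsilon$-ball around $\overline{\lambda_j}$, so $\tfrac1n\log\norm{(P^{\mathcal{H}_1}_{\lambda_1^{(n)}}\otimes\cdots\otimes P^{\mathcal{H}_k}_{\lambda_k^{(n)}})\psi^{\otimes n}}^2$ is bounded above by $\tfrac1n\log$ of the corresponding $\epsilon$-ball sum in \eqref{eq:ratefunction}, whose $n\to\infty$ limit tends to $-I_\psi(\overline{\lambda_1},\dots,\overline{\lambda_k})$ as $\epsilon\to0$. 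For the lower bound I would restrict the whole sum to a single $\epsilon$-ball around an arbitrary center, again using uniform continuity of the entropy and \eqref{eq:ratefunction}. Letting $n\to\infty$, then $\epsilon\to0$, then taking the supremum over centers, the two bounds meet at
\begin{equation}
\lim_{n\to\infty}\frac1n\log\bra{\psi}^{\otimes n}A^{\frac{1-\alpha}{\alpha}}_{\mathcal{H},n}\ket{\psi}^{\otimes n}=\sup_{\overline{\lambda_1},\dots,\overline{\lambda_k}}\left[\frac{1-\alpha}{\alpha}\sum_{j=1}^k\theta_j\entropy(\overline{\lambda_j})-I_\psi(\overline{\lambda_1},\dots,\overline{\lambda_k})\right],
\end{equation}
the limit existing by \cref{thm:upper}. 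Multiplying by $\tfrac{\alpha}{1-\alpha}$ gives $E^{\alpha,A}(\psi)=E^{\alpha,\theta}(\psi)$ for unit vectors, and the scaling property \cref{it:upperscaling} (matching the explicit $\norm{\psi}^2$ terms on both sides) extends this to every $\psi\neq0$, whence $F^{\alpha,A}=F^{\alpha,\theta}$.

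The step I expect to be the main obstacle is the second one. Although both inequalities are elementary large-deviation estimates, the nested limits $\lim_{\epsilon\to0}\lim_{n\to\infty}$ in the definition \eqref{eq:ratefunction} of $I_\psi$ must be handled with care: taking the supremum over the finitely many $\epsilon$-ball centers commutes with $\limsup_n$, but the supremum over $\overline{\lambda_1},\dots,\overline{\lambda_k}$ has to be interchanged with $\lim_{\epsilon\to0}$, which works because the $\epsilon$-ball infima increase to $I_\psi$ and the index set of $k$-tuples of probability vectors on the relevant alphabets is compact. Running the maximizer-subsequence argument for the upper bound, rather than a coarsening of the sum into balls, is what avoids the potential failure of continuity of $\overline{\lambda}\mapsto I_\psi(\overline{\lambda})$. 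The remaining ingredients --- the closed form for $A^{\frac{1-\alpha}{\alpha}}_{\mathcal{H},n}$ and the rewriting of the expectation value --- are routine given the Schur--Weyl identities of \cref{sec:SchurWeyl} and the commutation observations of \cref{rem:observablecommutation}.
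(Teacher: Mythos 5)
Your proposal is correct and follows essentially the same route as the paper: both arguments reduce $\bra{\psi}^{\otimes n}A^{\frac{1-\alpha}{\alpha}}_{\mathcal{H},n}\ket{\psi}^{\otimes n}$ to its largest term (there are only polynomially many), then obtain the upper bound via a convergent subsequence of maximizing $\lambda^{(n)}/n$ and the lower bound via a near-optimal limit point $\overline{\lambda}$. The only cosmetic differences are that you spell out the closed form of $A^{\frac{1-\alpha}{\alpha}}_{\mathcal{H},n}$ (which the paper uses implicitly) and prove the two rate-function estimates directly from the definition \eqref{eq:ratefunction} by $\epsilon$-ball inclusion/restriction, where the paper instead invokes Lemmas 3.2 and 3.3 of \cite{vrana2020family}.
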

\begin{proof}

The following is a generalization of \cite[Eq. (2.13)]{strassen1991degeneration} to more than two terms. For any $x_i\in\mathbb{R}$
\begin{equation}\label{eq:logsup}
    \log\big(\sum_{i\in I} 2^{x_i}\big)=\max_{P\in\distributions(I)}\quad h(P)+ \sum_{i\in I} p_i x_i,
\end{equation}
where $\distributions(I)$ is the set of probability distributions over the set $I$.
By this we have
\begin{equation}
\begin{split}
\log\bra{\psi}^{\otimes n}A^{\frac{1-\alpha}{\alpha}}_{\mathcal{H},n}\ket{\psi}^{\otimes n}
 & = \log\sum_{\lambda\in\partitions[n]^k}2^{n\frac{1-\alpha}{\alpha}H_\theta(\frac{\lambda}{n})+\log\norm{P^\mathcal{H}_\lambda\psi^{\otimes n}}^2}  \\
 & = \sup_{p\in\partitions} \sum_{\lambda\in\partitions[n]^k} p_\lambda\bigg(n\,\frac{1-\alpha}{\alpha}H_\theta\big(\frac{\lambda}{n} \big)+\log\norm{P_\lambda^\mathcal{H}\psi^{\otimes n}}^2 -\log p_\lambda\bigg),
\end{split}
\end{equation}
where the supremum is over the probability distributions over the set $\partitions[n]^k$, so for each $\lambda\in\partitions[n]^k$ we have a probability value $p_\lambda$. For the last term we have the upper bound \newline $h(p_\lambda)=\sum_\lambda -p_\lambda \log p_\lambda \leq \log \lvert\partitions[n]\rvert^k$. The upper bound scales as $\log n$ so $1/n \log \lvert\partitions[n]\rvert^k$ vanishes in the limit, and the upper bound converges to the lower one, which in turn has a simple solution for the optimization problem:
\begin{equation}\label{eq:upperEqProofMiddlestep}
\lim_{n\to\infty}\frac{1}{n}
\log\bra{\psi}^{\otimes n}A^{\frac{1-\alpha}{\alpha}}_{\mathcal{H},n}\ket{\psi}^{\otimes n}
=\lim_{n\to\infty} \max_{\lambda\in\partitions[n]^k}
\bigg(\frac{1-\alpha}{\alpha}H_\theta\big(\frac{\lambda}{n} \big)+\frac{1}{n}\log\norm{P_\lambda^\mathcal{H}\psi^{\otimes n}}^2 \bigg).
\end{equation}

Let $\lambda'^{(n_i)}/n$ be a convergent subsequence of the maximizing $\lambda$ values for given $n$ with the limit $\overline{\lambda}$. The existence of such subsequence follows from the compactness of the finitely supported probability distributions. By \cite[Lemma 3.2]{vrana2020family} we get the lower bound
\begin{equation}
\begin{split}
\lim_{n\to\infty}\frac{1}{n}
\log\bra{\psi}^{\otimes n}A^{\frac{1-\alpha}{\alpha}}_{\mathcal{H},n}\ket{\psi}^{\otimes n}
  \le \sup_{\overline{\lambda}\in\overline{\partitions}^k} \bigg(\frac{1-\alpha}{\alpha}H_\theta\big(\overline{\lambda} \big)-I_\psi(\overline{\lambda}) \bigg).
\end{split}
\end{equation}

Now we show the opposite inequality. There exists a point $\overline{\lambda}'$, for which the evaluation of the right hand side is less than the supremum by at most $\epsilon$. Then by \cite[Lemma 3.3]{vrana2020family} there exists a sequence $\frac{\lambda^{(n)}}{n}\to\overline{\lambda}'$ such that 
\begin{equation}
\begin{split}
& \sup_{\overline{\lambda}\in\overline{\partitions}^k} \bigg(\frac{1-\alpha}{\alpha}H_\theta\big(\overline{\lambda} \big)-I_\psi(\overline{\lambda}) \bigg)+\epsilon \le
\bigg(\frac{1-\alpha}{\alpha}H_\theta\big(\overline{\lambda}' \big)- I_\psi(\overline{\lambda}') \bigg)  \\
=& \lim_{n\to\infty}
\bigg( \frac{1-\alpha}{\alpha}H_\theta\bigg(\frac{\lambda^{(n)}}{n} \bigg)+\frac{1}{n}\log\norm{P^{\mathcal{H}}_{\lambda^{(n)}}\psi^{\otimes n}}^2\bigg).
\end{split}
\end{equation}
The last expression is a lower bound on \eqref{eq:upperEqProofMiddlestep}. This is true for all $\epsilon>0$, then in $\epsilon \to 0$ limit the desired inequality is proven.

\end{proof}

\section{Bounds on the upper functional}\label{sec:lower}

In this section we prove some important bounds and their consequences for the upper functional introduced in \cref{ex:allbipart}.

\begin{proposition}\label{prop:halfsandwichedgmeanbound}
Let $\theta\in\distributions([r])$ convex weights and $\geometricmean$ be a corresponding $r$-variable operator geometric mean. Now let $A^{ (1)}_{\mathcal{H},n},\dots,A^{(r)}_{\mathcal{H},n}$ be families satisfying \cref{it:observableisometry,it:observablebound,it:observablesupermultiplicative,it:observabletensorproduct,it:observabledirectsum} and $A^{\frac{1-\alpha}{\alpha}}_{\mathcal{H},n}=\geometricmean(A^{ (1){\frac{1-\alpha}{\alpha}}}_{\mathcal{H},n},\dots,A^{ (r){\frac{1-\alpha}{\alpha}}}_{\mathcal{H},n})$.
Let $F^{\alpha,A^{(1)}},\dots,F^{\alpha,A^{ (r)}}$ and $F^{\alpha,A}$ be the upper functionals obtained from these families via \cref{thm:upper}. Then for any $\psi$ the inequality
\begin{equation}\label{eq:halfsandwichedgmeanbound}
F^{\alpha,A}(\psi)\le \mathbb{G}(F^{\alpha,A^{ (1)}}(\psi),\dots,F^{\alpha,A^{(r)}}(\psi)) =\prod_{i=1}^r F^{\alpha,A^{ (i)}}(\psi)^{\theta(i)}
\end{equation}
holds.
\end{proposition}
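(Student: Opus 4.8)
The plan is to reduce the statement to the pointwise geometric-mean inequality for completely positive maps recorded in \cref{rem:gmeanproperties}, applied to the positive linear functional $X\mapsto\bra{\psi}^{\otimes n}X\ket{\psi}^{\otimes n}$, and then pass to the limit defining $E^{\alpha,A}$. The case $\psi=0$ is trivial, so assume $\psi\neq 0$; then by \cref{it:observablebound} all the relevant expectation values are at least $\norm{\psi}^{2n}>0$, so all logarithms below are finite.

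First I would fix $n$ and apply item~4 of \cref{rem:gmeanproperties} with the vector $\psi^{\otimes n}$ and the operators $A^{(i)\frac{1-\alpha}{\alpha}}_{\mathcal{H},n}$, $i=1,\dots,r$. Since $A^{\frac{1-\alpha}{\alpha}}_{\mathcal{H},n}=\geometricmean(A^{(1)\frac{1-\alpha}{\alpha}}_{\mathcal{H},n},\dots,A^{(r)\frac{1-\alpha}{\alpha}}_{\mathcal{H},n})$, this gives
\begin{equation}
\bra{\psi}^{\otimes n}A^{\frac{1-\alpha}{\alpha}}_{\mathcal{H},n}\ket{\psi}^{\otimes n}
\le \geometricmean\!\left(\bra{\psi}^{\otimes n}A^{(1)\frac{1-\alpha}{\alpha}}_{\mathcal{H},n}\ket{\psi}^{\otimes n},\dots,\bra{\psi}^{\otimes n}A^{(r)\frac{1-\alpha}{\alpha}}_{\mathcal{H},n}\ket{\psi}^{\otimes n}\right).
\end{equation}
The right-hand side is a geometric mean of positive numbers, i.e.\ of commuting positive operators on $\complexes$, so by item~1 of \cref{rem:gmeanproperties} it equals $\prod_{i=1}^r\big(\bra{\psi}^{\otimes n}A^{(i)\frac{1-\alpha}{\alpha}}_{\mathcal{H},n}\ket{\psi}^{\otimes n}\big)^{\theta(i)}$.

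Next I would take logarithms, multiply by $\tfrac{1}{n}\tfrac{\alpha}{1-\alpha}$ (which is positive since $\alpha\in(0,1)$, so the inequality is preserved), and let $n\to\infty$. By \cref{thm:upper} each of the limits exists, so we obtain
\begin{equation}
E^{\alpha,A}(\psi)\le\sum_{i=1}^r\theta(i)\,E^{\alpha,A^{(i)}}(\psi).
\end{equation}
Finally, multiplying by $(1-\alpha)>0$ and exponentiating base $2$ yields
\begin{equation}
F^{\alpha,A}(\psi)=2^{(1-\alpha)E^{\alpha,A}(\psi)}\le\prod_{i=1}^r\Big(2^{(1-\alpha)E^{\alpha,A^{(i)}}(\psi)}\Big)^{\theta(i)}=\prod_{i=1}^r F^{\alpha,A^{(i)}}(\psi)^{\theta(i)},
\end{equation}
which is exactly \eqref{eq:halfsandwichedgmeanbound}, the last equality being again that $\geometricmean$ reduces to the weighted product on commuting (here scalar) arguments.

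There is essentially no serious obstacle here: the only thing to be careful about is that the weight $\theta$ appearing in the product on the right of \eqref{eq:halfsandwichedgmeanbound} is precisely the one associated to $\geometricmean$ by item~1 of \cref{rem:gmeanproperties} (this is part of the hypothesis), and that the CP-map inequality used in the first step is the special case of the completely positive map $X\mapsto\bra{\psi}^{\otimes n}X\ket{\psi}^{\otimes n}$ in item~4 of \cref{rem:gmeanproperties}. Everything else is just monotonicity of $\log$ and the positivity of the prefactors $\tfrac{\alpha}{1-\alpha}$ and $1-\alpha$.
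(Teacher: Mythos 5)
Your proof is correct and follows essentially the same route as the paper's: apply the pointwise geometric-mean inequality of \cref{rem:gmeanproperties} to $\bra{\psi}^{\otimes n}\cdot\ket{\psi}^{\otimes n}$, reduce the scalar geometric mean to the $\theta$-weighted product, then take logarithms and the $n\to\infty$ limit. Your extra remarks on the $\psi=0$ case and the positivity of the prefactors are fine but not needed beyond what the paper already does.
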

\begin{proof}
Let $E^{\alpha,A}, E^{\alpha,A^{(1)}}$, etc. be the corresponding logarithmic upper functionals. Then for a vector $\psi$ by \cref{rem:gmeanproperties} we have
\begin{equation}
\begin{split}
E^{\alpha,A}(\psi)
 & = \lim_{n\to\infty}\frac{1}{n}\frac{\alpha}{1-\alpha}\log\bra{\psi^{\otimes n}}\geometricmean(A^{ (1){\frac{1-\alpha}{\alpha}}}_{\mathcal{H},n},\dots,A^{ (r){\frac{1-\alpha}{\alpha}}}_{\mathcal{H},n})\ket{\psi^{\otimes n}}  \\
 & \le \lim_{n\to\infty}\frac{1}{n}\frac{\alpha}{1-\alpha}\log\geometricmean(\bra{\psi^{\otimes n}}A^{ (1){\frac{1-\alpha}{\alpha}}}_{\mathcal{H},n}\ket{\psi^{\otimes n}},\dots,\bra{\psi^{\otimes n}}A^{ (r){\frac{1-\alpha}{\alpha}}}_{\mathcal{H},n}\ket{\psi^{\otimes n}})  \\
 & = \lim_{n\to\infty}\frac{1}{n}\frac{\alpha}{1-\alpha}\sum_{i=1}^r\theta(i)\log\bra{\psi^{\otimes n}}A^{(i){\frac{1-\alpha}{\alpha}}}_{\mathcal{H},n}\ket{\psi^{\otimes n}}  \\
 & = \sum_{i=1}^r\theta(i)E^{\alpha,A^{(i)}}(\psi).
\end{split}
\end{equation}
For the upper functionals $F^{\alpha,A}$ this gives the inequality in the statement.
\end{proof}

\begin{example}\label{ex:meanofbipartite}

Now we consider those elements of the asymptotic spectrum, which take into account only one bipartition (see \eqref{eq:bipartitespectrum}). These are lower functionals as well, so we can apply again \cref{prop:functionalbasicconstructions} to get lower functionals interpolating between any bipartition:
\begin{align}\label{eq:interplowerfunc2}
F_{\alpha,\theta}(\psi) & =2^{(1-\alpha)E_{\alpha,\theta}(\psi)} 
\intertext{where}
E_{\alpha,\theta}(\psi) & =\frac{\alpha}{1-\alpha}\log\norm{\psi}^2+\sum_{b}\theta(b)\entropy_{\alpha}(\Tr_b\frac{\vectorstate{\psi}}{\norm{\psi}^2}).  
\end{align}

Note that this lower functional is exactly the right hand side of the inequality in \eqref{eq:halfsandwichedgmeanbound} for $F^{\alpha,A^{(i)}}=F^{\alpha,A^{(b)}}$ being the bipartite observables as in \eqref{eq:bipartitionfunctional}.
\end{example}

\begin{proposition}\label{prop:lowerboundforupperfunc}
 For a $k$-partite system let us consider the same logarithmic upper functional as in \cref{ex:allbipart} and $\alpha\in [1/2,1)$. The following lower bound holds for this functional:

\begin{equation}
    E^{\alpha,\geometricmean}(\psi)  \ge
    \frac{\alpha}{1-\alpha}\log\norm{\psi}^2+\sum_{b}\theta(b)\entropy_{\frac{\alpha}{2\alpha-1}}(\Tr_b\frac{\vectorstate{\psi}}{\norm{\psi}^2}),
\end{equation}
where $\theta\in\distributions([k])$ convex weights correspond to the geometric mean $\geometricmean$.
\end{proposition}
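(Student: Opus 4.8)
The plan is to feed the geometric-mean lower bound \eqref{eq:geomlowerbyvector}, applied with the test vector $\psi^{\otimes n}$, into the definition of $E^{\alpha,\geometricmean}$. By the scaling property of \cref{thm:upper} it suffices to prove the bound for a unit vector $\psi$, since $E^{\alpha,\geometricmean}(\psi)=E^{\alpha,\geometricmean}(\psi/\norm{\psi})+\frac{\alpha}{1-\alpha}\log\norm{\psi}^2$ and $\Tr_b\vectorstate{\psi/\norm{\psi}}=\Tr_b\vectorstate{\psi}/\norm{\psi}^2$; this also accounts for the norm term in the statement. Recall that here $A^{\frac{1-\alpha}{\alpha}}_{\mathcal{H},n}=\geometricmean\bigl((A^{(b)\frac{1-\alpha}{\alpha}}_{\mathcal{H},n})_{b}\bigr)$ with weights $\theta$. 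Inserting $\phi=\psi^{\otimes n}$ into \eqref{eq:geomlowerbyvector} and then sandwiching with $\bra{\psi}^{\otimes n}\cdot\ket{\psi}^{\otimes n}$ gives, for a unit vector,
\[
\bra{\psi}^{\otimes n}A^{\frac{1-\alpha}{\alpha}}_{\mathcal{H},n}\ket{\psi}^{\otimes n}\ \ge\ 2^{-\sum_{b}\theta(b)\maxrelative{\vectorstate{\psi}^{\otimes n}}{A^{(b)\frac{1-\alpha}{\alpha}}_{\mathcal{H},n}}},
\]
so the task reduces to bounding, for each bipartition $b=\{S,[k]\setminus S\}$, the max-divergence $\maxrelative{\vectorstate{\psi}^{\otimes n}}{A^{(b)\frac{1-\alpha}{\alpha}}_{\mathcal{H},n}}$.

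The second step is to evaluate that max-divergence explicitly. Since $A^{(b)}_{\mathcal{H},n}\ge I$ on $\symmetricpower[n](\mathcal{H})$ it is invertible there, and because $\vectorstate{\psi}^{\otimes n}$ is rank one one has $\maxrelative{\vectorstate{\psi}^{\otimes n}}{A^{(b)\frac{1-\alpha}{\alpha}}_{\mathcal{H},n}}=\log\bra{\psi}^{\otimes n}\bigl(A^{(b)\frac{1-\alpha}{\alpha}}_{\mathcal{H},n}\bigr)^{-1}\ket{\psi}^{\otimes n}$. By the Schur--Weyl identities recalled in \cref{sec:SchurWeyl}, on $\symmetricpower[n](\mathcal{H})$ the operators $Q^{(b)}_\lambda:=\bigl(P^{\mathcal{H}_S}_\lambda\otimes P^{\mathcal{H}_{[k]\setminus S}}_\lambda\bigr)\big|_{\symmetricpower[n](\mathcal{H})}$ are mutually orthogonal projections summing to $I$ and agree with $\bigl(P^{\mathcal{H}_S}_\lambda\otimes I\bigr)\big|_{\symmetricpower[n](\mathcal{H})}$, so $A^{(b)\frac{1-\alpha}{\alpha}}_{\mathcal{H},n}=\sum_{\lambda}2^{n\frac{1-\alpha}{\alpha}\entropy(\lambda/n)}Q^{(b)}_\lambda$ and its inverse is $\sum_{\lambda}2^{-n\frac{1-\alpha}{\alpha}\entropy(\lambda/n)}Q^{(b)}_\lambda$. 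Using that $\psi^{\otimes n}$ is symmetric and that the partial trace of a tensor power is the tensor power of the partial traces, $\bra{\psi}^{\otimes n}Q^{(b)}_\lambda\ket{\psi}^{\otimes n}=\Tr\bigl(\rho_b^{\otimes n}P^{\mathcal{H}_S}_\lambda\bigr)$ with $\rho_b=\Tr_{[k]\setminus S}\vectorstate{\psi}$, hence
\[
\bra{\psi}^{\otimes n}\bigl(A^{(b)\frac{1-\alpha}{\alpha}}_{\mathcal{H},n}\bigr)^{-1}\ket{\psi}^{\otimes n}=\sum_{\lambda\vdash n}2^{-n\frac{1-\alpha}{\alpha}\entropy(\lambda/n)}\Tr\bigl(\rho_b^{\otimes n}P^{\mathcal{H}_S}_\lambda\bigr).
\]
Applying the upper estimate of \eqref{eq:quantumtypeestimate} to $\Tr\bigl(\rho_b^{\otimes n}P^{\mathcal{H}_S}_\lambda\bigr)$ and bounding the number of partitions by a polynomial in $n$, this is at most $\mathrm{poly}(n)\cdot 2^{-n\min_{Q}\left(\frac{1-\alpha}{\alpha}\entropy(Q)+\relativeentropy{Q}{r^{(b)}}\right)}$, where $r^{(b)}$ is the spectrum of $\rho_b$ and the minimum is over all probability distributions.

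The third step is to take the limit and identify the Rényi order. Combining the displays,
\[
\frac{1}{n}\frac{\alpha}{1-\alpha}\log\bra{\psi}^{\otimes n}A^{\frac{1-\alpha}{\alpha}}_{\mathcal{H},n}\ket{\psi}^{\otimes n}\ \ge\ -\frac{\alpha}{1-\alpha}\frac{\log\mathrm{poly}(n)}{n}+\frac{\alpha}{1-\alpha}\sum_{b}\theta(b)\min_{\lambda\vdash n}\Bigl(\tfrac{1-\alpha}{\alpha}\entropy(\lambda/n)+\relativeentropy{\lambda/n}{r^{(b)}}\Bigr),
\]
and since the $n$-types become dense in the simplex and $Q\mapsto\frac{1-\alpha}{\alpha}\entropy(Q)+\relativeentropy{Q}{r^{(b)}}$ is continuous on the support of $r^{(b)}$ (where its minimum is attained), the minimum over types converges as $n\to\infty$ to the minimum over all distributions. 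Therefore
\[
E^{\alpha,\geometricmean}(\psi)\ \ge\ \frac{\alpha}{1-\alpha}\sum_{b}\theta(b)\min_{Q}\Bigl(\tfrac{1-\alpha}{\alpha}\entropy(Q)+\relativeentropy{Q}{r^{(b)}}\Bigr).
\]
Now invoke the variational formula \eqref{eq:variationalRenyientropy} with order $\gamma$ chosen so that $\frac{\gamma-1}{\gamma}=\frac{1-\alpha}{\alpha}$, i.e. $\gamma=\frac{\alpha}{2\alpha-1}$. This is precisely where the hypothesis $\alpha\in[1/2,1)$ is used: it guarantees $\gamma\in(1,\infty]$ (with $\gamma=\infty$ at $\alpha=1/2$), so that \eqref{eq:variationalRenyientropy} applies and the coefficient of $\entropy(Q)$ is nonnegative. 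For this $\gamma$ the formula reads $\min_{Q}\bigl(\frac{1-\alpha}{\alpha}\entropy(Q)+\relativeentropy{Q}{r^{(b)}}\bigr)=\frac{1-\alpha}{\alpha}\entropy_{\gamma}(r^{(b)})$, the prefactor $\frac{\alpha}{1-\alpha}$ cancels, and we obtain $E^{\alpha,\geometricmean}(\psi)\ge\sum_{b}\theta(b)\entropy_{\frac{\alpha}{2\alpha-1}}\bigl(\Tr_b\vectorstate{\psi}\bigr)$ for unit $\psi$; the general case follows by the scaling identity of the first paragraph.

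The routine points — the $\mathrm{poly}(n)$ bookkeeping, the density-of-types limit, and the finiteness and continuity of the relative entropy on the support of $r^{(b)}$ — are standard. The substantive content is the evaluation of the (regularized) max-divergence of $\psi^{\otimes n}$ against each bipartite observable and the observation, via \eqref{eq:variationalRenyientropy}, that its exponent is governed by the Rényi entropy at the conjugate order $\frac{\alpha}{2\alpha-1}$. I expect the only real subtlety to be sign-tracking and making sure $\gamma=\frac{\alpha}{2\alpha-1}$ lands in $(1,\infty]$ (hence the restriction $\alpha\in[1/2,1)$), together with the boundary interpretation $\entropy_{\infty}$ at $\alpha=1/2$ as the $\gamma\to\infty$ case of the same formula.
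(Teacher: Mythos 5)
Your proof is correct and follows essentially the same route as the paper: apply the geometric-mean lower bound \eqref{eq:geomlowerbyvector} with the test vector $\psi^{\otimes n}$, rewrite each max-divergence as $\log\bra{\psi}^{\otimes n}\bigl(A^{(b)\frac{1-\alpha}{\alpha}}_{\mathcal{H},n}\bigr)^{-1}\ket{\psi}^{\otimes n}$, and identify the limit with the R\'enyi entropy at the conjugate order $\frac{\alpha}{2\alpha-1}$. The only cosmetic difference is that the paper invokes \cref{lem:bipartiteasymptotic} at the order $\alpha'$ with $\frac{1}{\alpha}+\frac{1}{\alpha'}=2$, whereas you re-derive that asymptotic evaluation directly from \eqref{eq:quantumtypeestimate} and \eqref{eq:variationalRenyientropy}, which is the same underlying argument.
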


\begin{proof}

First let $\alpha\in [1/2,1)$. By \eqref{eq:geomlowerbyvector} and \cref{lem:bipartiteasymptotic} for a unit vector $\psi$ we get
\begin{equation}
\begin{split}
E^{\alpha,\geometricmean}(\psi)
 & = \lim_{n\to\infty}\frac{1}{n}\frac{\alpha}{1-\alpha}\log\bra{\psi^{\otimes n}}\geometricmean(A^{ (1){\frac{1-\alpha}{\alpha}}}_{\mathcal{H},n},\dots,A^{ (r){\frac{1-\alpha}{\alpha}}}_{\mathcal{H},n})\ket{\psi^{\otimes n}}  \\
 & \ge \lim_{n\to\infty}\frac{1}{n}\frac{\alpha}{1-\alpha}\log\bra{\psi^{\otimes n}}\vectorstate{\psi}^{\otimes n} 
 \ket{\psi^{\otimes n}} 
 2^{-\sum_{i=1}^r\theta(i)\maxrelative{\vectorstate{\psi}^{\otimes n}}{A^{ (i){\frac{1-\alpha}{\alpha}}}_{\mathcal{H},n}}}\\
 &= \lim_{n\to\infty}\frac{1}{n}\frac{-\alpha}{1-\alpha}
 \sum_{i=1}^r\theta(i)\maxrelative{\vectorstate{\psi}^{\otimes n}}{A^{ (i){\frac{1-\alpha}{\alpha}}}_{\mathcal{H},n}}\\
 &= \lim_{n\to\infty}\frac{1}{n}\frac{-\alpha}{1-\alpha}
 \sum_{i=1}^r\theta(i) \log
 \bra{\psi}^{\otimes n}(A^{ (i){\frac{1-\alpha}{\alpha}}}_{\mathcal{H},n})^{-1}\ket{\psi}^{\otimes n}\\
 &= \sum_{i=1}^r\theta(i)
 \entropy_{\alpha'}(\Tr_i\vectorstate{\psi})\\
 &= \sum_{i=1}^r\theta(i)
 \entropy_{\frac{\alpha}{2\alpha-1}}(\Tr_i\vectorstate{\psi})
\end{split}
\end{equation}

In the last part we considered $\alpha'$ with $\frac{1}{\alpha}+\frac{1}{\alpha'}=2$. Then we get that $\alpha'=\frac{\alpha}{2\alpha-1}$ and vica versa, and $\frac{-\alpha}{1-\alpha}=\frac{\alpha'}{1-\alpha'}$. Then \cref{lem:bipartiteasymptotic} can be applied to $\alpha'$.
For an arbitrary vector we get the inequality by the scaling property of the upper functional.
\end{proof}

\begin{proposition}\label{prop:lowerfunctionallowerbound}

For every $\alpha\in(0,1)$, and a fixed $\alpha'>0$ the functional $F_{\alpha,\theta}$ defined as
\begin{align}
F_{\alpha,\theta}(\psi) & = \begin{cases}
2^{(1-\alpha)\lim_{n\to\infty}\frac{1}{n}E_{\alpha,\theta}(\psi^{\otimes n})} & \text{if $\psi\neq 0$}  \\
0 & \text{if $\psi=0$}
\end{cases}  \\
E_{\alpha,\theta}(\psi) & = \sup_{\varphi\le\psi}\left[H_{\alpha',\theta}(\varphi/\norm{\varphi})+\frac{\alpha}{1-\alpha}\log\norm{\varphi}^2\right], \\
H_{\alpha',\theta}(\psi) &= \sum_{i=1}^r\theta(i)\entropy_{\alpha'}(\Tr_i\vectorstate{\psi})
\end{align}

is a lower functional of order $\alpha$. Moreover if $\alpha\in [1/2,1)$ and $\alpha'$ is such that $\frac{1}{\alpha}+\frac{1}{\alpha'}=2$ then $E_{\alpha,\theta}(\psi) \le E^{\alpha,\geometricmean}(\psi)$  holds for the upper functional in \cref{ex:allbipart}.
\end{proposition}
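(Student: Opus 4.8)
The plan is to set $\tilde{F}(\psi):=2^{(1-\alpha)E_{\alpha,\theta}(\psi)}$ for $\psi\neq 0$ and $\tilde{F}(0):=0$, so that $F_{\alpha,\theta}$ is exactly the regularization $\psi\mapsto\lim_{n\to\infty}\tilde{F}(\psi^{\otimes n})^{1/n}$, and to verify the four defining properties of a lower functional of order $\alpha$. First I would record the easy properties at the level of $\tilde{F}$. Rewriting the exponent, the quantity maximized in $E_{\alpha,\theta}$ is $H_{\alpha',\theta}(\varphi/\norm{\varphi})+\frac{\alpha}{1-\alpha}\log\norm{\varphi}^2$, and since the R\'enyi entropy $\entropy_{\alpha'}$ is additive under tensor products --- hence so is $H_{\alpha',\theta}$ on normalized vectors, because $\Tr_S\vectorstate{\eta_1\otimes\eta_2}=\Tr_S\vectorstate{\eta_1}\otimes\Tr_S\vectorstate{\eta_2}$ for each bipartition $\{S,[k]\setminus S\}$ --- while $\log\norm{\cdot}^2$ is additive and $\psi_1\ge\varphi_1$, $\psi_2\ge\varphi_2$ give $\psi_1\otimes\psi_2\ge\varphi_1\otimes\varphi_2$ by compatibility of the preorder with $\otimes$, the functional $\tilde{F}$ is supermultiplicative. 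Monotonicity under the preorder is immediate because the supremum defining $E_{\alpha,\theta}(\psi)$ ranges over a set containing all restrictions of any $\varphi\le\psi$; the scaling $\tilde{F}(\sqrt{p}\psi)=p^{\alpha}\tilde{F}(\psi)$ is the bookkeeping of the $\frac{\alpha}{1-\alpha}\log\norm{\varphi}^2$ term; and $\tilde{F}(1)=1$ since the unit tensor only restricts to norm-nonincreasing product vectors, on which $H_{\alpha',\theta}$ vanishes. The sequence $n\mapsto E_{\alpha,\theta}(\psi^{\otimes n})$ is superadditive and bounded above linearly (the marginals of a restriction of $\psi^{\otimes n}$ act on spaces of dimension at most $(\dim\mathcal{H})^n$, and restrictions do not increase the norm), so Fekete's lemma gives the limit, and $F_{\alpha,\theta}$ inherits $F_{\alpha,\theta}(1)=1$, the order-$\alpha$ scaling, supermultiplicativity, and monotonicity.

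The technical heart is superadditivity under direct sums, and the key lemma is the one-shot estimate $\tilde{F}\bigl(\bigoplus^{r}a\bigr)\ge r\,\tilde{F}(a)$ for every tensor $a$ and $r\in\naturals$ (with $\bigoplus^{r}a$ the $r$-fold direct sum of $a$). To prove it I would fix a near-optimal restriction $\varphi_0\le a$ for $\tilde{F}(a)$ and test the supremum defining $\tilde{F}(\bigoplus^{r}a)$ at $\chi:=\bigoplus^{r}\varphi_0$, a legitimate restriction obtained by applying the local contractions witnessing $\varphi_0\le a$ blockwise. Then $\norm{\chi}^2=r\norm{\varphi_0}^2$, and for each bipartition $b=\{S,[k]\setminus S\}$ the normalized marginal $\Tr_S\vectorstate{\chi/\norm{\chi}}$ is $\frac{1}{r}$ times a block-diagonal sum of $r$ copies of $\Tr_S\vectorstate{\varphi_0/\norm{\varphi_0}}$, whence $\entropy_{\alpha'}(\Tr_S\vectorstate{\chi/\norm{\chi}})=\log r+\entropy_{\alpha'}(\Tr_S\vectorstate{\varphi_0/\norm{\varphi_0}})$ --- an identity that holds verbatim for every $\alpha'>0$, regardless of the sign of $1-\alpha'$. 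Substituting into the exponent, the powers of $r$ coming from the entropy term and from $\norm{\chi}^{2\alpha}$ collapse to exactly one factor of $r$, precisely because the norm term carries the coefficient $\frac{\alpha}{1-\alpha}$; this is the delicate point, and it is what the particular form of $E_{\alpha,\theta}$ is tuned for. Granting the lemma, $(\psi\oplus\varphi)^{\otimes N}$ restricts, via the local projection onto the sector in which $m$ of the $N$ tensor factors lie in the first summand, to $\bigoplus^{\binom{N}{m}}\psi^{\otimes m}\otimes\varphi^{\otimes N-m}$; combining with the lemma and supermultiplicativity gives $\tilde{F}\bigl((\psi\oplus\varphi)^{\otimes N}\bigr)\ge\binom{N}{m}\tilde{F}(\psi^{\otimes m})\tilde{F}(\varphi^{\otimes N-m})$, and taking $N$-th roots with $m=\lfloor qN\rfloor$, $N\to\infty$, yields $F_{\alpha,\theta}(\psi\oplus\varphi)\ge 2^{h(q)}F_{\alpha,\theta}(\psi)^{q}F_{\alpha,\theta}(\varphi)^{1-q}$ for every $q\in(0,1)$. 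Maximizing the right-hand side over $q$ using \cite[Eq. (2.13)]{strassen1991degeneration} gives $F_{\alpha,\theta}(\psi\oplus\varphi)\ge F_{\alpha,\theta}(\psi)+F_{\alpha,\theta}(\varphi)$.

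The local-projection inequality then follows with no further work: for a local projection $\Pi\in\boundeds(\mathcal{H}_j)$ the isometry $v\mapsto(\Pi v,(I-\Pi)v)$ on the $j$-th factor (and the identity on the others) exhibits $\Pi\psi\oplus(I-\Pi)\psi$ as a restriction of $\psi$, so monotonicity together with the direct-sum superadditivity just established gives $F_{\alpha,\theta}(\psi)\ge F_{\alpha,\theta}(\Pi\psi)+F_{\alpha,\theta}((I-\Pi)\psi)$, and since $1/\alpha\ge 1$ makes $x\mapsto x^{1/\alpha}$ superadditive, $F_{\alpha,\theta}(\psi)^{1/\alpha}\ge F_{\alpha,\theta}(\Pi\psi)^{1/\alpha}+F_{\alpha,\theta}((I-\Pi)\psi)^{1/\alpha}$. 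This finishes the proof that $F_{\alpha,\theta}$ is a lower functional of order $\alpha$; for $\alpha'=1$ the construction specializes to the lower functional of \cite{vrana2020family}, and the boundary value $\alpha'=\infty$ (i.e.\ $\alpha=1/2$) is handled by the same computation with $\entropy_{\alpha'}$ replaced by the min-entropy.

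For the ``moreover'' part, fix $\alpha\in[1/2,1)$ and $\alpha'=\frac{\alpha}{2\alpha-1}$, equivalently $\frac1\alpha+\frac1{\alpha'}=2$, so that $\entropy_{\alpha/(2\alpha-1)}=\entropy_{\alpha'}$ and the right-hand side of the bound in \cref{prop:lowerboundforupperfunc} is exactly $\frac{\alpha}{1-\alpha}\log\norm{\varphi}^2+H_{\alpha',\theta}(\varphi/\norm{\varphi})$. Since $F^{\alpha,\geometricmean}$ is an upper functional of order $\alpha$ (\cref{thm:upper} applied to the observables of \cref{ex:allbipart}) it is monotone under the preorder, so $E^{\alpha,\geometricmean}(\psi)\ge E^{\alpha,\geometricmean}(\varphi)$ for every restriction $\varphi\le\psi$, and applying \cref{prop:lowerboundforupperfunc} to the vector $\varphi$ gives $E^{\alpha,\geometricmean}(\varphi)\ge\frac{\alpha}{1-\alpha}\log\norm{\varphi}^2+H_{\alpha',\theta}(\varphi/\norm{\varphi})$. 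Taking the supremum over $\varphi\le\psi$ yields $E^{\alpha,\geometricmean}(\psi)\ge E_{\alpha,\theta}(\psi)$, as claimed; regularizing and invoking submultiplicativity of $F^{\alpha,\geometricmean}$ moreover gives $F_{\alpha,\theta}\le F^{\alpha,\geometricmean}$, so this lower functional is a genuine lower bound on the upper functional of \cref{ex:allbipart}. Overall, the only step requiring real work is the direct-sum additivity of Part~1 --- and within it, the exponent bookkeeping in the one-shot estimate --- whereas Part~2 is essentially a one-line consequence of \cref{prop:lowerboundforupperfunc} and monotonicity of the upper functional.
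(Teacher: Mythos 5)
Your proof is correct and follows essentially the same route as the paper's: supermultiplicativity of $E_{\alpha,\theta}$ via tensoring near-optimal restrictions, Fekete for the regularization, superadditivity via the restriction $(\psi\oplus\varphi)^{\otimes N}\ge\unittensor{\binom{N}{m}}\otimes\psi^{\otimes m}\otimes\varphi^{\otimes(N-m)}$ (your one-shot lemma $\tilde{F}(\bigoplus^{r}a)\ge r\tilde{F}(a)$ is the same computation as the paper's $F_{\alpha,\theta}(\unittensor{r})\ge r$ combined with supermultiplicativity, since $\bigoplus^{r}a\cong\unittensor{r}\otimes a$), and the ``moreover'' part by applying \cref{prop:lowerboundforupperfunc} to each restriction $\varphi\le\psi$ and using monotonicity of $E^{\alpha,\geometricmean}$. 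A minor plus: you explicitly verify the local-projection property via $\psi\ge\Pi\psi\oplus(I-\Pi)\psi$ and superadditivity of $x\mapsto x^{1/\alpha}$, a step the paper leaves implicit by deferring to the analogous propositions of \cite{vrana2020family}.
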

\begin{proof}

The first part of the proof is similar to \cite[Propositions 4.5 and 4.7]{vrana2020family}. Let $\psi_1\in\mathcal{H}$ and $\psi_2\in\mathcal{K}$. If $\varphi_1\le\psi_1$ and $\varphi_2\le\psi_2$, then $\varphi_1\otimes\varphi_2\le\psi_1\otimes\psi_2$, therefore by the additivity of entropy, i.e. $H_{\alpha',\theta}(\psi\otimes\varphi)=H_{\alpha',\theta}(\psi)+H_{\alpha',\theta}(\varphi)$ we have
\begin{equation}
\begin{split}
E_{\alpha,\theta}(\psi_1\otimes\psi_2)
 & \ge H_{\alpha',\theta}\left(\frac{\varphi_1\otimes\varphi_2}{\norm{\varphi_1\otimes\varphi_2}}\right)+\frac{\alpha}{1-\alpha}\log\norm{\varphi_1\otimes\varphi_2}^2  \\
 & = H_{\alpha',\theta}(\varphi_1/\norm{\varphi_1})+\frac{\alpha}{1-\alpha}\log\norm{\varphi_1}^2+
 H_{\alpha',\theta}(\varphi_2/\norm{\varphi_2})+\frac{\alpha}{1-\alpha}\log\norm{\varphi_2}^2.
\end{split}
\end{equation}
Taking the supremum over $\varphi_1$ and $\varphi_2$ we obtain $E_{\alpha,\theta}(\psi_1\otimes\psi_2)\ge E_{\alpha,\theta}(\psi_1)+E_{\alpha,\theta}(\psi_2)$. We also have the linear upper bound $E_{\alpha,\theta}(\psi^{\otimes n})\le n\left[\log\dim\mathcal{H}+\frac{\alpha}{1-\alpha}\log\norm{\psi}^2\right]$, therefore the limit $\lim_{n\to\infty}\frac{1}{n}E_{\alpha,\theta}(\psi^{\otimes n})$ exists and is finite. It also inherits the scaling, superadditivity and monotonicity of $E_{\alpha,\theta}$ (which is monotone by virtue of being defined as a supremum over the lower set generated by its argument). Therefore $F_{\alpha,\theta}$ is monotone and supermultiplicative under the tensor product. Now we use that $H_{\alpha',\theta}(\frac{1}{\sqrt{r}}\unittensor{r})=\log r$ and $H_{\alpha',\theta}(\psi)\le \log\dim\mathcal{H}$ for every $\psi\in\mathcal{H}$:
\begin{equation}
\begin{split}
E_{\alpha,\theta}(\unittensor{r})
 & \ge H_{\alpha',\theta}\left(\frac{\unittensor{r}}{\sqrt{r}}\right)+\frac{\alpha}{1-\alpha}\log\norm{\unittensor{r}}^2  \\
 & = \log r+\frac{\alpha}{1-\alpha}\log r  \\
 & = \frac{1}{1-\alpha}\log r
\end{split}
\end{equation}
and
\begin{equation}
E_{\alpha,\theta}(\unittensor{1})\le \log\dim(\complexes\otimes\dots\otimes\complexes)+\frac{\alpha}{1-\alpha}\log\norm{\unittensor{1}}^2=0,
\end{equation}
therefore $F_{\alpha,\theta}(\unittensor{r})\ge r$ and $F_{\alpha,\theta}(\unittensor{1})=1$. By definition, $F_{\alpha,\theta}(0)=0$.

To prove superadditivity, note that $(\psi\oplus\varphi)^n\ge\unittensor{\binom{n}{m}}\otimes\psi^{\otimes m}\otimes\varphi^{\otimes(n-m)}$ for all $0\le m\le n$, which for $n>0$ implies
\begin{equation}
\begin{split}
\log F_{\alpha,\theta}(\psi\oplus\varphi)
 & = \frac{1}{n}\log F_{\alpha,\theta}((\psi\oplus\varphi)^{\otimes n})  \\
 & \ge \frac{1}{n}\log F_{\alpha,\theta}\left(\unittensor{\binom{n}{m}}\otimes\psi^{\otimes m}\otimes\varphi^{\otimes(n-m)}\right)  \\
 & \ge \frac{1}{n}\log\binom{n}{m}+\frac{m}{n}\log F_{\alpha,\theta}(\psi)+\frac{n-m}{n}\log F_{\alpha,\theta}(\varphi).
\end{split}
\end{equation}
Choosing $m=\lfloor qn\rfloor$ for some $q\in[0,1]$ and letting $n\to\infty$ we obtain the inequality
\begin{equation}
\log F_{\alpha,\theta}(\psi\oplus\varphi)\ge q\log F_{\alpha,\theta}(\psi)+(1-q)\log F_{\alpha,\theta}(\varphi)+h(q),
\end{equation}
and the maximum over $q$ gives $F_{\alpha,\theta}(\psi\oplus\varphi)\ge F_{\alpha,\theta}(\psi)+F_{\alpha,\theta}(\varphi)$.

Now let $\alpha\in [1/2,1)$ and $\alpha'$ be such that $\frac{1}{\alpha}+\frac{1}{\alpha'}=2$, so
we can use \cref{prop:lowerboundforupperfunc}. If $\psi\ge\varphi$ then $F^{\alpha,A}(\psi)\ge F^{\alpha,A}(\varphi)$, therefore
\begin{equation}
\begin{split}
E_{\alpha,\theta}(\psi)
 & = \sup_{\varphi\le\psi}\left[H_{\alpha',\theta}(\varphi/\norm{\varphi})+\frac{\alpha}{1-\alpha}\log\norm{\varphi}^2\right]  \\
 & \le \sup_{\varphi\le\psi}\left[E^{\alpha,A}(\varphi/\norm{\varphi})+\frac{\alpha}{1-\alpha}\log\norm{\varphi}^2\right]  \\
 & = \sup_{\varphi\le\psi}\left[E^{\alpha,A}(\varphi)-\frac{\alpha}{1-\alpha}\log\norm{\varphi}^2+\frac{\alpha}{1-\alpha}\log\norm{\varphi}^2\right]  \\
 & = E^{\alpha,\geometricmean}(\psi).
\end{split}
\end{equation}
\end{proof}

\begin{corollary}\label{cor:specialsubsemiringmonotone}
Let $\alpha\in [1/2,1)$ and $\psi$ be a vector such that $\Tr_S\vectorstate{\psi}$ is proportional to a projection for every subset $S\subseteq[k]$, then $F^{\alpha,\geometricmean}$ in \cref{ex:allbipart} reduces to
\begin{equation}
    F^{\alpha,\geometricmean}(\psi)=\norm{\psi}^{2\alpha}\prod_{b\in B_k}\left(\rank\Tr_S\vectorstate{\psi}\right)^{(1-\alpha)\theta(b)},
\end{equation}
where $\theta\in\distributions(B_k)$ are convex weights corresponding to $\geometricmean$. $F^{\alpha,\geometricmean}$ is multiplicative and additive on the subsemiring generated by such states.
\end{corollary}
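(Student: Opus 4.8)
The plan is to squeeze $F^{\alpha,\geometricmean}(\psi)$ between matching upper and lower bounds, and then to obtain multiplicativity and additivity via the companion lower functional. Write $\rho_b:=\rank\Tr_S\vectorstate{\psi}$ for a bipartition $b=\{S,[k]\setminus S\}\in B_k$.

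For the upper bound I would apply \cref{prop:halfsandwichedgmeanbound} to the families $A^{(b)}$ of \cref{ex:allbipart}, giving $F^{\alpha,\geometricmean}(\psi)\le\prod_{b\in B_k}F^{\alpha,A^{(b)}}(\psi)^{\theta(b)}$, and then use \eqref{eq:bipartitionfunctional}: each factor is $\Tr(\Tr_S\vectorstate{\psi})^\alpha$. Since $\Tr_S\vectorstate{\psi}$ is positive with trace $\norm{\psi}^2$ and is by hypothesis proportional to a projection, it equals $\tfrac{\norm{\psi}^2}{\rho_b}P$ with $P$ a rank-$\rho_b$ projection, so $\Tr(\Tr_S\vectorstate{\psi})^\alpha=\norm{\psi}^{2\alpha}\rho_b^{1-\alpha}$; using $\sum_b\theta(b)=1$ this yields the $\le$ direction of the claimed formula. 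For the lower bound I would invoke \cref{prop:lowerboundforupperfunc} (applicable since $\alpha\in[1/2,1)$),
\[
E^{\alpha,\geometricmean}(\psi)\ge\frac{\alpha}{1-\alpha}\log\norm{\psi}^2+\sum_{b\in B_k}\theta(b)\,\entropy_{\frac{\alpha}{2\alpha-1}}\!\left(\Tr_b\frac{\vectorstate{\psi}}{\norm{\psi}^2}\right),
\]
and then use the key observation that for a density operator that is flat on its support (of dimension $\rho_b$), every Rényi entropy — of any order in $(0,1)\cup(1,\infty]$, in particular of order $\tfrac{\alpha}{2\alpha-1}$, which is $\infty$ when $\alpha=1/2$ — equals $\log\rho_b$. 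Hence $E^{\alpha,\geometricmean}(\psi)\ge\frac{\alpha}{1-\alpha}\log\norm{\psi}^2+\sum_b\theta(b)\log\rho_b$, and since $F^{\alpha,\geometricmean}=2^{(1-\alpha)E^{\alpha,\geometricmean}}$ this is exactly the $\ge$ direction. Combining the two bounds proves the formula, equivalently $E^{\alpha,\geometricmean}(\psi)=\frac{\alpha}{1-\alpha}\log\norm{\psi}^2+\sum_b\theta(b)\log\rho_b$.

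For multiplicativity and additivity I would choose $\alpha'$ with $\tfrac1\alpha+\tfrac1{\alpha'}=2$ and consider the lower functional $F_{\alpha,\theta}$ of \cref{prop:lowerfunctionallowerbound}, which satisfies $F_{\alpha,\theta}\le F^{\alpha,\geometricmean}$. The key claim is that $F_{\alpha,\theta}(\psi)=F^{\alpha,\geometricmean}(\psi)$ for every $\psi$ in the class: taking $\varphi=\psi^{\otimes n}$ in the supremum defining $E_{\alpha,\theta}(\psi^{\otimes n})$, the additivity of Rényi entropy together with the flat-state identity $\entropy_{\alpha'}(\Tr_b\vectorstate{\psi}/\norm{\psi}^2)=\log\rho_b$ gives $E_{\alpha,\theta}(\psi^{\otimes n})\ge n\big(\sum_b\theta(b)\log\rho_b+\tfrac{\alpha}{1-\alpha}\log\norm{\psi}^2\big)=nE^{\alpha,\geometricmean}(\psi)$, while $E_{\alpha,\theta}(\psi^{\otimes n})\le E^{\alpha,\geometricmean}(\psi^{\otimes n})\le nE^{\alpha,\geometricmean}(\psi)$ by the last assertion of \cref{prop:lowerfunctionallowerbound} and subadditivity of $E^{\alpha,\geometricmean}$; so $\lim_{n\to\infty}\tfrac1nE_{\alpha,\theta}(\psi^{\otimes n})=E^{\alpha,\geometricmean}(\psi)$, i.e. $F_{\alpha,\theta}=F^{\alpha,\geometricmean}$ on the class. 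Since $F_{\alpha,\theta}$ is a lower functional and $F^{\alpha,\geometricmean}$ an upper functional with $F_{\alpha,\theta}\le F^{\alpha,\geometricmean}$, the set $S_0=\setbuild{x}{F_{\alpha,\theta}(x)=F^{\alpha,\geometricmean}(x)}$ is a subsemiring on which both are $S_0$-spectral, hence in particular additive and multiplicative (\cref{rem:lowerequalsupper}); as $S_0$ contains every such $\psi$, it contains the subsemiring they generate, which gives the final assertion.

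The main obstacle is this last step: the special states are \emph{not} closed under the (tensor) direct sum — a direct sum of two projection-proportional positive operators is proportional to a projection only when the proportionality constants agree — so additivity cannot be checked directly on the class and must be routed through the lower functional $F_{\alpha,\theta}$ and the subsemiring-coincidence principle of \cref{rem:lowerequalsupper}. Everything else reduces to the order-independence of the Rényi entropy on flat states, combined with the two bounds \cref{prop:halfsandwichedgmeanbound} and \cref{prop:lowerboundforupperfunc} already established.
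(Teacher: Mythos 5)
Your proposal is correct and follows essentially the same route as the paper: matching the upper bound of \cref{prop:halfsandwichedgmeanbound} against the lower bound coming from \cref{prop:lowerboundforupperfunc}/\cref{prop:lowerfunctionallowerbound} via the order-independence of R\'enyi entropies on flat states, and then obtaining additivity and multiplicativity on the generated subsemiring through \cref{rem:lowerequalsupper}. Your write-up is simply a more detailed version of the paper's terse argument (including the correct observation that the class itself is not closed under direct sums, which is exactly why the detour through the coincidence subsemiring is needed).
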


\begin{proof}
Let $\alpha\in[1/2, 1)$. If $\psi$ is a vector such that $\Tr_S\vectorstate{\psi}$ is proportional to a projection for every subset $S\subseteq[k]$, then $\entropy_{\alpha',\theta}(\Tr_i\vectorstate{\psi})=\entropy_{\alpha,\theta}(\Tr_i\vectorstate{\psi})$, and both the lower and the upper bounds in \cref{prop:halfsandwichedgmeanbound,prop:lowerfunctionallowerbound} are equal to
\begin{equation}
\sum_{b\in B_k}\theta(b)\log\rank\Tr_S\vectorstate{\psi}+\frac{\alpha}{1-\alpha}\log\norm{\psi}^2.
\end{equation}

The matching lower bound is a lower functional. Note also that by \cref{rem:lowerequalsupper}, the set of tensors on which the upper and the corresponding lower functional are equal is closed under tensor product and direct sums, therefore what we get is an element of the asymptotic spectrum of the subsemiring generated by such states. 

\end{proof}
These states include GHZ states, hypergraph tensors (products of GHZ states on a subset of subsystems \cite{vrana2017entanglement,christandl2016asymptotic}), absolutely maximally entangled states \cite{helwig2012absolute}, and stabilizer states \cite{fattal2004entanglement}.

\begin{corollary}\label{cor:alpha1monotones}
    In the $\alpha\to 1$ limit for a unit vector $\psi$ both the lower and the upper bounds in \cref{prop:halfsandwichedgmeanbound,prop:lowerfunctionallowerbound} are equal to
\begin{equation}
\sum_{b\in B_k}\theta(b)\entropy(\Tr_S\vectorstate{\psi}).
\end{equation}
\end{corollary}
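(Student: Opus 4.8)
The plan is to combine the two explicit bounds on $E^{\alpha,\geometricmean}(\psi)$ obtained above with the continuity of the Rényi entropy $\entropy_\beta(\rho)$ in its order at $\beta=1$, and then take the limit $\alpha\to 1^-$ (within the interval $[1/2,1)$ where \cref{prop:lowerboundforupperfunc} is available).

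First I would pin down the two bounds in the form that is convenient for the limit. Fix a unit vector $\psi$. From \cref{prop:halfsandwichedgmeanbound} applied with the bipartite families $A^{(i)}=A^{(b)}$ of \cref{ex:allbipart}, together with \eqref{eq:bipartitionfunctional} and the identity $\Tr(\Tr_S\vectorstate{\psi})^{\alpha}=2^{(1-\alpha)\entropy_{\alpha}(\Tr_S\vectorstate{\psi})}$, the upper bound becomes
$E^{\alpha,\geometricmean}(\psi)\le\sum_{b\in B_k}\theta(b)\entropy_{\alpha}(\Tr_S\vectorstate{\psi})$ (this is the logarithmic version of the lower functional of \cref{ex:meanofbipartite} at a unit vector). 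From \cref{prop:lowerboundforupperfunc}, with $\alpha'=\tfrac{\alpha}{2\alpha-1}$ determined by $\tfrac1\alpha+\tfrac1{\alpha'}=2$, the matching lower bound is $E^{\alpha,\geometricmean}(\psi)\ge\sum_{b\in B_k}\theta(b)\entropy_{\alpha'}(\Tr_S\vectorstate{\psi})$. Here the weights $\theta\in\distributions(B_k)$ attached to the fixed geometric mean $\geometricmean$ do not depend on $\alpha$, and each $\Tr_S\vectorstate{\psi}$ is a genuine density operator because $\psi$ is a unit vector.

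Next I would take $\alpha\to 1^-$. One checks directly that $\alpha'=\tfrac{\alpha}{2\alpha-1}\to 1$, so both Rényi orders occurring in the bounds converge to $1$. For a fixed density operator $\rho$ on a finite-dimensional Hilbert space, $\beta\mapsto\entropy_\beta(\rho)=\tfrac{1}{1-\beta}\log\Tr\rho^\beta$ extends continuously to $\beta=1$ with value the von Neumann entropy $\entropy(\rho)=-\Tr\rho\log\rho$ (apply L'Hôpital's rule to the spectral expression $\tfrac{1}{1-\beta}\log\sum_i r_i^\beta$). Since $B_k$ is finite and $\theta$ is fixed, the lower bound $\sum_{b}\theta(b)\entropy_{\alpha'}(\Tr_S\vectorstate{\psi})$ and the upper bound $\sum_{b}\theta(b)\entropy_{\alpha}(\Tr_S\vectorstate{\psi})$ both converge to $\sum_{b\in B_k}\theta(b)\entropy(\Tr_S\vectorstate{\psi})$, which is the asserted common value; as a byproduct $E^{\alpha,\geometricmean}(\psi)$ itself, being squeezed between them, tends to the same quantity.

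I do not expect a genuine obstacle here: the argument is a squeeze together with the elementary continuity of $\entropy_\beta$ at $\beta=1$. The only points that need a little care are verifying $\alpha'\to 1$ and confirming that nothing in the two bounds depends on $\alpha$ apart from the Rényi orders — in particular that the convex weight $\theta$ coming from $\geometricmean$ is held fixed as $\alpha$ varies, and that restricting attention to unit vectors makes $\Tr_S\vectorstate{\psi}$ a bona fide state so that the $\beta\to 1$ limit is exactly the von Neumann entropy.
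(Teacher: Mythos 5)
Your argument is correct and is essentially the intended one: the paper states this corollary without proof precisely because it follows by evaluating the upper bound of \cref{prop:halfsandwichedgmeanbound} (via \eqref{eq:bipartitionfunctional}, giving $\sum_b\theta(b)\entropy_\alpha$) and the lower bound with order $\alpha'=\frac{\alpha}{2\alpha-1}$ at a unit vector, and letting $\alpha\to 1$ so that both Rényi orders tend to $1$ and the finite-dimensional Rényi entropies converge to the von Neumann entropy. Your checks that $\alpha'\to 1$, that $\theta$ is fixed, and that $\Tr_S\vectorstate{\psi}$ is a state are exactly the points that need noting, so nothing is missing.
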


\begin{remark}
    It is not clear in general that two different operator geometric means which determine the same convex weights give different upper functionals. In the case when we take into account only noncrossing bipartitions, the observables commute pairwise, and the geometric mean reduces to the form in \cref{rem:gmeanproperties}. In those cases the weights are enough to parametrize this family of upper functionals.
\end{remark}

\section*{Acknowledgement}

This work was partially funded by the National Research, Development and Innovation Office of Hungary via the research grants K124152, by the \'UNKP-21-5 New National Excellence Program of the Ministry for Innovation and Technology, the J\'anos Bolyai Research Scholarship of the Hungarian Academy of Sciences, and by the Ministry of Innovation and Technology and the National Research, Development and Innovation Office within the Quantum Information National Laboratory of Hungary.

\bibliography{references}{}

\end{document}